\documentclass[12pt]{article}
\usepackage{graphicx,amsfonts,amsmath,amssymb,amsthm,mathrsfs,mathabx,url,hyperref}
\usepackage[usenames,dvipsnames]{xcolor}

\title{Another Proof of Born's Rule on Arbitrary Cauchy Surfaces}
\author{
Sascha Lill\footnote{Mathematisches Institut,
     Eberhard-Karls-Universit\"at, Auf der Morgenstelle 10, 72076
     T\"ubingen, Germany}\ \footnote{BCAM – Basque Center for Applied Mathematics, Mazarredo 14, E48009
Bilbao, Basque Country - Spain}\ \footnote{E-Mail: slill@bcamath.org, sascha.lill@uni-tuebingen.de}~\ and
Roderich Tumulka$^*$\footnote{E-mail: roderich.tumulka@uni-tuebingen.de}
} 
\date{October 14, 2021}

\addtolength{\textwidth}{2.0cm}
\addtolength{\hoffset}{-1.0cm}
\addtolength{\textheight}{2.4cm}
\addtolength{\voffset}{-1.5cm}

\usepackage[utf8]{inputenc}
\usepackage{tikz}
\usetikzlibrary{decorations.pathreplacing}

\usepackage[section]{placeins}
\usepackage{capt-of}
\usepackage{paralist}
\usepackage{booktabs}
\usepackage{hhline}
\usepackage{color}                  

\newcommand{\be}{\begin{equation}}
\newcommand{\ee}{\end{equation}}

\newcommand{\bx}{\boldsymbol{x}}
\newcommand{\by}{\boldsymbol{y}}

\newcommand{\cP}{\mathcal{P}}

\newcommand{\sB}{\mathscr{B}}
\newcommand{\sE}{\mathscr{E}}
\newcommand{\sH}{\mathscr{H}}

\newcommand{\sK}{\mathscr{K}}

\newcommand{\sP}{\mathscr{P}}

\newcommand{\NNN}{\mathbb{N}}
\newcommand{\MMM}{\mathbb{M}}
\newcommand{\RRR}{\mathbb{R}}
\newcommand{\ZZZ}{\mathbb{Z}}

\newcommand{\Gr}{{\rm Gr}}
\newcommand{\Sr}{{\rm Sr}}
\newcommand{\diag}{{\rm diag}}
\newcommand{\past}{{\rm past}}
\newcommand{\future}{{\rm future}}

\newcommand{\Prob}{\mathbb{P}}
\newcommand{\Hilbert}{\mathscr{H}}
\DeclareMathOperator{\range}{Ran}

\theoremstyle{plain}

\newtheorem{cor}{Corollary}
\newtheorem{lemma}{Lemma}
\newtheorem{prop}{Proposition}
\newtheorem{theorem}{Theorem}
\theoremstyle{definition} \newtheorem{definition}{Definition}

\newcounter{remarks}
\setcounter{remarks}{0}

\begin{document}
\maketitle
\begin{abstract}
In 2017, Lienert and Tumulka proved Born's rule on arbitrary Cauchy surfaces in Minkowski space-time assuming Born's rule and a corresponding collapse rule on horizontal surfaces relative to a fixed Lorentz frame, as well as a given unitary time evolution between any two Cauchy surfaces, satisfying that there is no interaction faster than light and no propagation faster than light. Here, we prove Born's rule on arbitrary Cauchy surfaces from a different, but equally reasonable, set of assumptions. The conclusion is that if detectors are placed along any Cauchy surface $\Sigma$, then the observed particle configuration on $\Sigma$ is a random variable with distribution density $|\Psi_\Sigma|^2$, suitably understood. The main different assumption is that the Born and collapse rules hold on any spacelike hyperplane, i.e., at any time coordinate in any Lorentz frame. Heuristically, this follows if the dynamics of the detectors is Lorentz invariant.

\medskip

\noindent Key words: detection probability; particle detector; Tomonaga-Schwinger equation; interaction locality; multi-time wave function; spacelike hypersurface.
\end{abstract}

\newpage
\tableofcontents

\section{Introduction}	
\label{sec:intro}

In its usual form, Born's rule asserts that if we measure the positions of all particles of a quantum system at time $t$, the observed configuration has probability distribution with density $|\Psi_t|^2$. One would expect that Born's rule also holds on arbitrary Cauchy surfaces\footnote{We use the definition that a {\bf Cauchy surface} \cite{Wald} is a subset of space-time intersected by every inextendible causal (i.e., timelike-or-lightlike) curve in exactly one point. Thus, a Cauchy surface can have lightlike tangent vectors but cannot contain a lightlike line segment.} $\Sigma$ in Minkowski space-time $\MMM$ in the following sense: If we place detectors along $\Sigma$, then the observed particle 
configuration has probability distribution with density $|\Psi_\Sigma|^2$, suitably understood. We call the latter statement the {\bf curved Born rule}; it contains the former statement as a special case in which $\Sigma$ is a horizontal 3-plane in the chosen Lorentz frame. We prove here the curved Born rule as a theorem; more precisely, we prove that the Born 
rule holds on arbitrary Cauchy surfaces assuming (i) that the Born rule holds on hyperplanes, i.e., on flat surfaces ({\bf flat Born rule}), (ii) that the collapse rule holds on hyperplanes ({\bf flat collapse rule}), (iii) that the unitary time evolution contains no interaction terms between spacelike separated regions ({\bf interaction locality}), and (iv) that 
wave functions do not spread faster than light ({\bf propagation locality}).
A similar theorem was proved by Lienert and Tumulka in \cite{LT:2020}. As we will discuss in more detail in Section 1.2, the central difference is that the detection process was modeled in a different way; our model of the detection process is in a way more natural and leads to a simpler proof of the theorem.

\bigskip

This paper is structured as follows. In the remainder of Section~\ref{sec:intro}, we describe our results. In Section~\ref{sec:definitions}, we 
provide technical details of the concepts used. In Section~\ref{sec:detection}, we derive the Born rule on triangular surfaces. In Section~\ref{sec:approxtriangular}, we prove our statements about approximating Cauchy surfaces with triangular surfaces. In Section~\ref{sec:proofthm1}, we provide the proof of our main theorem.

\subsection{Hypersurface Evolution}

In order to formulate the curved Born rule, we need to have a mathematical object $\Psi_\Sigma$ available that represents the quantum state on $\Sigma$. To this end, we regard as given a {\bf hypersurface evolution} (precise definition given in Section~\ref{sec:definitions} or \cite{LT:2020}) that provides a Hilbert space $\Hilbert_\Sigma$ for every Cauchy surface $\Sigma$ and a unitary isomorphism $U_\Sigma^{\Sigma'}:\Hilbert_\Sigma \to \Hilbert_{\Sigma'}$ representing the evolution between any two Cauchy surfaces, $\Psi_{\Sigma'}=U_\Sigma^{\Sigma'} \, \Psi_\Sigma$. The situation is similar in spirit to the Tomonaga--Schwinger approach 
\cite{tomonaga:1946,schwinger:1948,schweber:1961}, although Tomonaga and Schwinger used the interaction picture for identifying all $\Hilbert_\Sigma$ with each other.

We take the detected particle configuration on $\Sigma$ to be an element of the unordered configuration space of a variable number of particles,
\be
\Gamma(\Sigma):= \{q \subset \Sigma: \# q<\infty\}\,,
\ee
the set of all finite subsets of $\Sigma$. (If more than one, say $m\in\NNN$, species of particles are present, one may either, by straightforward generalization of our results, consider $\Gamma(\Sigma)^m$ as the configuration space or apply the mapping $\Gamma(\Sigma)^m \to \Gamma(\Sigma): (q_1,\ldots,q_m) \mapsto q_1 \cup \ldots \cup q_m$ that erases the species labels and still consider probability distributions on $\Gamma(\Sigma)$, as we will do here.)

It will be convenient to write the $|\Psi_\Sigma|^2$ distribution (the {\bf curved Born distribution}) in the form of the measure $\langle \Psi_\Sigma|P_\Sigma(\cdot)|\Psi_\Sigma\rangle=\|P_\Sigma(\cdot)\Psi_\Sigma\|^2$,  where $P_\Sigma$ is the appropriate projection-valued measure (PVM) on\footnote{We use the Borel $\sigma$-algebra on $\MMM$, $\Sigma$, $\Gamma(\Sigma)$ \cite{LT:2020} etc.; when speaking of subsets, we always mean Borel measurable subsets.} $\Gamma(\Sigma)$ acting on $\Hilbert_\Sigma$. That is, if $\Psi_\Sigma$ can be regarded as a function on $\Gamma(\Sigma)$, then, for any $S\subseteq \Gamma(\Sigma)$, $P_\Sigma(S)$ is the multiplication by the characteristic function of $S$ and
\be
\bigl\| P_\Sigma(S) \, \Psi_\Sigma \bigr\|^2 
=  \int\limits_{S} dq\, |\Psi_\Sigma(q)|^2
\ee
with $dq$ the appropriate volume measure on $\Gamma(\Sigma)$. But we do not have to regard $\Psi_\Sigma$ as a function, we can treat it abstractly 
as a vector in the given Hilbert space $\Hilbert_\Sigma$. The PVM $P_\Sigma$ is automatically given if the $\Hilbert_\Sigma$ are Fock spaces or tensor products thereof.

Another way of putting the curved Born rule (although perhaps not fully equivalent with regards to a curved collapse rule, see Remark~\ref{rem:collapse} in Section~\ref{sec:intromainresult}) is to say that {\it $P_\Sigma$ is the configuration observable on $\Sigma$}. So, our theorem could 
be summarized as showing that {\it if $P_E$ is the configuration observable on every hyperplane $E$, then $P_\Sigma$ is the configuration observable on every Cauchy surface $\Sigma$, provided interaction locality (IL) and propagation locality (PL) hold.}

A hypersurface evolution is specified by specifying the $\Hilbert_\Sigma$'s, the $U_\Sigma^{\Sigma'}$'s, and the $P_\Sigma$'s; we denote it by $(\Hilbert_\circ,U_\circ^\circ,P_\circ)$ with $\circ$ a placeholder for Cauchy surfaces. Some examples are described in \cite{LT:2020} and in Remark~\ref{rem:exampleE} in Section~\ref{sec:hypersurfaceevolution} below; they arise especially from multi-time wave functions \cite{dirac:1932,dfp:1932,schweber:1961,LT:2021}; see \cite{lienertpetrattumulka} for an introduction and overview. While certain ways of implementing an ultraviolet cutoff \cite{DN:2019,LNT:2020} lead to multi-time wave functions that cannot be evaluated on arbitrary Cauchy surfaces, models without cutoff define a hypersurface evolution, either on the non-rigorous \cite{pt:2013c,pt:2013d} or 
on the rigorous level \cite{lienert:2015a,LN:2015,DN:2016,LN:2018,KLT:2020}. As a consequence, our result proves in particular a Born rule for multi-time wave functions, thereby generalizing a result of Bloch \cite{bloch:1934} (see also Remark 4 in \cite{LT:2020}).

We do not, as one would in quantum electrodynamics or quantum chromodynamics, exclude states of negative energy; it remains for future work to extend our result in this direction.

\subsection{Previous Result}

A theorem similar to ours has been proved by Lienert and Tumulka \cite{LT:2020}; our result differs in what exactly is assumed, and how the detection process is modeled. The fact that the curved Born rule can be obtained through different models of the detection process and from different sets of assumptions suggests that it is a robust consequence of the flat 
Born rule.

In fact, our result was already conjectured by Lienert and Tumulka, who also suggested the essentials of the model of the detection process we use here, although their theorem concerned a different model. The biggest difference between their theorem and ours is that we assume the Born rule and collapse rule to hold on \emph{tilted} hyperplanes, whereas Lienert and Tumulka assumed them only on \emph{horizontal} hyperplanes in a fixed Lorentz frame.

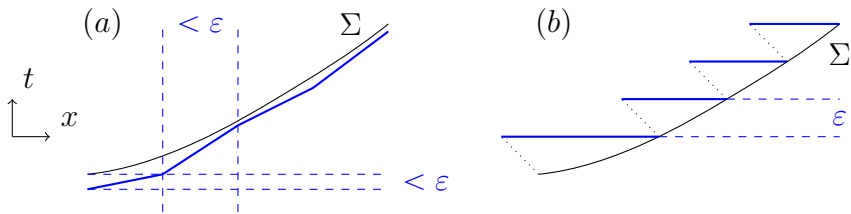
\begin{figure}[hbt]
    \centering
    \scalebox{1}{\begin{tikzpicture}

\draw[->] (-1,0.5) -- ++(0.5,0) node [anchor = south west] {$x$};
\draw[->] (-1,0.5) -- ++(0,0.5) node [anchor = south west] {$t$};

\node at (0.2,2) {$ (a) $};
\draw(0,0) .. controls (1,0.1) and (2,0.7) .. (2.5,1) .. controls (3,1.3) and (3.5,1.6).. (4,2);
\node at (3.5,2) {$ \Sigma $};

\draw[thick, blue] (0,-0.2) -- (1,0) -- (2,0.65) -- (3,1.15) -- (4,1.9);

\draw[dashed, blue] (1,-0.5) -- (1,2);
\draw[dashed, blue] (2,-0.5) -- (2,2);
\node[blue] at (1.5,2) {$ < \varepsilon $};

\draw[dashed, blue] (0,-0.2) -- (4,-0.2);
\draw[dashed, blue] (0,0) -- (4,0);
\node[blue] at (4.5,-0.1) {$ < \varepsilon $};

\node at (6.2,2) {$ (b) $};
\draw(6,0) .. controls (7,0.1) and (8,0.7) .. (8.5,1) .. controls (9,1.3) and (9.5,1.6).. (10,2);
\node at (10,1.6) {$ \Sigma $};

\draw[thick, blue] (5.5,0.5) -- (7.6,0.5);
\draw[thick, blue] (7.1,1) -- (8.5,1);
\draw[thick, blue] (8,1.5) -- (9.3,1.5);
\draw[thick, blue] (8.8,2) -- (10,2);
\draw[dotted] (6,0) -- ++(-0.5,0.5);
\draw[dotted] (7.6,0.5) -- ++(-0.5,0.5);
\draw[dotted] (8.5,1) -- ++(-0.5,0.5);
\draw[dotted] (9.3,1.5) -- ++(-0.5,0.5);

\draw[dashed, blue] (7.6,0.5) -- (10,0.5);
\draw[dashed, blue] (8.5,1) -- (10,1);
\node[blue] at (10,0.75) {$ \varepsilon $};

\end{tikzpicture}}
    \caption{(a) Our detection process is based on approximating a curved 
surface $\Sigma$ by a piecewise flat surface. (b) The detection process used by Lienert and Tumulka is based on approximating a curved surface $\Sigma$ by disconnected pieces of horizontal surfaces. We have set the speed of light to $c=1$. Color online.}
    \label{fig:detection_defs}
\end{figure}

Our model of the detection process is perhaps more natural than the one at the basis of Lienert and Tumulka's theorem, as it approximates detectors on tilted surfaces through detectors on tilted hyperplanes, rather than 
on numerous small pieces of horizontal hyperplanes. On the other hand, the result of Lienert and Tumulka is stronger than ours in that it assumes the Born rule only on horizontal hyperplanes (``horizontal Born rule’’) and not on all tilted spacelike hyperplanes (``flat Born rule’’). Then again, our model allows for a somewhat simpler proof compared to that of Lienert and Tumulka, and the assumption of the Born and collapse rules on tilted hyperplanes seems natural if the workings of detectors are Lorentz invariant. Yet, our proof does not require the Lorentz invariance of the hypersurface evolution of the observed system (see also Remark~\ref{rem:cov} in Section~\ref{sec:hypersurfaceevolution}); in particular, the hypersurface evolution may involve external fields that break the Lorentz symmetry.

Other works in recent years dealing with a physical analysis of the quantum measurement process include \cite{DGZ:2004,FV18,Few19,BBFF2020}.

\subsection{Detection Process}
\label{sec:introdetectionprocess}

Our definition of the detection process is based on approximating any given Cauchy surface $\Sigma$ by spacelike surfaces $\Upsilon$ that are piecewise flat, and whose (countably many) flat pieces are 3d (non-regular) tetrahedra. We call such surfaces {\bf triangular surfaces}; see Figure~\ref{fig:Triangulation}. While the precise definition of a triangular surface will be postponed to Section~\ref{sec:definitions}, it may be useful to formulate already here a basic fact that we will prove in Section~\ref{sec:approxtriangular}:

\begin{prop}\label{prop:approx}
For every Cauchy surface $\Sigma$ in Minkowski space-time, there is a sequence $(\Upsilon_n)_{n\in\NNN}$ of triangular Cauchy surfaces that converges increasingly and uniformly to $\Sigma$.
\end{prop}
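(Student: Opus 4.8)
The plan is to realize $\Sigma$ as the graph of a $1$-Lipschitz function over a time slice, mollify it, and then replace the mollification by a piecewise-affine interpolant pushed slightly into the past, choosing the parameters so that every affine piece is strictly spacelike. First I would reduce to a Lipschitz graph: since every vertical line $\{(\tau,\bx):\tau\in\RRR\}$ is an inextendible timelike curve, it meets $\Sigma$ exactly once, so $\Sigma=\{(f(\bx),\bx):\bx\in\RRR^3\}$ for a unique $f:\RRR^3\to\RRR$, and $f$ has global Lipschitz constant $\le 1$ because two points of $\Sigma$ over $\bx,\by$ with $|f(\bx)-f(\by)|>|\bx-\by|$ would be timelike separated, contradicting achronality of $\Sigma$ (this reduction, and the criterion for when such a graph is a Cauchy surface, are recalled in Section~\ref{sec:definitions}). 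In these terms ``$\Sigma$ contains no lightlike segment'' says exactly that $f$ is affine with $|\nabla f|\equiv 1$ on no open ball. It then suffices to produce, for every $\varepsilon>0$, a continuous piecewise-affine $f_\varepsilon$ over a locally finite subdivision of $\RRR^3$ into (non-regular) tetrahedra with $f_\varepsilon\le f$, $\|f-f_\varepsilon\|_\infty\le\varepsilon$ and $|\nabla f_\varepsilon|<1$ on every piece: its graph $\Upsilon_\varepsilon$ is then triangular, achronal, closed and edgeless, hence a Cauchy surface, and lies in the causal past of $\Sigma$. Running this with $\varepsilon=\varepsilon_n\downarrow 0$ and passing to $g_n:=\max(f_{\varepsilon_1},\dots,f_{\varepsilon_n})$ — still piecewise affine over a common refinement (each tetrahedron cut along the finitely many creases and re-triangulated), still $\le f$, still strictly spacelike since $\nabla g_n$ equals some $\nabla f_{\varepsilon_i}$ on each piece — yields the required increasing, uniformly convergent sequence, since $0\le f-g_n\le f-f_{\varepsilon_n}\le\varepsilon_n$.

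To build $f_\varepsilon$ I would first mollify: fix a standard mollifier $\rho_\eta$ supported in $\overline{B_\eta(\bzero)}\subset\RRR^3$ with $\eta\le\varepsilon/2$ and set $g:=f*\rho_\eta-\varepsilon/2$; then $g\in C^\infty$, $g\le f$ (since $f*\rho_\eta\le f+\eta$), $\|f-g\|_\infty\le\varepsilon$, and $\nabla^2 g$ is locally bounded. The key point is that $|\nabla g(\bx)|<1$ for \emph{every} $\bx$: $\nabla g(\bx)=\int(\nabla f)(\bx-\by)\,\rho_\eta(\by)\,d\by$ (using $|\nabla f|\le 1$ a.e.) is an average, against a weight positive on the open ball $B_\eta(\bzero)$, of vectors of Euclidean norm $\le 1$, so by strict convexity of the Euclidean ball $|\nabla g(\bx)|=1$ would force $\nabla f$ to equal one fixed unit vector a.e.\ on $B_\eta(\bx)$, i.e.\ $f$ to be affine with unit gradient there, which is excluded. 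Next I would piecewise-linearize $g$ from below: choose a locally finite, shape-regular tetrahedral subdivision $\mathcal T$ of $\RRR^3$, refined towards spatial infinity so that $\operatorname{diam}(T)\sup_{\widehat T}|\nabla^2 g|$ is small relative to the positive gap $1-\sup_{\widehat T}|\nabla g|$ on each $T\in\mathcal T$ (with $\widehat T$ a fixed neighborhood of $T$), which is possible because $1-|\nabla g|>0$ everywhere and $\nabla^2 g$ is locally bounded; let $f_\varepsilon$ be the nodal affine interpolant of the vertex values $g(v)-\delta_v$, with $\delta_v>0$ dominating the Jensen-type convexity defect $c\,\max\{\operatorname{diam}(T)^2\sup_{\widehat T}|\nabla^2 g|:T\ni v\}$, so that $f_\varepsilon\le g\le f$ on each tetrahedron. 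The standard first-order estimate bounds $|\nabla f_\varepsilon-\nabla g|$ on $T$ by $C\,(\operatorname{diam}(T)\sup_{\widehat T}|\nabla^2 g|+(\text{oscillation of }\delta_\bullet\text{ over }T)/\operatorname{diam}(T))$ with $C$ depending only on the shape-regularity bound, which after a further local refinement is $<1-\sup_{\widehat T}|\nabla g|$; thus $|\nabla f_\varepsilon|<1$ on every piece, and one refines once more, if needed, to make $\|\delta_\bullet\|_\infty$ (hence $\|f-f_\varepsilon\|_\infty$) as small as wanted.

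The hard part is precisely the last step — keeping $|\nabla f_\varepsilon|<1$ on \emph{every} piece when the tangent planes of $\Sigma$ approach the light cone at spatial infinity, as for a hyperboloidal $\Sigma$. No quasi-uniform mesh works; the subdivision must be refined towards infinity at a rate tied to the profile of $1-|\nabla g(\bx)|$, simultaneously dominating the interpolation-gradient error $\sim\operatorname{diam}(T)\|\nabla^2 g\|$ and the slopes introduced by the downward correction $\delta_\bullet$. A secondary, more conceptual point — standard, and quotable from Section~\ref{sec:definitions} — is that the graph of a function with pointwise gradient norm $<1$ is a Cauchy surface even when that norm is not bounded away from $1$ globally; this is established via achronality and emptiness of the edge rather than a direct count of causal-curve intersections.
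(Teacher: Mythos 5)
Your construction takes a genuinely different route from the paper's. The paper lowers $\Sigma$ by a fixed amount $2\varepsilon$, lifts the vertices of a spatial tetrahedral mesh directly onto that lowered copy of $\Sigma$, and proves the Cauchy-surface property of the resulting $\Upsilon_n$ via a sandwich between two translates of $\Sigma$. You instead mollify $f$ first so that $|\nabla g|<1$ holds pointwise — a nice explicit use of the no-lightlike-segment hypothesis through the strict-convexity average — and then piecewise-linearize from below on a mesh that is locally refined at a rate dictated by the gap $1-|\nabla g|$, so that each affine piece is strictly spacelike by an interpolation estimate. That is a legitimate, and in some respects more transparent, way to control the spacelikeness of the pieces. (Small slip: ``$\Sigma$ contains no lightlike segment'' only \emph{implies}, and is not equivalent to, ``$f$ is affine with $|\nabla f|\equiv 1$ on no open ball''; the forward implication, which is all your mollification argument uses, is correct.)

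There is, however, a genuine gap where you conclude that $\Upsilon_\varepsilon$ is a Cauchy surface ``because it is achronal, closed and edgeless,'' or equivalently because $|\nabla f_\varepsilon|<1$ pointwise; you attribute this criterion to Section~\ref{sec:definitions}, but it is neither stated there nor true. The future unit hyperboloid, the graph of $f(\bx)=\sqrt{1+|\bx|^2}$, is achronal, closed, edgeless, and has $|\nabla f|<1$ at every point, yet it is not a Cauchy surface: the inextendible null geodesic $s\mapsto(s,\,s+a,0,0)$ with $a>0$ never meets it. Achronality together with emptiness of the edge yields a $C^0$ hypersurface, not a Cauchy surface; a causal-curve argument cannot be avoided. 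The fix is the sandwich argument that the paper's own proof of this step makes explicit, and your construction already supplies the ingredients: you arranged $f-\varepsilon\le f_\varepsilon\le f$ (and later $f-\varepsilon_n\le g_n\le f$), the graphs of $f$ and of $f-\varepsilon$ are Cauchy surfaces (translates of $\Sigma$), along every inextendible causal curve $\gamma$ the function $\gamma^0(t)-f_\varepsilon(\pi(\gamma(t)))$ is strictly monotone because each piece of $f_\varepsilon$ has gradient norm $<1$, which gives at most one intersection, and since $\gamma$ crosses both bounding translates of $\Sigma$, an intermediate-value argument forces at least one intersection with $\Upsilon_\varepsilon$. Supply that step, and also spell out that after taking $\max(f_{\varepsilon_1},\dots,f_{\varepsilon_n})$ you must re-triangulate the polytopal pieces cut by the creases while preserving local finiteness (routine, but Definition~\ref{def:triangular} requires tetrahedra), and the proof closes.
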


\noindent
\begin{minipage}{0.4\textwidth}
	\centering
	\scalebox{1.0}{\begin{tikzpicture}

\draw[thick,->] (0,-0.5) -- (0,4) node[anchor = south west] {$t$};
\draw[thick,->] (-0.5,0.1) -- (3.5,-0.7) node[anchor = south west] {$x^1$};
\draw[thick,->] (-0.3,-0.2) -- (1.5,1) node[anchor = west] {$x^2$};

\filldraw[fill = blue!8!white] (0.8,3.8) -- (1,3) -- (1.8,3.8) -- cycle;
\filldraw[fill = blue!16!white] (1,3) -- (2,2.8) -- (1.8,3.8) -- cycle;
\filldraw[fill = blue!14!white] (2,2.8) -- (1.8,3.8) -- (2.8,3.6) -- cycle;
\filldraw[fill = blue!10!white] (2,2.8) -- (2.8,3.6) -- (3,2.8) -- cycle;
\filldraw[fill = blue!8!white] (2.8,3.6) -- (3,2.8) -- (3.6,3.8) -- cycle;
\filldraw[fill = blue!12!white] (3,2.8) -- (3.6,3.8) -- (4,3.2) -- cycle;
\filldraw[fill = blue!14!white] (3.6,3.8) -- (4,3.2) -- (4.6,3.8) -- cycle;

\filldraw[fill = blue!24!white] (0.2,2) --(1,3) -- (1.2,1.7) -- cycle;
\filldraw[fill = blue!26!white] (1,3) -- (1.2,1.7) -- (2,2.8) -- cycle;
\filldraw[fill = blue!24!white] (1.2,1.7) -- (2,2.8) -- (2.2,1.7) -- cycle;
\filldraw[fill = blue!20!white] (2,2.8) -- (2.2,1.7) -- (3,2.8) -- cycle;
\filldraw[fill = blue!20!white] (2.2,1.7) -- (3,2.8) -- (3.4,2) -- cycle;
\filldraw[fill = blue!22!white]  (3,2.8) -- (3.4,2) -- (4,3.2) -- cycle;

\filldraw[fill = blue!4!white] (-0.6,1.3) -- (0.2,2) -- (0.2,1.4) -- cycle;
\filldraw[fill = blue!14!white] (0.2,2) -- (1.2,1.7) -- (0.2,1.4) -- cycle;
\filldraw[fill = blue!10!white] (1.2,1.7) -- (0.2,1.4) -- (1.4,1.2) -- cycle;
\filldraw[fill = blue!10!white] (1.2,1.7) -- (1.4,1.2) -- (2.2,1.7) -- cycle;
\filldraw[fill = blue!10!white] (2.2,1.7) -- (2.6,1.3) -- (3.4,2) -- cycle;
\filldraw[fill = blue!16!white] (2.6,1.3) -- (3.4,2) -- (4,1.2) -- cycle;
\filldraw[line width = 2, draw = red!50!black,  fill = blue!6!white] (1.4,1.2) -- (2.2,1.7) -- (2.6,1.3) -- cycle;

\draw[thick,red!50!black] (2.3,1.3) -- ++(0.3,-0.5) node[anchor = north west] {$ \Delta_k $};
\node[blue!50!black] at (4.2,4.4) {\Large $ \Upsilon $};

\end{tikzpicture}}
	\captionof{figure}{Part of a triangular surface $ \Upsilon $ in $1+2$ dim. Color online.}
    \label{fig:Triangulation}
\end{minipage}
\hfill
\begin{minipage}{0.54\textwidth}
	\centering
    \begin{tikzpicture}[declare function = {
	f(\x) = 0.05*((\x-3.5)^3 - 9*(\x-3.5))*(1/(1+0.1*(\x-3.5)^2)) + 2.5;}]

\def\dyA{0.15};
\def\dyB{0.3};
\def\dyC{0.6};
\def\dyD{1.5};

\def\Na{30}
\def\xmax{6}
\def\step{\xmax / \Na}

\draw[thick,->] (0,-0.5) -- (0,3) node[anchor = south west] {$ t $};
\draw[thick,->] (0,0) -- (6,0) node[anchor = south west] {$ x $};
\draw[thick,red,->] (-0.3,1) -- ++(0,1.2) ;

\draw[domain=0:6, samples = 20, smooth, line width = 2, blue!50!black] plot({\x}, {f(\x)}) node[anchor = south west] {$ \Sigma $};

\draw[domain = 0:6, samples = 4, red, mark = *, mark size = 1pt] plot({\x}, {f(\x)-\dyD}) ;
\draw[domain = 0:6, samples = 8, red, mark = *, mark size = 1pt] plot({\x}, {f(\x)-\dyC}) ;
\draw[domain = 0:6, samples = 16, red, mark = *, mark size = 1pt] plot({\x}, {f(\x)-\dyB}) ;
\draw[domain = 0:6, samples = 32, red, mark = *, mark size = 0.5pt] plot({\x}, {f(\x)-\dyA}) ;
\draw[red](5.7,1.9) -- ++(0.4,0.1) node[anchor = west] {$ \Upsilon_3 $};
\draw[red](5.7,1.6) -- ++(0.4,-0.1) node[anchor = west] {$ \Upsilon_2 $};
\draw[red](5.7,0.8) -- ++(0.8,-0.3) node[anchor = west] {$ \Upsilon_1 $};

\end{tikzpicture}
	\captionof{figure}{A sequence of triangular surfaces $ \Upsilon_n$ coverging increasingly and uniformly to $\Sigma $ in $1+1$ dim. Color online.}
    \label{fig:Cauchyconvergence_definition}
\end{minipage}

\vspace{0.5cm}

Here, ``increasing'' means that\footnote{In this paper, the ``future'' of 
a set $R$ in space-time means the \emph{causal future}, often denoted $J^+(R)$ \cite{ON:1983}, as opposed to the timelike future $I^+(R)$; note that $R\subseteq J^+(R)$; likewise for the ``past.''} $\Upsilon_{n+1}\subseteq \mathrm{future}(\Upsilon_n)$ for all $n$; see Figure~\ref{fig:Cauchyconvergence_definition}. Uniform convergence in a given Lorentz frame means that for every $\varepsilon>0$, all but finitely many $\Upsilon_n$ lie in $\{x+(s,0,0,0):x\in \Sigma,|s|<\varepsilon\}$; equivalently, since $\Sigma$ is the graph of a function $f:\RRR^3\to\RRR$ and $\Upsilon_n$ the graph of a function $f_n:\RRR^3\to\RRR$, uniform convergence $\Upsilon_n\to \Sigma$ means that $f_n$ converges uniformly to $f$. It turns out that this notion is Lorentz invariant:

\begin{prop}\label{prop:uniform}
If a sequence $(\Sigma_n)_{n\in\NNN}$ of Cauchy surfaces converges uniformly to a Cauchy surface $\Sigma$ in one Lorentz frame, then also in every 
other.
\end{prop}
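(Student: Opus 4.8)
The plan is to translate the statement about uniform convergence into a statement about distances measured in a Lorentz-invariant way, and then compare the frame-dependent "vertical'' distance with this invariant quantity. The key observation is that since each $\Sigma_n$ and $\Sigma$ are Cauchy surfaces, hence spacelike, the parameter $s$ appearing in the strip $\{x+(s,0,0,0):x\in\Sigma,|s|<\varepsilon\}$ can be related to a more geometric quantity. Concretely, for a point $x\in\Sigma$, let $\ell(x)$ be the (unique, since $\Sigma$ is a Cauchy surface) point of $\Sigma_n$ lying on the vertical line through $x$ in the given frame. Uniform convergence says $\sup_{x\in\Sigma}|t(\ell(x))-t(x)|\to 0$. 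I want to show the analogous sup in any other frame tends to $0$ as well.

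First I would fix the geometric comparison: given two Cauchy surfaces $\Sigma,\Sigma'$, define $d(\Sigma,\Sigma')$ to be the supremum, over all timelike-or-lightlike straight lines $L$ that meet both surfaces, of the Minkowski arc-length of the segment of $L$ between its intersection with $\Sigma$ and its intersection with $\Sigma'$ — or, more robustly, work with the supremum of $|t'-t|$ along \emph{vertical} lines but bound it above and below by frame-independent quantities. The cleanest route: since $\Sigma$ is spacelike, it is locally the graph of a function with gradient of Euclidean norm $<1$; but without a uniform Lipschitz bound this is delicate. So instead I would argue directly. Suppose $\Sigma_n\to\Sigma$ uniformly in frame $F$ but not in frame $F'$. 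Then there is $\delta>0$ and, for infinitely many $n$, a point $x_n\in\Sigma$ whose vertical displacement (in $F'$) to $\Sigma_n$ exceeds $\delta$. The vertical line through $x_n$ in $F'$ is timelike; it meets $\Sigma_n$ at a point $y_n$ with $|t'(y_n)-t'(x_n)|\ge\delta$. Because the segment $\overline{x_ny_n}$ is timelike of $F'$-time-length $\ge\delta$, its Minkowski interval has $|(x_n-y_n)^2|\ge\delta^2$ (using that for a timelike vector the square is bounded below by the square of its time component in any frame only after accounting for the spatial part — here I must be careful: a timelike segment of coordinate-time $\delta$ in $F'$ could be nearly lightlike and have tiny invariant length). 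This is the crux.

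To handle that, I would not use the invariant interval of the vertical segment but rather exploit the causal structure more carefully. The correct invariant lower bound comes from the fact that $y_n\in J^\pm(x_n)$ is impossible when both lie on distinct Cauchy surfaces unless the displacement is genuinely timelike-bounded; more precisely, since $x_n$ and $y_n$ are on the vertical $F'$-line and $\Sigma$ is a Cauchy surface, while $y_n\in\Sigma_n$, consider instead the vertical $F$-line through $x_n$: it meets $\Sigma_n$ at some $z_n$ with $|t(z_n)-t(x_n)|<\varepsilon_n\to 0$ by hypothesis. Now both $y_n,z_n\in\Sigma_n$ and $\Sigma_n$ is spacelike, so $y_n$ and $z_n$ are spacelike separated (or equal). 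But $z_n$ is $F$-vertically $\varepsilon_n$-close to $x_n$, hence Minkowski-close to $x_n$ (its $F$-coordinates are $x_n$ plus an $O(\varepsilon_n)$ time shift), so $z_n\to x_n$ in the Euclidean sense in \emph{every} frame. Meanwhile $y_n$ is $F'$-vertically $\ge\delta$ from $x_n$. Then $y_n$ and $z_n$ are two points of the spacelike surface $\Sigma_n$ that are at bounded-below timelike-ish separation from each other (since $y_n$ is a genuine $F'$-time-shift of $x_n\approx z_n$, and a pure $F'$-time-shift of magnitude $\ge\delta$ is timelike with invariant length exactly $\ge\delta$), contradicting spacelikeness of $\Sigma_n$. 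The point I glossed — that a vertical $F'$-segment really is timelike with invariant length equal to its coordinate length — is actually true by definition of "vertical in $F'$'' ($(s,\mathbf 0)$ in $F'$-coordinates has Minkowski square $s^2$), so the earlier worry dissolves: the only subtlety is ensuring $z_n\to x_n$ strongly enough, which follows because $F$-vertical closeness is genuine spacetime closeness. I expect the main obstacle to be making the last contradiction fully rigorous when $\Sigma_n$ has lightlike tangents (so "spacelike separated'' degenerates to "not strictly timelike separated''); there one uses that $\Sigma_n$ contains no lightlike line segment, so two distinct points of $\Sigma_n$ on a timelike line is outright forbidden, and the chain "$y_n$ on an $F'$-timelike line through $x_n$, $z_n\approx x_n$, $z_n$ on same surface'' must be upgraded to "$y_n,z_n$ on a common timelike line'' by a small perturbation argument — this perturbation estimate is the technical heart of the proof.
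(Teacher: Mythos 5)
Your core geometric insight is the same as the paper's: two points of a Cauchy surface are spacelike separated, and this lets you bound a ``vertical'' displacement in one frame by one in another. The paper (Proposition~\ref{prop:frameindependence}) runs this directly and quantitatively, picking on the surface $g[\Sigma+a_\varepsilon]$ the point $x_b$ vertically above $x_a\in g\Sigma$ and the translated point $x_c=x_a+\Lambda_0 a_\varepsilon$, using spacelikeness to bound $|x^0_b-x^0_c|\le|\bx_b-\bx_c|=\beta\gamma\varepsilon$, and concluding $|x^0_b-x^0_a|\le(\beta\gamma+\gamma)\varepsilon$ by the triangle inequality; this gives an explicit width $\tilde\varepsilon$ for the boosted strip. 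You instead argue by contradiction, producing two points $y_n,z_n\in\Sigma_n$ (the vertical-$F'$ and vertical-$F$ lifts of the same $x_n\in\Sigma$). Your approach is correct, but you overestimate the difficulty of the last step: there is no ``perturbation estimate'' to worry about, and the lightlike-tangent degenerate case never arises. In $F'$-coordinates $y_n-x_n=(a_n,\bzero)$ with $|a_n|\ge\delta$, while $z_n-x_n$, being a pure $F$-time shift of length $<\varepsilon_n$, has $F'$-time component at most $\gamma\varepsilon_n$ and $F'$-space component at most $\beta\gamma\varepsilon_n$. Hence
\[
(y_n-z_n)^2 \ge (\delta-\gamma\varepsilon_n)^2-(\beta\gamma\varepsilon_n)^2 > 0
\]
once $\varepsilon_n$ is small, so $y_n$ and $z_n$ are \emph{strictly} timelike separated, which is impossible for two points of a Cauchy surface. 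That single inequality is the whole ``technical heart,'' and it is elementary. The paper's direct version is cleaner (no contradiction, explicit constant $\tilde\varepsilon=(\beta\gamma+\gamma)\varepsilon$), but both proofs rest on exactly the same geometric fact.
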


Again, the proof is given in Section~\ref{sec:approxtriangular}. The following notation will be convenient: for any subset $A\subseteq \Sigma$, let
\begin{equation}
\begin{aligned}
	\emptyset(A) &:= \{ q \in \Gamma(\Sigma): q \cap A = \emptyset\} \\
	\exists(A)   &:= \{ q \in \Gamma(\Sigma): q \cap A \neq \emptyset\} \\
	\forall(A)   &:= \{ q \in \Gamma(\Sigma): q \subseteq A\}
\label{eq:setsofR}
\end{aligned}
\end{equation} 
be the sets of configurations with no, at least one, or all particles in $A$ (see Figure \ref{fig:EmptyExistsForall}). Note that $\exists(A)^c = 
\emptyset(A) = \forall(A^c)$, where $A^c$ means the complement of $A$ with respect to $\Sigma$. We also briefly write $\forall A$ for $\forall (A)$, and similarly $\exists A$ and $\emptyset A$.

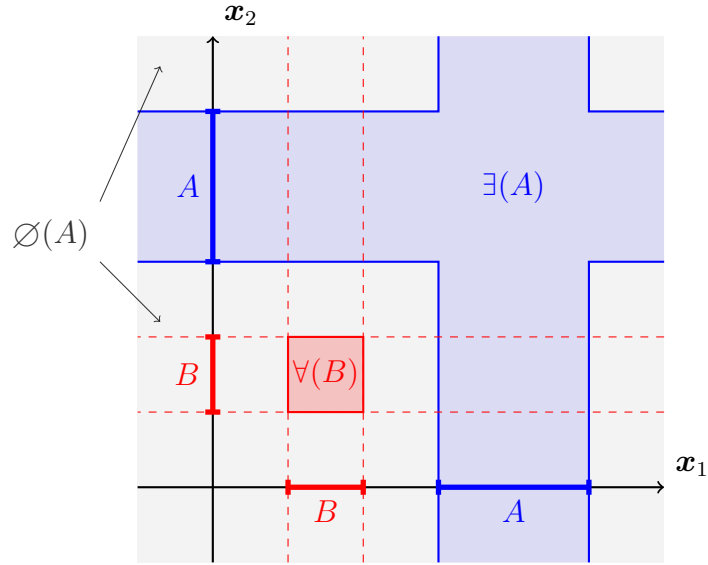
\begin{figure}[hbt]
    \centering
    \scalebox{1}{\begin{tikzpicture}
\def\Aa{(5,-1)} \def\Ab{(5,3)} \def\Ac{(6,3)}
\def\Ba{(6,5)} \def\Bb{(5,5)} \def\Bc{(5,6)}
\def\Ca{(3,6)} \def\Cb{(3,5)} \def\Cc{(-1,5)}
\def\Da{(-1,3)} \def\Db{(3,3)} \def\Dc{(3,-1)}

\def\Ea{(2,-1)} \def\Eb{(2,1)} \def\Ec{(6,1)}
\def\Fa{(6,2)} \def\Fb{(2,2)} \def\Fc{(2,6)}
\def\Ga{(1,6)} \def\Gb{(1,2)} \def\Gc{(-1,2)}
\def\Ha{(-1,1)} \def\Hb{(1,1)} \def\Hc{(1,-1)}

\fill[opacity = 0.05] (-1,-1) -- (6,-1) -- (6,6) -- (-1,6) -- cycle;

\draw[thick, ->] (-1,0) -- (6,0) node[anchor =south west] {$ \boldsymbol{x}_1 $};
\draw[thick, ->] (0,-1) -- (0,6) node[anchor =south west] {$ \boldsymbol{x}_2 $};

\draw[thick, blue] \Aa -- \Ab -- \Ac;
\draw[thick, blue] \Ba -- \Bb -- \Bc;
\draw[thick, blue] \Ca -- \Cb -- \Cc;
\draw[thick, blue] \Da -- \Db -- \Dc;
\fill[blue, opacity = 0.1]  \Aa -- \Ab -- \Ac -- \Ba -- \Bb -- \Bc -- \Ca -- \Cb -- \Cc -- \Da -- \Db -- \Dc ;
\node[blue] at (4,4) {$ \exists(A) $};
\draw[opacity = 0.8, ->] (-1.5,3) node[anchor = south east] {$ \emptyset(A) $} -- ++(0.8,-0.8) ;
\draw[opacity = 0.8, ->] (-1.5,3.8) -- ++(0.8,1.8);

\draw[blue, line width = 2] (3,0) -- node[anchor = north] {$A$} (5,0);
\draw[blue, line width = 2] (3,-0.1) -- ++(0,0.2);
\draw[blue, line width = 2] (5,-0.1) -- ++(0,0.2);

\draw[blue, line width = 2] (0,3) -- node[anchor = east] {$A$} (0,5);
\draw[blue, line width = 2] (-0.1,3) -- ++(0.2,0);
\draw[blue, line width = 2] (-0.1,5) -- ++(0.2,0);

\draw[red, dashed] \Ea -- \Fc;
\draw[red, dashed] \Hc -- \Ga;
\draw[red, dashed] \Ec -- \Ha;
\draw[red, dashed] \Fa -- \Gc;
\filldraw[red, thick, fill opacity = 0.2] \Eb -- \Fb -- \Gb -- \Hb -- cycle;
\node[red] at (1.5,1.5) {$ \forall(B) $};

\draw[red, line width = 2] (1,0) -- node[anchor = north] {$B$} (2,0);
\draw[red, line width = 2] (1,-0.1) -- ++(0,0.2);
\draw[red, line width = 2] (2,-0.1) -- ++(0,0.2);

\draw[red, line width = 2] (0,1) -- node[anchor = east] {$B$} (0,2);
\draw[red, line width = 2] (-0.1,1) -- ++(0.2,0);
\draw[red, line width = 2] (-0.1,2) -- ++(0.2,0);

\end{tikzpicture}}
    \caption{The sets $ \emptyset(\cdot), \exists(\cdot) $ and $ \forall(\cdot) $ on the 2-particle sector of configuration space, visualized. Color online.}
    \label{fig:EmptyExistsForall}
\end{figure}

We define the detection distribution on $\Sigma$ as the limit of the detection distributions on the $\Upsilon_n$, and we show in Theorem~\ref{thm:1} that this limit exists and agrees with $|\Psi_\Sigma|^2$. But to this end, 
we first need to talk about detection probabilities on triangular surfaces $\Upsilon$.

So let $\Delta_k$ be the open and disjoint tetrahedra such that
\be
\Upsilon=\bigcup_{k\in\sK} \overline{\Delta_k}
\ee
(the bar denotes closure, $\sK$ is a countably infinite index set). We want to consider a detector in a bounded region $B\subset \Upsilon$ that yields outcome 1 if there is at least one particle in $B$ and outcome 0 if there is no particle in $B$. To this end, we imagine several smaller detectors, one in each region $B_k := B\cap \Delta_k$, and set the $B$-outcome equal to 1 whenever any of the small detectors clicked. Now each region $B_k$, being a subset of $\Delta_k$, lies in some hyperplane $E_k$, and on hyperplanes we assume the Born rule and collapse rule:

\bigskip

\noindent{\bf Flat Born rule.} {\it If on the hyperplane $E$ the state vector is $\Psi_E\in\Hilbert_E$ with $\|\Psi_E\|=1$, and a detection is attempted in the region $B\subseteq E$, then the probability of outcome 1 is $\|P_E(\exists(B))\, \Psi_E\|^2$ and that of outcome 0 is $\|P_E(\exists(B)^c)\, \Psi_E\|^2$.}

\bigskip

\noindent{\bf Flat collapse rule.} {\it If the outcome is 1, then the collapsed wave function is}
\be
\Psi'_E = \frac{P_E(\exists(B))\, \Psi_E}{\|P_E(\exists(B))\, \Psi_E\|}\,, 
\ee
{\it otherwise}
\be
\Psi'_E = \frac{P_E(\exists(B)^c)\, \Psi_E}{\|P_E(\exists(B)^c)\, \Psi_E\|} \,.
\ee

\bigskip

There are two natural possibilities for defining the detection probabilities on $\Upsilon$ in terms of those on $E_k$: the sequential detection process and the parallel detection process. According to the {\bf sequential detection process}, we choose an arbitrary ordering of the set $\sK$ indexing the tetrahedra or hyperplanes and carry out, in this order, a quantum measurement in each $E_k$ representing the detection attempt in $B_k$ 
including appropriate collapse and then use the unitary evolution $U_{E_k}^{E_{k+1}}$ to evolve to the next hyperplane in the chosen order, here written as $E_{k+1}$. For the {\bf parallel detection process}, consider the projection operators $P_{E_k}(\exists(B_k))$ associated with attempted 
detection in $B_k$; we show that they, after being transferred to $\Hilbert_{\Upsilon}$ by means of $U_{E_k}^\Upsilon$, commute with each other if 
interaction locality holds, so they can be ``measured simultaneously.'' The simultaneous quantum measurement of these projections in $\Hilbert_{\Upsilon}$ provides the parallel detection process for $B\subset \Upsilon$ with outcome 1 whenever any of the quantum measurements yielded 1. It turns out that the sequential and the parallel process agree with each other 
and with the Born rule on $\Upsilon$:

\begin{prop}\label{prop:sequential1}
Fix a hypersurface evolution satisfying interaction locality (IL) (Definition \ref{def:IL}), a triangular Cauchy surface $\Upsilon$, a bounded subset $B\subset\Upsilon$, and a normalized quantum state $\Psi$, and assume the flat Born rule and the flat collapse rule. The sequential detection process in any order of the tetrahedra of $\Upsilon$ yields the same detection probability, called $\Prob^\Psi_B$; it agrees with the one given by the curved Born distribution on $\Upsilon$, which is $\|P_{\Upsilon}(\exists(B))\Psi_\Upsilon\|^2$. Moreover, the parallel detection process also yields the same detection probability.
\end{prop}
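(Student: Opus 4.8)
The plan is to prove the three asserted coincidences in stages, working inside the single Hilbert space $\Hilbert_\Upsilon$ where all objects can be compared. First I would transfer everything to $\Hilbert_\Upsilon$: for each tetrahedron index $k\in\sK$, let $\widehat P_k := U_{E_k}^\Upsilon P_{E_k}(\exists(B_k))\, (U_{E_k}^\Upsilon)^{-1}$, a projection on $\Hilbert_\Upsilon$. The key structural input is \emph{interaction locality} (IL), which should say that when two regions $B_j, B_k$ are spacelike separated — which they are here, since distinct $\Delta_j,\Delta_k$ are disjoint subsets of the single spacelike surface $\Upsilon$ — the corresponding localized observables, evolved to a common surface, commute: $[\widehat P_j,\widehat P_k]=0$ for all $j\ne k$. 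I would also want the ``locality'' fact that $\widehat P_k$ agrees with the multiplication operator $P_\Upsilon(\exists(B_k))$; this is the content of saying $P_E$ is the configuration observable on hyperplanes together with IL (roughly: $P_{E_k}(\exists(B_k))$ is localized in the region $\overline{\Delta_k}\subset E_k\cap\Upsilon$, and localized observables in a region contained in two Cauchy surfaces are identified by the evolution between them). Granting this, $\widehat P_k = P_\Upsilon(\exists(B_k))$, which makes the whole family mutually commuting automatically and identifies the joint spectral calculus with multiplication operators on $\Gamma(\Upsilon)$.

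With commutativity in hand, the \textbf{parallel process} is the simplest: the commuting projections $\{\widehat P_k\}_{k\in\sK}$ have a joint PVM, and the event ``at least one click'' corresponds to the projection $Q := \mathbf{1} - \prod_{k}(\mathbf{1}-\widehat P_k)$, where the product is understood as the strong limit of finite partial products (monotone decreasing sequence of projections, hence convergent). Since $\widehat P_k = P_\Upsilon(\exists(B_k))$ and $\mathbf{1}-\widehat P_k = P_\Upsilon(\emptyset(B_k))$, and $\bigcap_k \emptyset(B_k) = \emptyset\bigl(\bigcup_k B_k\bigr) = \emptyset(B)$ as subsets of $\Gamma(\Upsilon)$ (a finite configuration misses $B$ iff it misses every $B_k$), we get $\prod_k(\mathbf 1-\widehat P_k) = P_\Upsilon(\emptyset(B)) = P_\Upsilon(\exists(B)^c)$, hence $Q = P_\Upsilon(\exists(B))$ and the parallel detection probability is $\|P_\Upsilon(\exists(B))\Psi_\Upsilon\|^2$, i.e.\ the curved Born distribution on $\Upsilon$. (One should note $B$ bounded means all but finitely many $B_k$ are empty, so the infinite product is really finite — this removes any convergence worry.)

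For the \textbf{sequential process}, fix an ordering of $\sK$; I would compute the probability of the full outcome string and then marginalize. Writing $E_1, E_2,\ldots$ for the hyperplanes in the chosen order and pushing each step forward to $\Hilbert_\Upsilon$ via the $U$'s, the collapse-and-evolve dynamics turns the joint probability of outcomes $(\varepsilon_1,\varepsilon_2,\ldots)\in\{0,1\}^{\sK}$ into the standard Lüders/von Neumann chain $\bigl\| \widehat P_{k_m}^{\varepsilon_m}\cdots \widehat P_{k_1}^{\varepsilon_1}\Psi_\Upsilon\bigr\|^2$, where $\widehat P^1 = \widehat P$ and $\widehat P^0 = \mathbf 1 - \widehat P$; the unitaries cancel in pairs precisely because each measurement's collapse is immediately followed by $U_{E_k}^{E_{k+1}} = U_{E_{k+1}}^\Upsilon{}^{-1}U_{E_k}^\Upsilon$ composed correctly, so only the conjugated projections $\widehat P_k$ survive. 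Because the $\widehat P_k$ commute, the chain of projections is reordering-invariant and the nonzero contribution comes only from the product of the relevant commuting projections — giving an order-independent answer and, after summing over all outcome strings with at least one $\varepsilon_k=1$ (equivalently, subtracting the all-zero term $\|\prod_k(\mathbf 1-\widehat P_k)\Psi_\Upsilon\|^2$), the same $\|P_\Upsilon(\exists(B))\Psi_\Upsilon\|^2$. The all-zero term again equals $\|P_\Upsilon(\exists(B)^c)\Psi_\Upsilon\|^2$, so $\Prob^\Psi_B = 1 - \|P_\Upsilon(\exists(B)^c)\Psi_\Upsilon\|^2 = \|P_\Upsilon(\exists(B))\Psi_\Upsilon\|^2$, matching both the parallel process and the curved Born rule.

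The main obstacle I anticipate is not the measurement-theoretic bookkeeping but the \emph{locality} step: carefully justifying that $U_{E_k}^\Upsilon$ carries the hyperplane projection $P_{E_k}(\exists(B_k))$ to the $\Upsilon$-projection $P_\Upsilon(\exists(B_k))$, and more precisely that IL (Definition~\ref{def:IL}, which I am taking on faith here) genuinely yields $[\widehat P_j,\widehat P_k]=0$ for disjoint $B_j, B_k$. This requires that the hypersurface evolution respect localization in the sense that an observable supported in a region $R$ that lies on both $E_k$ and $\Upsilon$ is the same operator in $\Hilbert_{E_k}$ and $\Hilbert_\Upsilon$ up to the canonical identification $U_{E_k}^\Upsilon$ — a property that should follow from IL together with propagation locality (PL), but which needs the precise definitions to pin down. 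Everything else — the strong-operator-convergence of infinite products of commuting projections, the telescoping of unitaries in the sequential chain, the set-theoretic identity $\bigcap_k\emptyset(B_k)=\emptyset(B)$, and the reduction of the bounded-$B$ case to finitely many nonempty $B_k$ — is routine once that locality input is secured.
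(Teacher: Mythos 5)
Your proposal is essentially the same as the paper's proof. The central fact you flag as the ``main obstacle'' --- that $U_{E_k}^\Upsilon P_{E_k}(\exists(B_k))\,U_\Upsilon^{E_k} = P_\Upsilon(\exists(B_k))$ --- is exactly the content of the paper's Lemma~\ref{lemma:Pksevolution}, proved there from (IL) alone via the tensor factorization \eqref{ILT}: one writes $\Hilbert_{E_k}=\Hilbert_{\Delta_k}\otimes\Hilbert_{E_k\setminus\Delta_k}$ and $\Hilbert_\Upsilon=\Hilbert_{\Delta_k}\otimes\Hilbert_{\Upsilon\setminus\Delta_k}$, observes $U_\Upsilon^{E_k}=I_{\Delta_k}\otimes V_{\Upsilon\setminus\Delta_k}^{E_k\setminus\Delta_k}$, and uses that the projection factorizes as $P_{\Delta_k}(\cdot)\otimes I$. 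Note that you do not need (PL) here, contrary to your closing speculation --- (IL) suffices for Propositions~\ref{prop:sequential1}--\ref{prop:sequential2}; (PL) enters only in Theorem~\ref{thm:1}. Once that identification is in hand, your treatments of the parallel process (joint PVM of commuting projections, $\mathbf 1-\prod_k(\mathbf 1-\widehat P_k)=P_\Upsilon(\exists(B))$, with the product finite because $B$ is bounded) and of the sequential process (telescoping unitaries producing the L\"uders chain, order-independence from commutativity, marginalizing over the outcome matrix) track the paper's Lemma~\ref{lemma:ink}, Proposition~\ref{prop:sequential3}, and Section~3.2 almost verbatim. The paper merely carries the argument through for a general admissible partition $\sB$ rather than a single region $B$, and then specializes; your $r=1$ version is the case actually asked for in Proposition~\ref{prop:sequential1}.
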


Proposition~\ref{prop:sequential1} will follow as a direct consequence of Proposition~\ref{prop:sequential3} in Section~\ref{sec:detection}.\\

Actually, for either a triangular surface $\Upsilon$ or a general Cauchy surface $\Sigma$, we want more than just to detect for a subset $B$ whether there is a particle in $B$. We want to allow the use of several detectors, each covering a region $P_1,\ldots,P_r\subset \Sigma$; the outcome of the experiment is $L=(L_1,\ldots,L_r)$ with $L_\ell=1$ if a particle gets detected in $P_\ell$ and $L_\ell=0$ otherwise. It seems physically reasonable that the region covered by a detector is bounded and has boundary of measure zero.

\begin{definition}\label{def:admissible}
An {\bf admissible partition} $\sP=(P_1,\ldots,P_r)$ of $\Sigma$ is defined by choosing finitely many subsets $P_\ell$ of $\Sigma$ that are mutually disjoint, $P_\ell\cap P_m = \emptyset$ for $\ell\neq m$, and such that each $P_\ell$ is bounded and has boundary in $\Sigma$ of (invariant) 3-volume 0. Here, the term bounded refers to the Euclidean norm on $ \RRR^4 $. We set $P_{r+1}=\Sigma \setminus (P_1 \cup \ldots \cup P_r)$ to make $(P_1,\ldots,P_{r+1})$ a partition of $\Sigma$.
\end{definition}

The idea is that there is no detector in $P_{r+1}$. Let $M_{\sP}(L)$ denote the set of configurations in $\Gamma(\Sigma)$ such that, for each $\ell=1,\ldots,r$, there is no point in $P_\ell$ if $L_\ell=0$ and at least one point in $P_\ell$ if $L_\ell=1$; that is, $M_{\sP}(L)$ {\it is the set of configurations compatible with outcome $L$.}

Now the definition of detection probabilities on a triangular surface $\Upsilon$ can straightforwardly be generalized from a bounded set $B\subset 
\Upsilon$ to an admissible partition $\sP=(P_1,\ldots,P_r)$ of $\Upsilon$ in both the sequential and the parallel sense, and we find:

\begin{prop}\label{prop:sequential2}
Fix a hypersurface evolution satisfying interaction locality, a triangular Cauchy surface $\Upsilon$, an admissible partition $\sP=(P_1,\ldots,P_r)$ of $\Upsilon$, and a normalized quantum state $\Psi$, and assume the 
flat Born rule and the flat collapse rule. The joint distribution $\Prob^\Psi_{\sP}(L)$ of $L=(L_1,\ldots,L_r)$ according to the sequential detection process in any order of the tetrahedra of $\Upsilon$ and according to the parallel detection process agree with each other and with the one given by the curved Born distribution on $\Upsilon$, which is $\|P_{\Upsilon}(M_{\sP}(L))\Psi_\Upsilon\|^2$.
\end{prop}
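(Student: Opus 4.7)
The plan is to reduce Proposition~\ref{prop:sequential2} to the single-set version, Proposition~\ref{prop:sequential1}, by refining the partition $\sP$ so that each piece lies entirely within a single tetrahedron. For each $\ell \in \{1,\ldots,r\}$ and each $k \in \sK$, set $P_{\ell,k} := P_\ell \cap \Delta_k$. Because each $P_\ell$ is bounded and a triangular surface is locally finite, only finitely many pairs $(\ell,k)$ give nonempty $P_{\ell,k}$. The set $\bigcup_k \partial\Delta_k$ has 3-volume zero on $\Upsilon$, so discarding it does not affect the projections $P_\Upsilon(\exists(\cdot))$ or $P_\Upsilon(\emptyset(\cdot))$, and the refined collection $(P_{\ell,k})$ is an admissible partition up to a boundary of measure zero. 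Aggregating outcomes by $L_\ell := \max_k L_{\ell,k}$ then recovers the coarse outcomes of $\sP$.

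Next, I would establish that the transferred projectors $Q_{\ell,k} := U_{E_k}^\Upsilon \, P_{E_k}(\exists(P_{\ell,k})) \, (U_{E_k}^\Upsilon)^*$ on $\Hilbert_\Upsilon$ pairwise commute. For $k = k'$ and $\ell \neq \ell'$, the projectors belong to the common PVM $P_{E_k}$ evaluated on disjoint sets and commute for that reason. For $k \neq k'$, the sets $P_{\ell,k}$ and $P_{\ell',k'}$ lie in distinct tetrahedra of a spacelike surface, hence are spacelike separated, so interaction locality forces commutation by exactly the argument used for Proposition~\ref{prop:sequential1}. The family $(Q_{\ell,k})$ therefore admits a joint spectral resolution, and the parallel detection process for $\sP$ is well-defined: simultaneously measure all $Q_{\ell,k}$, read off bits $L_{\ell,k} \in \{0,1\}$, and aggregate via $L_\ell = \max_k L_{\ell,k}$.

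With this in hand, the central identification underpinning Proposition~\ref{prop:sequential1}, namely $Q_{\ell,k} = P_\Upsilon(\exists(P_{\ell,k}))$, applies tetrahedron by tetrahedron. Because $P_\Upsilon$ is a PVM on $\Gamma(\Upsilon)$, the product $\prod_{(\ell,k)} P_\Upsilon(A^{L'_{\ell,k}}_{\ell,k})$ with $A^0_{\ell,k} = \emptyset(P_{\ell,k})$ and $A^1_{\ell,k} = \exists(P_{\ell,k})$ equals $P_\Upsilon(M_{\sP'}(L'))$ for the refined partition $\sP'$. Marginalizing via $L_\ell = \max_k L_{\ell,k}$ uses $\emptyset(P_\ell) = \bigcap_k \emptyset(P_{\ell,k})$ and its complement $\exists(P_\ell)$ to collapse the sum to $\| P_\Upsilon(M_\sP(L)) \, \Psi_\Upsilon \|^2$, which is the desired curved Born distribution on $\Upsilon$.

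For the sequential process, I would order the pairs $(\ell,k)$ by first fixing any order on $\sK$, then, within each $E_k$, performing the (commuting) detections for all $\ell$ with $P_{\ell,k}\neq\emptyset$ before applying $U_{E_k}^{E_{k+1}}$. Equality with the parallel result, and independence of the ordering of $\sK$, propagate from Proposition~\ref{prop:sequential1} because the within-tetrahedron detectors commute on $\Hilbert_{E_k}$ while cross-tetrahedron commutations supply the rest of the gluing. The main obstacle is bookkeeping rather than conceptual: one must verify that admissibility of $\sP'$ truly follows from admissibility of $\sP$ together with the triangulation, and that the factorization of the joint distribution survives the $\max$-aggregation cleanly; both are routine once the framework of Section~\ref{sec:definitions} is in place.
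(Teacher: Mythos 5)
Your proposal takes essentially the same route as the paper: your refinement $P_{\ell,k}=P_\ell\cap\Delta_k$ is precisely the paper's $B_{k\ell}=\Delta_k\cap B_\ell$, your ``central identification'' $Q_{\ell,k}=P_\Upsilon(\exists(P_{\ell,k}))$ is exactly the paper's Lemma~\ref{lemma:Pksevolution} (and its one-set variant~\eqref{Pksevolution2}), and the commutation/simultaneous-diagonalization plus marginalization over refined outcomes compatible with $L$ reproduces the paper's Proposition~\ref{prop:sequential3} and Section~\ref{sec:detection}.2. One framing remark: calling this a ``reduction to Proposition~\ref{prop:sequential1}'' undersells what you actually do --- Proposition~\ref{prop:sequential1} only gives the marginal law of each single bit, and would not give you the joint distribution you need; you do not in fact invoke it as a black box but re-derive the joint law via the transfer lemma and commutation, which is the content of the paper's Proposition~\ref{prop:sequential3}, so the dependency is really Lemma~\ref{lemma:Pksevolution} rather than Proposition~\ref{prop:sequential1}. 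For the sequential process you gesture at Lemma~\ref{lemma:ink}-type bookkeeping (composing flat Born and collapse rules into a product of commuting projections); that step is genuinely needed and you correctly acknowledge it, though you leave it ``routine'' rather than carrying it out.
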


Proposition~\ref{prop:sequential2} can be regarded as a statement of the Born rule on triangular surfaces. It follows from Proposition~\ref{prop:sequential3}, which is proven in Section~\ref{sec:detection}.

\subsection{Main Result}
\label{sec:intromainresult}

Before we elucidate the result, let us briefly introduce some more terminology.

\begin{definition}
Let $ \Sigma, \Sigma' $ be Cauchy surfaces and $ A \subseteq \Sigma $. We 
then define the {\bf grown set} of $ A $ in $ \Sigma' $ as (see Figure~\ref{fig:GrSr})
\begin{equation}
	\Gr(A,\Sigma') = [\future(A) \cup \past(A)] \cap \Sigma'.
\label{eq:Gr}
\end{equation}
Similarly, we define the {\bf shrunk set} of $ A $ in $ \Sigma' $ as:
\begin{equation}
	\Sr(A,\Sigma') = \{x' \in \Sigma' : \Gr( \{x'\}, \Sigma) \subseteq A \}.
\label{eq:Sr}
\end{equation}
\end{definition}

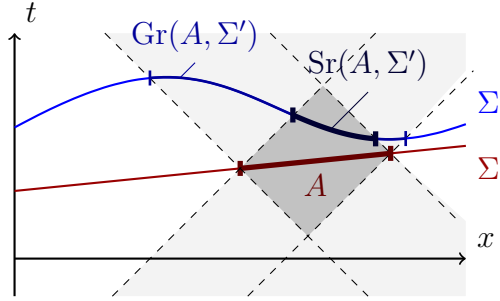
\begin{figure}[hbt]
	\centering
    \scalebox{1.0}{\begin{tikzpicture}[declare function = {
	f(\x) = 0.05*((\x-3.5)^3 - 9*(\x-3.5))*(1/(1+0.1*(\x-3.5)^2)) + 2;}]

\def\Na{30}
\def\xmax{6}
\def\step{\xmax / \Na}

\fill[opacity = 0.05] (1.2,3) -- (6,3) -- (6,2.5) -- (5,1.4) -- (6,0.5) -- (6,-0.5) -- (1.3,-0.5) -- (3,1.2) -- cycle;
\fill[opacity = 0.2] (3,1.2) -- (4.1,2.3) -- (5,1.4) -- (3.9,0.3) -- cycle ;

\draw[thick,->] (0,-0.5) -- (0,3) node[anchor = south west] {$ t $};
\draw[thick,->] (0,0) -- (6,0) node[anchor = south west] {$ x $};

\draw[thick, red!60!black] (0,0.9) -- (6,1.5) node[anchor = north west] {$ \Sigma $};
\draw[line width = 2, red!40!black] (3,1.2) --node[anchor = north] {$ A $} (5,1.4) ;
\draw[line width = 2, red!40!black] (3,1.1) -- ++ (0,0.2);
\draw[line width = 2, red!40!black] (5,1.3) -- ++ (0,0.2);

\draw[dashed] (1.3,-0.5) -- ++(3.5,3.5);
\draw[dashed] (3.1,-0.5) -- ++(3,3);
\draw[dashed] (1.2,3) -- ++(3.5,-3.5);
\draw[dashed] (3.4,3) -- ++(2.5,-2.5);

\draw[domain=0:6, samples = 20, smooth, thick, blue] plot({\x}, {f(\x)}) node[anchor = south west] {$ \Sigma' $};
\draw[domain=1.8:5.2, samples = 20, smooth, line width = 1, blue!60!black] plot({\x}, {f(\x)});
\draw[line width = 1, blue!60!black] (1.8,{f(1.8)-0.1} ) -- ++ (0,0.2);
\draw[line width = 1, blue!60!black] (5.2,{f(5.2)-0.1} ) -- ++ (0,0.2);
\draw[blue!60!black] (2.2, {f(2.2)} ) -- ++(0.2,0.2) node[anchor = south] {Gr$ (A,\Sigma') $};

\draw[domain=3.7:4.8, samples = 20, smooth, line width = 2, blue!20!black] plot({\x}, {f(\x)});
\draw[line width = 2, blue!20!black] (3.7,{f(3.7)-0.1} ) -- ++ (0,0.2);
\draw[line width = 2, blue!20!black] (4.8,{f(4.8)-0.1} ) -- ++ (0,0.2);
\draw[blue!20!black] (4.2, {f(4.2)} ) -- ++(0.5,0.5) node[anchor = south] {Sr$ (A,\Sigma') $};

\end{tikzpicture}}
	\captionof{figure}{Grown and shrunk sets of $A\subset \Sigma$. Color online.}
    \label{fig:GrSr}
\end{figure}

The following aspect of our result requires some explanation: once we have a triangular surface $\Upsilon$ approximating a given Cauchy surface $\Sigma$, and once we are given an admissible partition $\sP=(P_1,\ldots,P_r)$ on $\Sigma$, we want to approximate the sets $P_\ell\subset \Sigma$ 
by sets $B_\ell$ in $\Upsilon$. One may think of two natural possibilities of defining $B_\ell$: (i) project $P_\ell$ downwards along the direction of the $x^0$ axis of a chosen Lorentz frame; or (ii) take $B_\ell=\Sr(P_\ell,\Upsilon)$, the smallest set on $\Upsilon$ that in some sense corresponds to $P_\ell$. Our result holds in both variants; we formulate it in variant (i) (see Remark~\ref{rem:Bnell} in Section~\ref{sec:proofthm1} about (ii)). That is, choose a Lorentz frame and let
\be\label{pidef}
\pi: \RRR^4 \to \RRR^3, \qquad \pi(x^0,x^1,x^2,x^3) := (x^1,x^2,x^3)
\ee
be the projection to the space coordinates. It is known \cite[p.~417]{ON:1983} that the restriction $\pi_\Sigma$ of the projection $\pi$ to $\Sigma$ is a homeomorphism $\Sigma\to\RRR^3$; thus, $\pi_{\Sigma}^\Upsilon:=\pi^{-1}_\Upsilon \circ \pi_\Sigma$ is a homeomorphism $\Sigma\to\Upsilon$. We set
\be
B_\ell := \pi_{\Sigma}^\Upsilon (P_\ell).
\ee 
Of course, since we prove that the limiting probability distribution on $\Gamma(\Sigma)$ is given by the curved Born distribution, the limiting probability distribution is independent of the choice of Lorentz frame used 
for defining $\pi_\Sigma^\Upsilon$.

We can now state our main result.

\begin{theorem}\label{thm:1}
Let $\Sigma$ be a Cauchy surface in Minkowski space-time $\MMM$ and $(\Upsilon_n)_{n\in\NNN}$ a sequence of triangular Cauchy surfaces that converges increasingly and uniformly to $\Sigma$. Let $\sE=(\Hilbert_\circ, P_\circ, U_\circ^\circ)$ be a hypersurface evolution satisfying propagation locality and $\Psi_0\in\Hilbert_{\Sigma_0}$ with $\|\Psi_0\|=1$ for some $\Sigma_0$ in the past of $\Sigma$. Then for any admissible partition $\sP$ of $\Sigma$, $\sB_n:=\bigl(\pi_\Sigma^{\Upsilon_n}(P_1),\ldots,\pi_\Sigma^{\Upsilon_n}(P_r)\bigr)$ is an admissible partition of $\Upsilon_n$, and
\be
\lim_{n\to\infty} \Bigl\| P_{\Upsilon_n}(M_{\sB_n}(L)) \, U_{\Sigma_0}^{\Upsilon_n}\, \Psi_0\Bigr\|^2 = \Bigl\| P_\Sigma(M_{\sP}(L)) \, U_{\Sigma_0}^\Sigma \, \Psi_0\Bigr\|^2
\ee
for all $L\in\{0,1\}^r$.
\end{theorem}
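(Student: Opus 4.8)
The plan is to prove the theorem by combining three ingredients: the Born rule on triangular surfaces (Proposition~\ref{prop:sequential2}), which identifies each probability $\|P_{\Upsilon_n}(M_{\sB_n}(L))\,U_{\Sigma_0}^{\Upsilon_n}\Psi_0\|^2$ with the curved Born distribution on $\Upsilon_n$; the approximation statements (Propositions~\ref{prop:approx} and~\ref{prop:uniform}) that control how $\Upsilon_n$ sits relative to $\Sigma$; and propagation locality (PL), which links the evolved states $U_{\Sigma_0}^{\Upsilon_n}\Psi_0$ and $U_{\Sigma_0}^\Sigma\Psi_0$ via $U_{\Upsilon_n}^\Sigma$ in a way that respects the region decomposition. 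First I would verify the elementary claim that $\sB_n$ is an admissible partition of $\Upsilon_n$: since $\pi_\Sigma^{\Upsilon_n}=\pi_{\Upsilon_n}^{-1}\circ\pi_\Sigma$ is a homeomorphism, disjointness and boundedness transfer (boundedness using that $\Upsilon_n\to\Sigma$ uniformly, so the $x^0$-coordinates stay in a bounded range over the bounded sets $\pi(P_\ell)$), and the boundary having vanishing $3$-volume transfers because $\pi_\Sigma$ and $\pi_{\Upsilon_n}$ are bi-Lipschitz on compact sets (the gradients of $f$ and $f_n$ are bounded by the spacelike condition).

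\medskip

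The core of the argument is the following reduction. Using $U_{\Sigma_0}^\Sigma=U_{\Upsilon_n}^\Sigma U_{\Sigma_0}^{\Upsilon_n}$ and unitarity of $U_{\Upsilon_n}^\Sigma$, write
\be
\Bigl\|P_\Sigma(M_{\sP}(L))\,U_{\Sigma_0}^\Sigma\Psi_0\Bigr\|^2 = \Bigl\|\bigl(U_{\Upsilon_n}^\Sigma\bigr)^{-1}P_\Sigma(M_{\sP}(L))\,U_{\Upsilon_n}^\Sigma\,U_{\Sigma_0}^{\Upsilon_n}\Psi_0\Bigr\|^2\,,
\ee
so that, setting $\Psi_n:=U_{\Sigma_0}^{\Upsilon_n}\Psi_0$ (a unit vector in $\Hilbert_{\Upsilon_n}$), the claim becomes
\be
\lim_{n\to\infty}\Bigl(\bigl\|P_{\Upsilon_n}(M_{\sB_n}(L))\,\Psi_n\bigr\|^2 - \bigl\|\widetilde P_n(L)\,\Psi_n\bigr\|^2\Bigr)=0\,,
\ee
where $\widetilde P_n(L):=(U_{\Upsilon_n}^\Sigma)^{-1}P_\Sigma(M_{\sP}(L))\,U_{\Upsilon_n}^\Sigma$ is the pullback to $\Hilbert_{\Upsilon_n}$ of the curved-Born projection on $\Sigma$. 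Since $\|A\Psi\|^2-\|B\Psi\|^2$ is controlled by $\|(A-B)\Psi\|$ for projections (via $\|A\Psi\|^2-\|B\Psi\|^2=\langle\Psi,(A-B)\Psi\rangle$ and $|\langle\Psi,(A-B)\Psi\rangle|\le\|(A-B)\Psi\|$, or a similar bound), it suffices to show $\|(P_{\Upsilon_n}(M_{\sB_n}(L))-\widetilde P_n(L))\Psi_n\|\to 0$. Here propagation locality enters: PL should say that $U_{\Upsilon_n}^\Sigma$ maps the localization projection for a region $A\subseteq\Upsilon_n$ to something that only differs from the localization projection for $\Gr(A,\Sigma)$ (or is sandwiched between those for $\Sr$ and $\Gr$) — so $\widetilde P_n(L)$ is, up to the set-theoretic slack coming from $\Gr/\Sr$, a localization projection on $\Upsilon_n$ for a region close to $\pi_\Sigma^{\Upsilon_n}(P_\ell)$. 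The remaining work is then to show that this set-theoretic slack shrinks: as $\Upsilon_n\to\Sigma$ uniformly and increasingly, the symmetric difference between $\pi_\Sigma^{\Upsilon_n}(P_\ell)$ and $\Sr(\pi_\Sigma^{\Upsilon_n}(P_\ell),\Upsilon_n)$-type sets, measured through the state, tends to zero. Concretely, the discrepancy region is contained in a neighborhood of $\pi_\Sigma^{\Upsilon_n}(\partial P_\ell)$ of spatial ``width'' $O(\varepsilon_n)$, where $\varepsilon_n\to 0$ is the uniform-convergence parameter; its Born measure in the state $\Psi_n$ goes to $0$ provided the curved Born measure of such shrinking neighborhoods of $\partial P_\ell$ goes to $0$, which holds because $\partial P_\ell$ has $3$-volume zero and $\|P_{\Upsilon_n}(\cdot)\Psi_n\|^2$ is, via Proposition~\ref{prop:sequential2} and unitarity, a genuine probability measure that — pushed forward consistently — is dominated in the limit by the finite curved Born measure on $\Sigma$. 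I would make this last point rigorous by a dominated-convergence / tightness argument: pull everything back to $\Sigma$ via the homeomorphisms $\pi$, observe that the measures $\mu_n(\cdot):=\|P_{\Upsilon_n}(\pi_\Sigma^{\Upsilon_n}(\cdot))\Psi_n\|^2$ on subsets of $\Sigma$ converge weakly (or setwise on a generating algebra) to $\mu_\infty(\cdot):=\|P_\Sigma(\cdot)U_{\Sigma_0}^\Sigma\Psi_0\|^2$ — this convergence on nice sets is exactly what PL plus uniform convergence give — and then $\mu_n$ of an $\varepsilon_n$-neighborhood of a null boundary tends to $\mu_\infty$ of the boundary, which is $0$.

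\medskip

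The main obstacle I anticipate is the careful interplay between (a) the multi-particle / Fock structure of $M_{\sP}(L)$ — which is built from the $\emptyset(\cdot)$, $\exists(\cdot)$ sets of several regions simultaneously, so the ``localization projections'' are really products (or intersections) of commuting projections, and one must track all $r$ regions at once and control each boundary neighborhood — and (b) the fact that propagation locality is an \emph{inclusion} statement ($\Sr(A,\Upsilon_n)\subseteq$ effective support $\subseteq\Gr(A,\Upsilon_n)$), not an equality, so the pullback $\widetilde P_n(L)$ is only squeezed between two localization projections rather than equal to one. The resolution is to sandwich: show both $\|P_{\Upsilon_n}(M_{\sB_n^-}(L))\Psi_n\|^2$ and $\|P_{\Upsilon_n}(M_{\sB_n^+}(L))\Psi_n\|^2$ — with $\sB_n^\mp$ built from shrunk/grown versions of the $B_\ell$ — converge to the same limit $\|P_\Sigma(M_{\sP}(L))U_{\Sigma_0}^\Sigma\Psi_0\|^2$, because the outer and inner approximations differ only by $\varepsilon_n$-neighborhoods of the null boundaries $\partial P_\ell$, and then squeeze. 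One must also be slightly careful that $M_{\sP}$ for the $(r{+}1)$-tuple involves the complement $P_{r+1}$, whose boundary is $\bigcup_{\ell\le r}\partial P_\ell$, also null, so it causes no extra trouble. I do not expect the increasing-ness of $(\Upsilon_n)$ to be essential for the limit itself — it is what makes $\Gr$ and $\Sr$ well-behaved and ensures the neighborhoods are one-sided in time — but it streamlines the estimates, so I would use it freely.
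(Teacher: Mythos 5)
Your strategy has the right general shape --- a squeeze argument exploiting that propagation locality yields \emph{inclusions} rather than equalities, combined with the fact that the $\partial P_\ell$ have measure zero --- but the way you propose to close the squeeze leaves a genuine gap. You set up the sandwich with shrunk/grown regions \emph{on} $\Upsilon_n$ and then appeal to weak (or setwise) convergence of the pushed-forward measures $\mu_n(\cdot)=\|P_{\Upsilon_n}(\pi_\Sigma^{\Upsilon_n}(\cdot))\Psi_n\|^2$ to $\mu_\infty(\cdot)=\|P_\Sigma(\cdot)U_{\Sigma_0}^\Sigma\Psi_0\|^2$. But that convergence is essentially the content of the theorem; invoking it as an ingredient is circular, and without it you are comparing projections acting in a Hilbert space $\Hilbert_{\Upsilon_n}$ that itself changes with $n$, so you cannot apply the continuity-of-measure argument you want. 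Likewise, your first reduction --- bounding $\|(P_{\Upsilon_n}(M_{\sB_n}(L))-\widetilde P_n(L))\Psi_n\|$ --- does not work directly, since (PL) gives one-sided operator inequalities, not norm estimates on differences; you recognize this yourself, but the pivoted version inherits the circularity. The paper's fix is to keep the sandwich entirely on the fixed surface $\Sigma$: define $\widehat{C}_{n\ell}:=\Sr(B_{n\ell},\Sigma)$, $\widecheck{C}_{n\ell}:=\Gr(B_{n\ell},\Sigma)$ (both subsets of $\Sigma$), build $\widehat{M}_{n\Sigma}(L),\widecheck{M}_{n\Sigma}(L)\subseteq\Gamma(\Sigma)$ from them, and use (PL) \emph{twice}, in both directions, to prove the operator inequality
\begin{equation}
P_\Sigma(\widehat{M}_{n\Sigma}(L)) \;\le\; U_{\Upsilon_n}^{\Sigma}\,P_{\Upsilon_n}(M_{\sB_n}(L))\,U_{\Sigma}^{\Upsilon_n} \;\le\; P_\Sigma(\widecheck{M}_{n\Sigma}(L)),
\end{equation}
while trivially $P_\Sigma(\widehat{M}_{n\Sigma}(L))\le P_\Sigma(M_{\sP}(L))\le P_\Sigma(\widecheck{M}_{n\Sigma}(L))$. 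Now \emph{both} quantities you want to compare are squeezed between the \emph{same} bounds in the \emph{same} Hilbert space $\Hilbert_\Sigma$, and closing the squeeze reduces to a one-space continuity-of-measure argument: $\widecheck{C}_{n\ell}$ is decreasing, $\widehat{C}_{n\ell}$ is increasing, $\bigcap_n(\widecheck{C}_{n\ell}\setminus\widehat{C}_{n\ell})\subseteq\partial P_\ell$, and since $\mu_\Sigma(\partial P_\ell)=0$ (hence $P_\Sigma(\exists(\partial P_\ell))=0$ by absolute continuity of $P_\Sigma$ w.r.t.~$\mu_{\Gamma(\Sigma)}$), the finite measure $\langle\Psi_\Sigma|P_\Sigma(\cdot)|\Psi_\Sigma\rangle$ of the decreasing sets $\widecheck{M}_{n\Sigma}(L)\setminus\widehat{M}_{n\Sigma}(L)$ tends to $0$.

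A second, smaller issue: your argument for admissibility of $\sB_n$ via ``bi-Lipschitz on compact sets'' overreaches, because a Cauchy surface $\Sigma$ may have lightlike tangent planes (the graphing function $f$ is only $1$-Lipschitz, not uniformly strictly spacelike), so $\pi_\Sigma^{-1}$ need not be bi-Lipschitz with respect to the induced metric on $\Sigma$. The paper argues instead that $\mu_\Sigma$ has a nonzero density with respect to $\lambda\circ\pi_\Sigma$ at points with a spacelike tangent plane (which exist almost everywhere by Rademacher), so that $\mu_\Sigma(\partial P_\ell)=0$ gives $\lambda(\pi(\partial P_\ell))=0$ and hence $\mu_{\Upsilon_n}(\partial B_{n\ell})=0$; this is the step you want, not a global bi-Lipschitz claim.
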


Together with Proposition~\ref{prop:sequential2}, we obtain:

\begin{cor}\label{cor:main}
Assume the hypotheses of Theorem~\ref{thm:1} together with the flat Born rule, the flat collapse rule, and interaction locality.
Define the detection probabilities for $\sP$ on $\Sigma$ as the limit of the detection probabilities for $\sB_n$ on $\Upsilon_n$ and the latter through either the sequential or the parallel detection process. Then the detection probabilities for $\sP$ on $\Sigma$ are given by the curved Born rule, 
$\bigl\| P_\Sigma(M_{\sP}(L)) \, \Psi_\Sigma \bigr\|^2$ for all $L\in\{0,1\}^r$.
\end{cor}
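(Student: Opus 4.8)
The plan is to obtain Corollary~\ref{cor:main} by chaining together the two substantive results already established in the paper: Proposition~\ref{prop:sequential2}, which identifies the sequential and parallel detection distributions on a \emph{triangular} surface with the curved Born distribution there, and Theorem~\ref{thm:1}, which says that the curved Born numbers on the approximating triangular surfaces $\Upsilon_n$ converge to the curved Born number on $\Sigma$. Essentially no new argument is needed; the content of the corollary lives entirely in those two statements.

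First I would fix an admissible partition $\sP=(P_1,\ldots,P_r)$ of $\Sigma$ and an outcome $L\in\{0,1\}^r$, and write $\Psi_{\Upsilon_n}:=U_{\Sigma_0}^{\Upsilon_n}\Psi_0$, which is normalized since $U_{\Sigma_0}^{\Upsilon_n}$ is unitary and $\|\Psi_0\|=1$. By the first assertion of Theorem~\ref{thm:1}, $\sB_n=\bigl(\pi_\Sigma^{\Upsilon_n}(P_1),\ldots,\pi_\Sigma^{\Upsilon_n}(P_r)\bigr)$ is an admissible partition of the triangular Cauchy surface $\Upsilon_n$ in the sense of Definition~\ref{def:admissible}. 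Hence Proposition~\ref{prop:sequential2}, applied to $\Upsilon_n$, the partition $\sB_n$, and the state $\Psi_{\Upsilon_n}$ — whose hypotheses (interaction locality, the flat Born rule, the flat collapse rule) are precisely those assumed in the corollary — yields that the sequential detection process in any order of the tetrahedra of $\Upsilon_n$ and the parallel detection process both produce the joint distribution
\[
\Prob^{\Psi_{\Upsilon_n}}_{\sB_n}(L)=\bigl\| P_{\Upsilon_n}(M_{\sB_n}(L))\,U_{\Sigma_0}^{\Upsilon_n}\Psi_0 \bigr\|^2 .
\]
In particular the detection probabilities for $\sB_n$ on $\Upsilon_n$ are well defined and independent of which of the two schemes, and which ordering, is used, so the object whose limit the corollary takes is unambiguous.

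By the definition adopted in the statement of the corollary, the detection probability for $\sP$ on $\Sigma$ is the $n\to\infty$ limit of the right-hand side above. Theorem~\ref{thm:1} guarantees that this limit exists and equals $\bigl\| P_\Sigma(M_{\sP}(L))\,U_{\Sigma_0}^{\Sigma}\Psi_0 \bigr\|^2=\bigl\| P_\Sigma(M_{\sP}(L))\,\Psi_\Sigma \bigr\|^2$, which is exactly the curved Born rule. It remains only to observe that what we obtain is genuinely a probability distribution on the finite set $\{0,1\}^r$: each $\Prob^{\Psi_{\Upsilon_n}}_{\sB_n}$ is one and a pointwise limit of probability distributions on a finite set is again one; equivalently, $\sum_{L}\|P_\Sigma(M_{\sP}(L))\Psi_\Sigma\|^2=\|\Psi_\Sigma\|^2=1$, since the sets $M_{\sP}(L)$, $L\in\{0,1\}^r$, partition $\Gamma(\Sigma)$ and $P_\Sigma$ is a PVM.

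The only points needing care here are bookkeeping ones, and they are minor: verifying that the state fed into Proposition~\ref{prop:sequential2} is the correctly normalized evolved state $U_{\Sigma_0}^{\Upsilon_n}\Psi_0$, and that $\sB_n$ really satisfies Definition~\ref{def:admissible} on $\Upsilon_n$ — both of which are supplied by Theorem~\ref{thm:1}. There is no genuine obstacle at this stage; all of the difficulty has been front-loaded into Proposition~\ref{prop:sequential2} (the triangular-surface Born rule, which rests on the commutation argument under interaction locality) and into Theorem~\ref{thm:1} (the approximation argument, which rests on propagation locality together with Propositions~\ref{prop:approx} and~\ref{prop:uniform}).
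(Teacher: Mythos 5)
Your proposal is correct and follows essentially the same route as the paper, which derives Corollary~\ref{cor:main} directly by combining Proposition~\ref{prop:sequential2} (detection probabilities on each $\Upsilon_n$ equal the curved Born distribution $\|P_{\Upsilon_n}(M_{\sB_n}(L))\,U_{\Sigma_0}^{\Upsilon_n}\Psi_0\|^2$, independently of scheme and ordering) with Theorem~\ref{thm:1} (these quantities converge to $\|P_\Sigma(M_{\sP}(L))\,\Psi_\Sigma\|^2$). Your added checks that $\sB_n$ is admissible, that the evolved state is normalized, and that the limit is again a probability distribution are sound but were already covered by, or immediate from, the cited results.
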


The proof of Theorem~\ref{thm:1} (see Section~\ref{sec:proofthm1}) makes no special use of dimension $3+1$ and applies equally in dimension $d+1$ for any $d\in\NNN$; tetrahedra then need to be replaced by $d$-dimensional simplices.

\paragraph{Remarks.}

\begin{enumerate}
\setcounter{enumi}{\theremarks}
\item {\it Shrunk set $\Sr(A,\Sigma')$.}\label{rem:shrunkset} Definition \eqref{eq:Sr} is equivalent to saying that the shrunk set is the intersection of $\Sigma'$ and the domain of dependence of $\Sigma$

\item {\it Uniqueness of the measure on $\Gamma(\Sigma)$.}\label{rem:unique} It was shown in Proposition 3 in Section 6 of \cite{LT:2020} that
\emph{if two probability measures $\mu,\mu'$ on $\Gamma(\Sigma)$ agree on 
all detection outcomes, $\mu(M_{\sP}(L))=\mu'(M_{\sP}(L))$ for every $L\in\{0,1\}^r$ and every admissible partition $\sP$ of $\Sigma$, then $\mu=\mu'$.} Thus, the whole $|\Psi_\Sigma|^2$ distribution is uniquely determined by the detection probabilities. 

In fact, a probability measure $\mu$ on $\Gamma(\Sigma)$ is already uniquely determined by the values $\mu(\emptyset(A))$, where $A$ runs through those subsets of $\Sigma$ whose projection $\pi(A)$ to $\RRR^3$ is a union of finitely many open balls (see the proof of Proposition 3 in \cite{LT:2020}). This fact might suggest that, in order to prove the curved Born rule, it would have been sufficient to prove the statement of Theorem~\ref{thm:1} only for a single detector (i.e., for partitions with $r=1$ consisting of $P_1=A$ and $P_{r+1}=\Sigma \setminus A$) in a region $A$ 
of the type described. However, we prove the stronger statement for arbitrary $r$ because it is not obvious that the detection probabilities for arbitrary $r$ fit together to form a measure on $\Gamma(\Sigma)$ (in other 
words, that detection probabilities for $r>1$ will agree with the Born distribution, given that detection probabilities for $r=1$ do).

\item {\it Curved collapse rule.}\label{rem:collapse} One can also consider a {\bf curved collapse rule}: {\it Suppose that $r$ detectors are placed along $\Sigma$, that each detector (say the $\ell$-th) only measures whether there is a particle in the region $P_\ell$, where $\sP=(P_1,\ldots,P_r)$ is an admissible partition, and that each detector acts immediately (i.e., is infinitely fast). If the outcome was $L=(L_1,\ldots,L_r)\in\{0,1\}^r$, then the wave function immediately after detection is the collapsed wave function}
\be\label{collapse}
\Psi'_\Sigma = \frac{P_\Sigma(M_{\sP}(L)) \, \Psi_\Sigma}{\|P_\Sigma(M_{\sP}(L)) \, \Psi_\Sigma\|}\,.
\ee

There is a sense in which the curved collapse rule also follows from our result and a sense in which it does not. To begin with the latter, our justification of the Born rule on triangular surfaces was based on the idea 
that on each tetrahedron $\Delta_k$, we apply a detector to $B_{k\ell}=\Delta_k\cap B_\ell$ and deduce from the outcomes whether a particle has been detected anywhere in $B_\ell$. This detection process measures more than whether there is a particle in $B_\ell$, as it also measures which of the $B_{k\ell}$ contain particles; as a consequence, this detection process would collapse $\Psi$ more narrowly than \eqref{collapse}. 

However, if we assume that on triangular surfaces $\Upsilon$ we can have detectors that \emph{only} measure whether there is a particle in $B_\ell$ for an admissible partition $\sB=(B_1,\ldots,B_r)$, so that the collapse rule \eqref{collapse} holds upon replacing $\Sigma \to \Upsilon$ and $\sP\to\sB$, then sufficient approximation of an arbitrary Cauchy surface 
$\Sigma$ by triangular surfaces leads to a collapsed wave function arbitrarily close to \eqref{collapse}. Indeed, we have that (see Section~\ref{sec:proofthm1} for the proof)

\begin{cor}\label{cor:strong}
Under the hypotheses of Theorem~\ref{thm:1},
\be
U_{\Upsilon_n}^\Sigma \, P_{\Upsilon_n}(M_{\sB_n}(L)) \, U_\Sigma^{\Upsilon_n} \xrightarrow{n\to\infty} P_\Sigma(M_{\sP}(L)) ~~\text{strongly.}
\ee
\end{cor}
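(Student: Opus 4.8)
The plan is to upgrade the convergence of numbers established in Theorem~\ref{thm:1} to strong operator convergence of the conjugated projections, exploiting the special structure of the limit: $P_\Sigma(M_\sP(L))$ is itself a projection. First I would introduce the abbreviation $Q_n := U_{\Upsilon_n}^\Sigma \, P_{\Upsilon_n}(M_{\sB_n}(L)) \, U_\Sigma^{\Upsilon_n}$; since $U$'s are unitary and $P_{\Upsilon_n}(\cdot)$ is a projection, each $Q_n$ is an orthogonal projection on $\Hilbert_\Sigma$, and likewise $Q := P_\Sigma(M_\sP(L))$. For a fixed normalized $\Psi_0 \in \Hilbert_{\Sigma_0}$ in the past of $\Sigma$, write $\Psi_\Sigma = U_{\Sigma_0}^\Sigma \Psi_0$; Theorem~\ref{thm:1} says exactly that $\|Q_n \Psi_\Sigma\|^2 \to \|Q\Psi_\Sigma\|^2$, i.e.\ $\langle \Psi_\Sigma | Q_n | \Psi_\Sigma\rangle \to \langle \Psi_\Sigma | Q | \Psi_\Sigma\rangle$ because $Q_n, Q$ are self-adjoint idempotents.

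The key step is the standard Hilbert-space fact that if $Q_n, Q$ are orthogonal projections with $\langle \Psi | Q_n \Psi\rangle \to \langle \Psi | Q \Psi\rangle$ for a given vector $\Psi$, then $Q_n\Psi \to Q\Psi$ in norm. Indeed, $\|Q_n\Psi - Q\Psi\|^2 = \|Q_n\Psi\|^2 + \|Q\Psi\|^2 - 2\Re\langle Q_n\Psi | Q\Psi\rangle$, and one must control the cross term. I would do this by applying the scalar convergence to the projection $Q_n$ evaluated not just at $\Psi_\Sigma$ but at $Q\Psi_\Sigma$ and at $(\mathbf 1 - Q)\Psi_\Sigma$ as well — which is legitimate provided $Q\Psi_\Sigma$ and $(\mathbf 1-Q)\Psi_\Sigma$ are themselves of the form $U_{\Sigma_0}^\Sigma \Psi_0'$ for some state $\Psi_0'$ in the past of $\Sigma$. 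Since $U_{\Sigma_0}^\Sigma$ is a unitary isomorphism, pulling these vectors back by $U_\Sigma^{\Sigma_0}$ gives the required $\Psi_0'$ (after normalization, the normalization constants being harmless because Theorem~\ref{thm:1} is stated for unit vectors but extends by bilinearity). From $\langle Q\Psi_\Sigma | Q_n | Q\Psi_\Sigma\rangle \to \|Q\Psi_\Sigma\|^2$ and $\langle (\mathbf 1-Q)\Psi_\Sigma | Q_n | (\mathbf 1-Q)\Psi_\Sigma\rangle \to 0$, together with $\langle \Psi_\Sigma | Q_n | \Psi_\Sigma\rangle \to \|Q\Psi_\Sigma\|^2$, expanding $\Psi_\Sigma = Q\Psi_\Sigma + (\mathbf 1-Q)\Psi_\Sigma$ forces the cross terms $\Re\langle Q\Psi_\Sigma | Q_n | (\mathbf 1-Q)\Psi_\Sigma\rangle \to 0$; combined with $\|Q_n Q\Psi_\Sigma\| \to \|Q\Psi_\Sigma\|$ and $\|Q_n(\mathbf 1-Q)\Psi_\Sigma\| \to 0$, the Cauchy–Schwarz bound on $\langle Q_n\Psi_\Sigma | Q\Psi_\Sigma\rangle$ pins down the limit, yielding $\|Q_n\Psi_\Sigma - Q\Psi_\Sigma\| \to 0$.

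Finally, to promote convergence on the dense set of vectors $\Psi_\Sigma = U_{\Sigma_0}^\Sigma\Psi_0$ with $\Sigma_0$ in the past of $\Sigma$ to strong convergence on all of $\Hilbert_\Sigma$, I would use that such vectors are dense (every vector in $\Hilbert_\Sigma$ arises this way, taking $\Sigma_0$ arbitrarily far in the past and $\Psi_0 = U_\Sigma^{\Sigma_0}\Psi_\Sigma$, so density is in fact trivial) together with the uniform bound $\|Q_n\| \le 1$ and $\|Q\| \le 1$, and the standard $\varepsilon/3$ argument: $\|Q_n\Phi - Q\Phi\| \le \|Q_n(\Phi - \Psi)\| + \|Q_n\Psi - Q\Psi\| + \|Q(\Psi-\Phi)\| \le 2\|\Phi - \Psi\| + \|Q_n\Psi - Q\Psi\|$. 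The main obstacle, such as it is, is the cross-term bookkeeping in the second paragraph — making sure that applying Theorem~\ref{thm:1} to the auxiliary vectors $Q\Psi_\Sigma$ and $(\mathbf 1-Q)\Psi_\Sigma$ is legitimate (it is, by unitarity, but this is the point where one must be careful that these vectors lie in the allowed class); once that is in hand, everything is the routine "weak convergence plus norm convergence implies strong convergence" mechanism specialized to projections.
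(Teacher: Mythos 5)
Your proof is correct and follows essentially the same route as the paper's: identify that $Q_n := U_{\Upsilon_n}^\Sigma P_{\Upsilon_n}(M_{\sB_n}(L)) U_\Sigma^{\Upsilon_n}$ and $Q := P_\Sigma(M_\sP(L))$ are orthogonal projections, observe that Theorem~\ref{thm:1} gives the diagonal (quadratic-form) convergence on every vector (since every $\Psi_\Sigma\in\Hilbert_\Sigma$ is $U_{\Sigma_0}^\Sigma\Psi_0$ for some $\Psi_0$), and invoke the standard fact that weak convergence of projections to a projection implies strong convergence. The only difference is internal to that standard lemma: the paper's footnote handles the cross term via the polarization identity, while you decompose $\Psi_\Sigma = Q\Psi_\Sigma + (\mathbf 1-Q)\Psi_\Sigma$ and apply the diagonal convergence separately to each piece — a marginally more self-contained but equivalent argument; also, as you yourself note, the closing $\varepsilon/3$ density step is vacuous here (the ``dense'' set is all of $\Hilbert_\Sigma$), so it could simply be dropped.
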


\item\label{rem:MeasurementProcess} {\it Other observables.} As the curved Born rule shows, the PVM $P_\Sigma$ can be regarded as the totality of position observables on $\Sigma$. What about other observables? In a sense, all other observables are indirectly determined by the position observable. As Bell \cite[p.~166]{Bell87} wrote:
\begin{quote}
[I]n physics, the only observations we must consider are position observation, if only the positions of instrument pointers. [\ldots] If you make axioms, rather than definitions and theorems, about the `measurements' of 
anything else, then you commit redundancy and risk inconsistency.
\end{quote}
A detailed description of how self-adjoint obervables arise from the Hamiltonian of an experiment, the quantum state of the measuring apparatus, and the position observable (of its pointer), can be found in \cite[Sec.~2.7]{DGZ:2004}. A conclusion we draw is that specifying a quantum theory's hypersurface evolution is an informationally complete description. 

As another conclusion, the PVM $P_\Sigma$ serves not only for representing detectors. When we want to argue that certain experiments are quantum measurements of certain observables, we may use it to link the quantum state with macro-configurations (say, of pointer positions), and in fact to obtain probabilities for pointer positions.

A related but quite different question is how the algebras of local operators common in algebraic QFT (such as smeared field operator algebras or Weyl algebras) are related to $P_\Sigma$. It would be a topic of interest 
for future work to make this relation explicit.

Coming back to the Bell quote, one may also note that for the same reason, making the curved Born rule an axiom in addition to the flat Born rule means to commit redundancy and to risk inconsistency. That is why we have 
made the curved Born rule a theorem.

Of course, we have still committed a little bit of the redundancy that Bell talked about by assuming the Born and collapse rules on all spacelike hypersurfaces while it suffices to assume them on horizontal hypersurfaces \cite{LT:2020}.

\item {\it Objections.} Some authors \cite{TV99} have criticized the very 
idea of evolving states from one Cauchy surface to another on the grounds 
that such an evolution cannot be unitarily implemented for the free second-quantized scalar Klein-Gordon field. It seems to us that these difficulties do not invalidate the approach but stem from analogous difficulties with 1-particle Klein-Gordon wave functions, which are known to lack a covariantly-defined timelike probability current 4-vector field that could be used for defining a Lorentz-invariant inner product that makes the time evolution unitary (e.g., \cite{schweber:1961}). In contrast, a hypersurface evolution according to our definition can indeed be defined for the free second-quantized Dirac equation allowing negative energies \cite{Dirac64,Dim82,DM:2016,LT:2020}. 
Other results (\cite[Sec.~1.8]{Thaller}, \cite{Heg:1998,HC:2002}) may raise doubts about propagation locality; on the other hand, these results presuppose positive energy, which we do not require here; moreover, violations of propagation locality would seem to allow for superluminal signaling. Be that as it may, we simply assume here a propagation-local hypersurface evolution as given; further developments of this notion can be of interest for future works.

\item {\it Evolution Between Hyperplanes.} Following \cite[Sec.~8]{LT:2020}, we conjecture that a hypersurface evolution $\sE$ satisfying interaction locality and propagation locality is uniquely determined up to unitary equivalence \cite[Sec.~3.2 Rem.~14]{LT:2020} by its restriction to hyperplanes. We conjecture further that a hypersurface evolution that is in addition Poincar\'e covariant (see Remark~\ref{rem:cov} in Section~\ref{sec:hypersurfaceevolution}) is uniquely determined by its restriction to horizontal hyperplanes $\{x^0=\mathrm{const.}\}$. While we do not have a proof of these statements, a related statement follows from our results: 

{\it Suppose two hypersurface evolutions $ \sE = (\Hilbert_\circ, P_\circ, U_\circ^\circ) $ and $ \tilde{\sE} = (\Hilbert_\circ, P_\circ, \tilde{U}_\circ^\circ) $ use the same Hilbert spaces and PVMs but potentially different 
evolution operators; suppose further that the evolution operators agree on hyperplanes, $U_E^{E'}=\tilde{U}_E^{E'}$ for all spacelike hyperplanes $E,E'$; finally, suppose that both $\sE$ and $\tilde{\sE}$ satisfy interaction locality and propagation locality. Then they yield the same Born distribution on every Cauchy surface $\Sigma$, i.e., for every $\Psi_0\in\Hilbert_{E_0}$ on $E_0=\{x^0=0\}$ and every $S\subseteq \Gamma(\Sigma)$,}
\be\label{EEtildeagree}
\|P_\Sigma(S)\, U_{E_0}^\Sigma\, \Psi_0\|^2 =
\|P_\Sigma(S)\, \tilde{U}_{E_0}^\Sigma\, \Psi_0\|^2 \,.
\ee

Indeed, by Remark~\ref{rem:unique}, \eqref{EEtildeagree} holds for all $S\subseteq \Sigma$ if it holds for all $M_{\sP}(L)$ for all admissible partitions $\sP$ of $\Sigma$. By Theorem~\ref{thm:1}, both sides can be expressed as the limits of detection probabilities on triangular surfaces. Those in turn can be expressed, using the sequential detection process, in terms of $U_E^{E'}$ respectively $\tilde{U}_E^{E'}$ only for hyperplanes $E,E'$, so they are equal.
\end{enumerate}	
\setcounter{remarks}{\theenumi}

\section{Definitions}
\label{sec:definitions}

\subsection{Geometric Notions}

We now begin the more technical part of this paper. 
We consider flat Minkowski spacetime $ \MMM $ in $ 3+1 $ dimensions with metric tensor $ \eta_{\mu \nu} = \diag(1,-1,-1,-1) $. Spacetime points are denoted by $ x = x^\mu = (x^0,\bx) = (x^0,x^1,x^2,x^3) $, the Minkowski square is denoted by $ x^2 = x^\mu x_\mu $, Cauchy surfaces are denoted by $ \Sigma \subset \MMM $. For piecewise 
flat Cauchy surfaces, we reserve the notation $ \Upsilon \subset \MMM $, for flat Cauchy surfaces (spacelike 3-planes), the notation $E\subset \MMM$; $E_0 = \{x^\mu: x^0 = 0\} \cong \RRR^3 $ is the time-zero hyperplane. For a topological space $ X $, we will denote by $\sB(X) $ the corresponding Borel $ \sigma $-algebra. The topology on $ \Sigma $ is that induced by the Euclidean $ \RRR^4 $-norm on $ \MMM $. Restricting the projection $ \pi $ as in \eqref{pidef} to $ \Sigma $, we obtain a homeomorphism $\pi_\Sigma= \pi|_\Sigma : \Sigma \to \RRR^3 $, which can be used to identify $\sB(\Sigma) $ with $\sB(\RRR^3) $: For $ R \subseteq \Sigma $, we have that $ R \in\sB(\Sigma) \; \Leftrightarrow \; \pi(R) \in\sB(\RRR^3) $. By Rademacher's theorem, $ \Sigma $ possesses a tangent plane almost everywhere  \cite[Sec.~3]{LT:2020}. If a tangent plane exists at 
$ x \in \Sigma $, the pullback of $ \eta_{\mu \nu} $ under the embedding $ \Sigma \hookrightarrow \MMM $ is either degenerate or a Riemann 3-metric. This metric can be 
used to define a volume measure $ \mu_\Sigma $ on $ (\Sigma,\mathscr{B}(\Sigma)) $, as well as a volume measure\footnote{One of us claimed in \cite{LT:2020} that the null sets of $\mu_\Sigma$, when projected to $\RRR^3$ with $\pi$, are exactly the null sets of the Lebesgue measure in $\RRR^3$; this is equivalent to saying that the set of points of $\Sigma$ with a lightlike tangent, when projected to $\RRR^3$, is a null set. While we conjecture that this is true, we do not see how to prove it. The statement is neither used in \cite{LT:2020} nor here.} $\mu_{\Gamma(\Sigma)}$ on $(\Gamma(\Sigma),\mathscr{B}(\Gamma(\Sigma)))$. In the configuration space $\Gamma(\Sigma)$, we denote the  $ n $-particle sector by
\begin{equation}
	\Gamma_n(\Sigma) := \{q \subseteq \Sigma : \#q = n \} \subset \Gamma(\Sigma).
\end{equation}
Note that for disjoint sets $ A \cap B = \emptyset $, we have
\begin{equation}
	\Gamma(A \cup B) \cong \Gamma(A) \times \Gamma(B)
\end{equation}
with bijective identification map $ q \mapsto (q \cap A, q \cap B) $.

\begin{definition}\label{def:triangular}
A {\bf triangular surface} is a Cauchy surface $\Upsilon \subset \MMM$ such that
\be
\Upsilon = \bigcup_{k\in\sK} \overline{\Delta_k}\,,
\ee
where $\sK$ is a countably infinite index set, each $\Delta_k$ is a 3-open, non-degenerate, spacelike tetrahedron (i.e., the non-empty 3-interior of the convex hull of $3+1$ points that are mutually spacelike), the $\Delta_k$ are mutually disjoint ($\Delta_{k_1} \cap \Delta_{k_2} = \emptyset$ for $k_1\neq k_2$), and every bounded region $B\subset \Upsilon$ intersects only finitely many $\Delta_k$. 
\end{definition}

\subsection{Hypersurface Evolution}
\label{sec:hypersurfaceevolution}

\begin{definition}\label{def:hypersurfaceevolution}
A \textbf{hypersurface evolution} $\mathscr{E} = ( \Hilbert_\circ, P_\circ, U_{\circ}^{\circ})$  is a collection of
\begin{enumerate}
	\item Hilbert spaces $\Hilbert_\Sigma$ for every Cauchy surface $\Sigma$, equipped with
	\item a PVM $P_\Sigma : \sB(\Gamma(\Sigma)) \rightarrow {\rm Proj}(\Hilbert_\Sigma)$, the set of projections in $\Hilbert_\Sigma$,
	\item unitary isomorphisms $U_{\Sigma}^{\Sigma'}:\Hilbert_\Sigma \to \Hilbert_{\Sigma'}$ (``evolution''), and
	\item a factorization mapping for every $A\subseteq \Sigma$, i.e., with the abbreviation
	\be
	\Hilbert_{\Sigma,A} := \range P_\Sigma(\forall(A)),
	\ee
	(where $\range$ denotes the range), a unitary isomorphism $T_{\Sigma, A}:\Hilbert_\Sigma \to \Hilbert_{\Sigma,A} \otimes \Hilbert_{\Sigma,\Sigma\setminus A}$ (``translation'')
\end{enumerate}
with the following properties:
\begin{enumerate}
	\item[(0)] $U_\Sigma^\Sigma = I_\Sigma$ and $U_{\Sigma'}^{\Sigma''} U_{\Sigma}^{\Sigma'} = U_{\Sigma}^{\Sigma''}$ for all Cauchy surfaces $\Sigma, \Sigma', \Sigma''$.
	\item[(i)] For every $S\subset \Gamma(\Sigma)$ with $\mu_{\Gamma(\Sigma)}(S) = 0$, also $P_\Sigma(S) = 0$.
	\item[(ii)] For every $\Sigma$,  $\dim \range P_\Sigma(\emptyset(\Sigma))=1$. That is, up to a phase, there is a unique \textit{vacuum state} $| \emptyset(\Sigma) \rangle \in \range P_\Sigma(\emptyset(\Sigma))$ with $\bigl\|  |\emptyset(\Sigma) \rangle \bigr\| = 1$.
	\item[(iii)] $T_{\Sigma, \Sigma\setminus A} = \Pi T_{\Sigma, A}$ with $\Pi$ the permutation of two tensor factors
	\item[(iv)] Factorization of the PVM:\footnote{\label{fn:PSigmaA}Note that $P_\Sigma$, restricted to subsets of $\forall (A)$, maps $\Hilbert_{\Sigma, A}$ to itself and in fact defines a PVM on $\Hilbert_{\Sigma, A}$.} 
	For all $A,B \subseteq \Sigma$,
	\be\label{Pfactor1}
	P_\Sigma(\forall (B))= T_{\Sigma, A}^{-1} 
	\bigl[P_\Sigma(\forall(A\cap B)) \otimes 
	P_\Sigma(\forall (A^c\cap B))\bigr] T_{\Sigma, A} \,.
	\ee
\end{enumerate}
\end{definition}

This definition is equivalent to the one given in \cite{LT:2020} but formulated in a more detailed way, as the isomorphisms $T$ were previously not made explicit. We will often  follow \cite{LT:2020} and not make the isomorphism $T$ explicit; that is, instead of saying ``the given unitary isomorphism $T_{\Sigma, A}$ maps $\Hilbert_\Sigma$ to $\Hilbert_{\Sigma,A} \otimes \Hilbert_{\Sigma,\Sigma\setminus A}$,'' we simply say ``$\Hilbert_\Sigma=\Hilbert_{\Sigma,A} \otimes \Hilbert_{\Sigma,\Sigma\setminus A}$.'' Likewise, instead of \eqref{Pfactor1}, we simply write 
\be\label{Pfactor2}
P_\Sigma(\forall (B))= P_A(\forall(A\cap B)) \otimes P_{A^c}(\forall (A^c\cap B))\,,
\ee
where $P_A$ means the restriction of $P_\Sigma$ to subsets of $\forall(A)$ as in Footnote~\ref{fn:PSigmaA}.

\paragraph{Remarks.}

\begin{enumerate}
\setcounter{enumi}{\theremarks}
\item {\it Uniqueness of the vacuum state.} Actually, our Propositions and the Theorem do not make use of property (ii), the uniqueness of the vacuum state. The reason we make it part of the definition of $\sE$ is that it is part of the concept of hypersurface evolution as introduced in~\cite{LT:2020}.\\

\item {\it $P_\Sigma$ factorizes.} From \eqref{Pfactor1} or \eqref{Pfactor2} it follows that $P_\Sigma$ factorizes not just for all-sets (i.e., sets of the form $\forall(B)$) but for all product sets in configuration space: for all $A\subseteq \Sigma$, $S_A\subseteq \forall A$, and 
	$S_{A^c}\subseteq \forall(\Sigma \setminus A)$,
	\be\label{Pfactor}
	P_\Sigma(S_A\times S_{A^c})= P_A(S_A) \otimes 
	P_{A^c}(S_{A^c})	
	\ee
with $S_A \times S_{A^c}$ understood as a subset of $\Gamma(\Sigma)$. That is because, first, $\forall B = \forall(A\cap B) \times \forall(A^c \cap B)$, second, the all-sets $\forall C$ form a $\cap$-stable generator of $\sB(\Gamma(\Sigma))$, and third, it is a standard theorem in probability theory that measures (and hence also PVMs) agreeing on a $\cap$-stable generator of a $\sigma$-algebra agree on the whole $\sigma$-algebra; so, roughly speaking, relations true for all all-sets are true for all sets. Relation \eqref{Pfactor} is exactly the definition of the tensor product of two POVMs, so it can equivalently be expressed as
\be\label{Pfactor3}
P_\Sigma = P_A \otimes P_{A^c}\,.
\ee

\item\label{rem:Pfactorfinite} {\it Splitting into more than two regions.} The restriction $T_{\Sigma, B, A}$ of $T_{\Sigma, A}$ to $\Hilbert_{\Sigma, B}$ maps $\Hilbert_{\Sigma, B}$ unitarily to $\Hilbert_{\Sigma,A\cap B}\otimes \Hilbert_{\Sigma,A^c\cap B}$. Moreover, \eqref{Pfactor1} for $A\subseteq B$ yields that $P$ factorizes also in $B$, i.e., for every $A\subseteq B\subseteq \Sigma$, $S_A\subseteq \forall A$, and $S_{B \setminus A}\subseteq \forall(B \setminus A)$,
	\be
	P_\Sigma(S_A\times S_{B \setminus A})= T_{\Sigma, B, A}^{-1} 
	\bigl[P_\Sigma(S_A) \otimes P_\Sigma(S_{B \setminus A})\bigr] 
	T_{\Sigma, B, A}
	\ee
with $S_A \times S_{B\setminus A}$ understood as a subset of $\forall B$. 
Furthermore, it follows that $T_{\Sigma, B, B\setminus A} = \Pi T_{\Sigma, B, A}$, and that an associative law holds for the $T_{\Sigma,B,A}$: For any partition $A_1,A_2,A_3$ of $B\subseteq \Sigma$,
	\be\label{Tassociative}
	\bigl(I_{\Sigma, A_1} \otimes T_{\Sigma, A_2\cup A_3, A_2} \bigr)T_{\Sigma, B, A_1}
	= \bigl(T_{\Sigma,A_1\cup A_2,A_1}\otimes I_{\Sigma, A_3} \bigr) T_{\Sigma, B,A_1\cup A_2}\,.
	\ee
Hence, the Hilbert spaces and PVMs factorize also for finite partitions. The upshot is that it is OK to identify
\begin{align}
\Hilbert_{\Sigma}&=\bigotimes_i \Hilbert_{\Sigma, A_i}~\text{and}\\
P_\Sigma&=\bigotimes_i P_{A_i} \label{Pfactorfinite}
\end{align}
for any finite partition $\Sigma=\bigcup_i A_i$.

\item\label{rem:exampleE} {\it Examples for hypersurface evolutions $\sE$.}
Some examples for hypersurface evolutions can be found in \cite{LT:2020}. As described there in Remark 15 and Section 4.1, the simplest example is provided by the non-interacting Dirac field without a Dirac sea, which also satisfies (IL) and (PL) as defined below. Further examples are provided by Tomonaga-Schwinger equations and multi-time wave functions (whose $n$-particle sectors are functions of $n$ space-time points, rather than $n$ space points \cite{lienertpetrattumulka}); explicit models include the emission-absorption model of \cite{pt:2013c} and the rigorous model with contact interaction of \cite{lienert:2015a,LN:2015}. Given an evolution law for multi-time wave functions $\phi$, $U_\Sigma^{\Sigma'}$ can be defined by $U_\Sigma^{\Sigma'}:\phi|_\Sigma \mapsto \phi|_{\Sigma'}$; of course, one still has to check that this $U_\Sigma^{\Sigma'}$ is indeed unitary. In fact, multi-time wave functions have provided a major motivation for considering the curved Born rule.\\

\end{enumerate}
\setcounter{remarks}{\theenumi}

\subsection{Locality Properties}

\begin{definition}
$\sE$ is {\bf propagation local (PL)} if and only if
\be\label{PL}
U^{\Sigma'}_\Sigma \, P_\Sigma (\forall A) \, U^\Sigma_{\Sigma'} 
\leq P_{\Sigma'}(\forall \Gr(A,\Sigma'))
\ee
for all Cauchy surfaces $\Sigma,\Sigma'$ and all $A\subseteq \Sigma$.
\end{definition}

\bigskip

Here, $R\leq S$ means that $S-R$ is a positive operator; if $R$ and $S$ are projections, then $R\leq S$ is equivalent to $\range R \subseteq \range S$. In words, (PL) means that if $ \Psi_\Sigma $ is concentrated in $ A \subseteq \Sigma $, i.e., $ \Psi_\Sigma \in \Hilbert_{\Sigma, A} $, then $ \Psi_{\Sigma'} = U_{\Sigma}^{\Sigma'} \Psi_{\Sigma} $ is concentrated in $\Gr(A,\Sigma') $. Also this definition is equivalent to the one given in \cite{LT:2020}.

Also the definition of interaction locality was already given in \cite{LT:2020} but will be formulated here in a more detailed way. We begin with a summary of the condition: First, in a region $A$ where $\Sigma$ and $\Sigma'$ overlap (see Figure~\ref{fig:IL}), $\Hilbert_{\Sigma, A}$ and $\Hilbert_{\Sigma', A}$ can be identified.
The identification fits together with $P$ and $T$.
Second, the time evolution from $\Sigma\setminus A$ to $\Sigma'\setminus A$ (see Figure~\ref{fig:IL}) is given by a unitary isomorphism $V_{\Sigma\setminus A}^{\Sigma'\setminus A}:\Hilbert_{\Sigma\setminus A} \to \Hilbert_{\Sigma' \setminus A}$, the ``local evolution'' replacing $U_\Sigma^{\Sigma'}$. The fact that one can evolve from $\Sigma\setminus A$ to $\Sigma'\setminus A$ means in particular that this evolution does not depend on 
the state in $A$, that is, there is no interaction term in the evolution that would couple $\Sigma\setminus A$ to $A$. Finally, we require that $V_{\Sigma\setminus A}^{\Sigma'\setminus A}$ does not change when we deform 
$A$ while keeping it spacelike from $\Sigma\setminus A$.

\bigskip

\begin{definition}\label{def:IL}
$\sE$ is {\bf interaction local (IL)} if it is equipped in addition with, 
for all Cauchy surfaces $\Sigma,\Sigma'$ and $A\subseteq \Sigma\cap\Sigma'$, a unitary isomorphism $J_{A, \Sigma}^{\Sigma'}: \Hilbert_{\Sigma, A} \to 
\Hilbert_{\Sigma', A}$ (``identification'') such that
\begin{align}
J^{\Sigma''}_{A, \Sigma'} \, J^{\Sigma'}_{A, \Sigma} &= J^{\Sigma''}_{A, \Sigma}
\text{ whenever }A\subseteq \Sigma\cap \Sigma' \cap \Sigma''\,,
\label{ILJ}\\
J^{\Sigma'}_{B, \Sigma} &= J^{\Sigma'}_{A, \Sigma}\Big|_{\Hilbert_{\Sigma, B}}~~~\text{for }B\subseteq A\subseteq \Sigma\cap \Sigma'\,,\\
(J^{\Sigma'}_{A, \Sigma})^{-1}\, P_{\Sigma'}(\forall B) \, J^{\Sigma'}_{A, \Sigma} &= P_\Sigma(\forall B)~~\text{for }B\subseteq A\,,\label{ILJP}\\[2mm]
T_{\Sigma', A} \, U^{\Sigma'}_\Sigma \, T_{\Sigma, A}^{-1} &= J^{\Sigma'}_{A, \Sigma} \otimes V^{\Sigma'}_{\Sigma\setminus A,\Sigma}\label{ILT}
\end{align}
with some unitary isomorphism
$V^{\Sigma'}_{\Sigma\setminus A,\Sigma} : 
\Hilbert_{\Sigma,\Sigma\setminus A} \to \Hilbert_{\Sigma',\Sigma'\setminus A}$
such that for all $\tilde\Sigma\supseteq (\Sigma\setminus A)$, setting $\tilde A:= \tilde\Sigma\setminus(\Sigma\setminus A)$ and $\tilde{\Sigma}' := \tilde A \cup (\Sigma'\setminus A)$,
\be\label{VJVJ}
V^{\tilde{\Sigma}'}_{\Sigma \setminus A, \tilde\Sigma} = J^{\tilde{\Sigma}'}_{\Sigma'\setminus A,\Sigma'} \, V^{\Sigma'}_{\Sigma \setminus A, \Sigma} \, J^\Sigma_{\Sigma \setminus A, \tilde\Sigma}\,.
\ee
\end{definition}

Henceforth, we will not mention the $ J $-operators explicitly any more and following \cite{LT:2020}, we will simply write
\be\label{HilbertA}
\Hilbert_{\Sigma, A} = \Hilbert_{\Sigma', A} =: \Hilbert_A\,.
\ee
Further, we will write $ V_{\Sigma \setminus A}^{\Sigma' \setminus A} $ in place of $ V_{\Sigma \setminus A, \Sigma}^{\Sigma'} $, which is compatible with the Hilbert space identification.\\

\begin{figure}
	\centering
	\scalebox{1.0}{\begin{tikzpicture}

\draw (0,0.5) .. controls (1,0) and (1.5,0) .. (2,0) .. controls (2.5,0) and (4,0) .. (5,0.5) .. controls (5.5,0.7) and (5.5,0.7) .. (6,0.7);
\draw (2,0) .. controls (2.5,0) and (3,1) .. (3.5,1) .. controls (4,1) and (4.5,0.25) .. (5,0.5);
\draw (3.5,0.1) -- ++(0.7,-0.7) node[anchor = north west] {$ \Sigma \setminus A $};
\draw (3.5,1) -- ++(0.7,0.7) node[anchor = south west] {$ \Sigma' \setminus A $};

\draw[line width = 3, blue] (0,0.5) .. controls (1,0) and (1.5,0) .. node[anchor = south] {$A$} (2,0);
\draw[line width = 3, blue] (5,0.5) .. controls (5.5,0.7) and (5.5,0.7) .. (6,0.7);
\draw[line width = 3, blue] (2,-0.1) -- ++(0,0.2);
\draw[line width = 3, blue] (5,0.4) -- ++(0,0.2);

\draw[dotted,->,red!50!blue] (3,0.2) -- ++(0,0.3);
\draw[dotted,->,red!50!blue] (3.5,0.2) -- ++(0,0.6);
\draw[dotted,->,red!50!blue] (4,0.3) -- ++(0,0.3);

\draw[<-,blue] (1,0) -- ++(0.5,-1.2) node[anchor = north] {$ \Psi $ remains invariant};

\end{tikzpicture}}
	\captionof{figure}{Depiction of interaction locality (IL). Color online.}
    \label{fig:IL}
\end{figure}
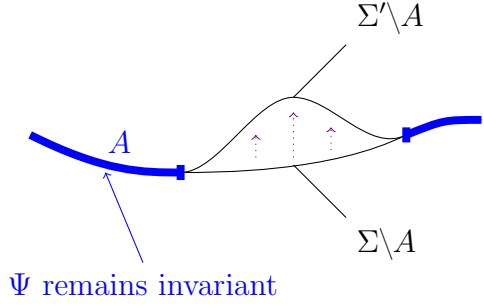

\paragraph{Remarks.} 

\begin{enumerate}
\setcounter{enumi}{\theremarks}

\item {\it Other notions of locality.} There are several inequivalent (though not unrelated) concepts of locality; they often play important roles in selecting time evolution laws (e.g., \cite{Haag96,Sch95}).

In the Wightman axioms (e.g., \cite[p.~65]{reedsimon2}), a locality condition appears that is different from both (IL) and (PL), viz., (anti-)commutation of field operators at spacelike separation. It seems clear that Wightman's locality is closely related to (IL) and (PL), and it would be of interest to study this relation in detail in a future work.

Another different locality condition is often called Einstein locality or 
Bell locality or just locality. It implies (IL) and (PL) but is not implied by (IL) and (PL) together; it asserts that there are no influences between events in spacelike separated regions; that may sound similar to (IL), but it is not. In fact, Bell's theorem \cite{Bell64,GNTZ} shows that Bell locality is violated, whereas (IL) seems to be valid in our universe.

\item {\it Consistency condition.}\label{rem:consistencycondition} It is known that multi-time equations require a consistency condition (e.g., \cite[Chap.~2]{lienertpetrattumulka}). We note here that neither (IL) nor (PL) follow from the consistency condition alone. Indeed, examples of (artificial) multi-time equations with an instantaneous interaction (violating (IL)) that leaves the multi-time equations consistent were given in Lemma 2.5 of \cite{DN:2016}, while the non-interacting multi-time equations with Schr\"odinger Hamiltonians $-\Delta_j$ for each particle $j$ provide an example of consistent multi-time equations violating (PL).

\item {\it Poincar\'e covariance.}\label{rem:cov} While the flat Born rule is inspired by the thought that the full theory should be covariant under Poincar\'e transformations (i.e., Lorentz transformation and space-time translations), we do not assume covariance of the hypersurface evolution. To make this point, it may be helpful to say explicitly what it would mean for a hypersurface evolution $(\Hilbert_\circ,U_\circ^\circ,P_\circ)$ to be Poincar\'e covariant: It would mean that {\it for every proper\footnote{A {\bf proper} Poincar\'e transformation is one that reflects neither space nor time; the set of proper Poincar\'e transformations is often denoted by $\cP_+^{\uparrow}$.} Poincar\'e transformation $g$ and every Cauchy surface $\Sigma$ there is a unitary isomorphism $S_{g,\Sigma}: \Hilbert_\Sigma \to \Hilbert_{g\Sigma}$ (thought of as just Poincar\'e transforming the wave function without evolving it) such that}
\be\label{SS}
S_{\mathrm{id},\Sigma}=I_\Sigma\,,~~~S_{h,g\Sigma} \, S_{g,\Sigma} = S_{hg,\Sigma}
\ee
\be\label{US}
U^{g\Sigma'}_{g\Sigma} = S_{g,\Sigma'} \, U^{\Sigma'}_\Sigma \, S_{g,\Sigma}^{-1}
\ee
\be\label{PS}
P_{g\Sigma}(\forall(gA))=S_{g,\Sigma}\, P_\Sigma(\forall A) \, S^{-1}_{g,\Sigma}
\ee
\be\label{TS}
T_{g\Sigma,gA} \, S_{g,\Sigma} \, T_{\Sigma, A}^{-1} = S_{g,\Sigma}|_{\sH_{\Sigma,A}} \otimes S_{g,\Sigma}|_{\sH_{\Sigma,\Sigma \setminus A}}
\ee
{\it with $T_{\Sigma, A}$ as in Definition~\ref{def:hypersurfaceevolution} 
item 4.}

The representation $U(g)$ of the proper Poincar\'e group on $\Hilbert_{E_0}$ ($E_0=\{x^0=0\}$) that features (e.g.) in the Wightman axioms (e.g., \cite[p.~65]{reedsimon2}) corresponds to
\be\label{Ugdef}
U(g) = U^{E_0}_{gE_0} \, S_{g,E_0}\,,
\ee
that is, to using the Poincar\'e transformation $g$ to shift $\Psi$ from $E_0$ and subsequently using the time evolution to bring the state vector 
back to $E_0$.
\end{enumerate}
\setcounter{remarks}{\theenumi}

\section{Detection Process on Triangular Surfaces}
\label{sec:detection}

We now give the detailed definitions of the sequential and parallel detection processes and prove Propositions~\ref{prop:sequential1} and \ref{prop:sequential2}. 

To begin with, consider an admissible partition $\sP=(P_1,\ldots,P_r)$ of a Cauchy surface $ \Sigma $ and a vector $L = (L_1,\ldots,L_r) \in \{0,1\}^r$. Actually, in this section we will not make use of the assumption in Definition~\ref{def:admissible} that the boundaries $\partial P_\ell$ are null sets, an assumption we need for Theorem~\ref{thm:1}.

The set of configurations in $ \Gamma(\Sigma) $ compatible with the single outcome $L_\ell$ at an attempted detection in $P_\ell$ is
\begin{equation}
	M_{\ell\Sigma}(L_\ell) := \begin{cases} \exists(P_\ell) \quad &\text{if } L_\ell = 1 \\ \emptyset(P_\ell)\quad &\text{if } L_\ell = 0 \end{cases}.
\end{equation}
The set of configurations compatible with the measurement outcome vector $ L $ when detection is attempted in $P_1,\ldots,P_r$ is
\begin{equation}\label{MsPdef}
	M_\sP(L) := \bigcap_{\ell=1}^r  M_{\ell\Sigma}(L_\ell)\,.
\end{equation}

Now consider a triangular surface $\Upsilon=\bigcup_{k\in\sK} \overline{\Delta_k}$ and an admissible partition $\sB=(B_1,\ldots,B_r)$ of $\Upsilon$. For either the sequential or the parallel detection process on $\Upsilon$, we imagine a small detector checking for particles in each
\be\label{Bkelldef}
B_{k\ell}:= \Delta_k\cap B_\ell 
\ee
with outcome $s_{k\ell}=1$ if a particle was found and $s_{k\ell}=0$ otherwise.\footnote{\label{fn:Deltaboundary}We could also have defined $B_{k\ell}$ by $\overline{\Delta_k}\cap B_\ell$ instead of \eqref{Bkelldef}, but that would have caused a bit of trouble because these sets would not have been disjoint. Our choice \eqref{Bkelldef}, on the other hand, has the consequence, which may at first seem like a drawback, that $\cup_k B_{k\ell}\neq B_\ell$ because we have removed the points on the 2d triangles where two tetrahedra meet. However, the set removed, being a subset of a countable union of 2d triangles, has measure 0 on $\Upsilon$, and for any set $A\subseteq \Sigma$ of measure 0, $\exists(A)$ has measure 0 in $\Gamma(\Sigma)$ and, by Definition~\ref{def:hypersurfaceevolution}, also $P_\Sigma(\exists(A))=0$.}

We say that the outcome matrix $ s $ is compatible with $ L $ (denoted $ s:L $) whenever
\begin{equation}
	\forall \ell \in \{1,\ldots,r\}: \begin{cases} \exists k \in \sK: s_{k\ell} = 1 ~~ &\text{if } L_\ell = 1 \\ \forall k \in \sK: s_{k\ell} = 
0 ~~ &\text{if } L_\ell = 0 \end{cases} 
\label{eq:s:L}
\end{equation}
Let $E_k$ be the hyperplane containing $\Delta_k$. 
The configurations in $E_k$ compatible with outcomes $ s_{k\ell} $ or $ s_k :=(s_{k1},\ldots,s_{kr})$ are then given by
\begin{equation}
	M_{k\ell E_k}(s_{k\ell}) := \begin{cases} \exists(B_{k\ell})\subset \Gamma(E_k) ~~ &\text{if } s_{k\ell} = 1 \\ \emptyset(B_{k\ell})\subset \Gamma(E_k) ~~ &\text{if } s_{k\ell} = 0 \end{cases}, \quad M_{kE_k}(s_k) := \bigcap_{\ell=1}^r M_{k\ell E_k}(s_{k\ell}).
\end{equation}
Likewise,
\begin{equation}
	M_{k\ell \Upsilon}(s_{k\ell}) := \begin{cases} \exists(B_{k\ell})\subset \Gamma(\Upsilon) ~~ &\text{if } s_{k\ell} = 1 \\ \emptyset(B_{k\ell})\subset \Gamma(\Upsilon) ~~ &\text{if } s_{k\ell} = 0 \end{cases}, \quad M_{k\Upsilon}(s_k) := \bigcap_{\ell=1}^r M_{k\ell \Upsilon}(s_{k\ell}).
\end{equation}

It follows that, based on the definition \eqref{MsPdef},
\begin{equation}
	M_{\sB}(L) = \bigcup_{s:L} \bigcap_{k \in \sK} M_{k \Upsilon}(s_k)~~\text{up to a set of measure 0,}
\label{eq:MLMsk2}
\end{equation}
meaning that the symmetric difference between the two sets is a set of measure 0 in $\Gamma(\Upsilon)$. This is the case because, as described in Footnote~\ref{fn:Deltaboundary}, the configurations in the symmetric difference have at least one particle in the 2d set $\partial \Delta_k$ for some $k$.

\subsection{Sequential Detection Process}

We now formulate the definition of the sequential detection process and prove agreement with the Born rule. Fix an ordering of $\sK$, i.e., a bijection $\sK \to \NNN$. For ease of notation, we will simply replace $\sK$ by $\NNN$ using this particular ordering. The detection process is:
\begin{itemize}
\item Set $E_0=\{x^0=0\}$ and $\Psi_0=\Psi_{E_0}$.
\item For each $k$ in the specified order, do:
\begin{itemize}
\item Evolve $\Psi_{k-1}$ to $E_k$.
\item Carry out detections of $B_{k\ell}$ for all $\ell=1,\ldots,r$, i.e., quantum measurements of $P_{E_k}(\exists(B_{k\ell}))$, and collapse accordingly, resulting in the (normalized) state vector $\Psi_k\in\Hilbert_{E_k}$.
\item Repeat.
\end{itemize}
\end{itemize}

Note that by Definition~\ref{def:triangular}, each $B_\ell$ intersects only finitely many $\Delta_k$. Thus, from some $K+1$ onwards, all $B_{k\ell}$ are empty, $s_{k\ell}=0$, and no quantum measurement needs to be carried out in $\Delta_k$. Hence, it suffices to consider finitely many repetitions in the above loop, namely those for $k$ up to $K$.

From the flat Born rule and the flat collapse rule, we can now express the detection probabilities and the collapsed state vectors. Fix some $k$ 
and $\ell$; suppose that in the previous tetrahedra $k'<k$ (i.e., none if 
$k=1$), the measurements have already been carried out with outcomes $s_{k'\ell'}$; suppose further that in the previous detector regions $B_{k\ell'}$ with $\ell'<\ell$ (i.e., none if $\ell=1$) in the same tetrahedron $\Delta_k$, the measurements have already been carried out with outcomes $s_{k\ell'}$; suppose further that $\Psi_{k,\ell-1}$ is the collapsed wave function after the previous measurements, which for $\ell>1$ is given by the previous step, for $\ell=1$ and $k>1$ is given by
\be
\Psi_{k,0}= U_{E_{k-1}}^{E_k} \, \Psi_{k-1,r}
\ee 
(with $\Psi_{k-1,r}=\Psi_{k-1}$ in the notation of the process description above), and for $\ell=1,k=1$ is given by
\be
\Psi_{1,0}= U_{E_0}^{E_1} \, \Psi_0\,.
\ee
Conditional on the previous detection outcomes, the probability distribution of the next detection outcome $s_{k\ell}$ is, by the flat Born rule,
\be
\Prob(s_{k\ell}=1) = \bigl\|P_{E_k}(\exists(B_{k\ell}))\, \Psi_{k,\ell-1}\bigr\|^2\,,
\ee
and the state vector collapses, by the flat collapse rule, to
\be
\Psi_{k\ell}= \frac{P_{E_k}(M_{k\ell E_k}(s_{kl}))\,\Psi_{k,\ell-1}}{\|P_{E_k}(M_{k\ell E_k}(s_{kl}))\,\Psi_{k,\ell-1}\|}\,.
\ee
This completes the definition of the sequential detection process.

\bigskip

\begin{lemma}\label{lemma:ink}
Assume the flat Born rule and the flat collapse rule.
Conditional on the measurements in the tetrahedra $k'<k$, the joint distribution of all outcomes $(s_{k\ell})_{\ell=1..r}=s_k$ in $\Delta_k$ is
\be
\Prob( s_{k1},\ldots,s_{kr}) = \bigl\|P_{E_k}(M_{kE_k}(s_k))\, \Psi_{k0}\bigr\|^2,
\ee
and the collapsed wave function after the $kr$-measurement, given $s_k$ with nonzero probability, is
\be
\Psi_{kr}= \frac{P_{E_k}(M_{kE_k}(s_k))\,\Psi_{k0}}{\| P_{E_k}(M_{kE_k}(s_k))\,\Psi_{k0}\|}\,.
\ee
\end{lemma}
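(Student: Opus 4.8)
The plan is to prove the lemma by an induction on $\ell$, peeling off one detector region $B_{k\ell}$ at a time: at each step we apply the flat Born rule and the flat collapse rule inside the single hyperplane $E_k$, and at the end we telescope the normalization constants. The one structural fact that makes the telescoping work is that the projections $P_{E_k}(\exists(B_{k\ell}))$, $\ell=1,\ldots,r$, commute with one another. Note that no interaction-locality assumption enters, since everything below happens within the one hyperplane $E_k$; we only use Definition~\ref{def:hypersurfaceevolution} together with the flat Born and collapse rules.

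\emph{Step 1 (commutativity and product structure).} Since $\sB$ is an admissible partition of $\Upsilon$, the regions $B_{k1},\ldots,B_{kr}$ are mutually disjoint subsets of $E_k$; together with $B_{k,r+1}:=E_k\setminus(B_{k1}\cup\ldots\cup B_{kr})$ they form a finite partition of $E_k$. Each set $M_{k\ell E_k}(s_{k\ell})$ (being $\exists(B_{k\ell})$ or $\emptyset(B_{k\ell})$) constrains only the restriction $q\cap B_{k\ell}$ of a configuration $q\in\Gamma(E_k)$, i.e.\ it is a product set with respect to the above partition. Hence, by the factorization of $P_{E_k}$ over finite partitions (Remark~\ref{rem:Pfactorfinite}), the projection $P_{E_k}(M_{k\ell E_k}(s_{k\ell}))$ acts as a projection on the tensor factor $\Hilbert_{E_k,B_{k\ell}}$ and as the identity on every other factor. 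Consequently these projections commute pairwise, and for each $1\le m\le r$ their product equals the projection onto the intersection of their ranges, which is again the $P_{E_k}$-measure of a product set:
\be
\prod_{\ell=1}^m P_{E_k}(M_{k\ell E_k}(s_{k\ell})) = P_{E_k}\Bigl(\bigcap_{\ell=1}^m M_{k\ell E_k}(s_{k\ell})\Bigr) =: Q_m\,.
\ee
Thus $Q_0=I_{E_k}$, $Q_r=P_{E_k}(M_{kE_k}(s_k))$, and $Q_r\le Q_{r-1}\le\ldots\le Q_0$.

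\emph{Step 2 (induction).} I claim that, conditional on the outcomes in the tetrahedra $k'<k$ and on outcomes $s_{k1},\ldots,s_{km}$ of positive conditional probability,
\be
\Psi_{km} = \frac{Q_m\,\Psi_{k0}}{\|Q_m\,\Psi_{k0}\|}\,,\qquad \Prob(s_{k1},\ldots,s_{km}) = \|Q_m\,\Psi_{k0}\|^2\,.
\ee
For $m=0$ this is trivial. For the step $m-1\to m$: applying the flat Born rule in $E_k$ to the region $B_{km}$ with state $\Psi_{k,m-1}$, the conditional probability of $s_{km}$ is $\|P_{E_k}(M_{km E_k}(s_{km}))\,\Psi_{k,m-1}\|^2$, and by the flat collapse rule the post-measurement state is $P_{E_k}(M_{km E_k}(s_{km}))\,\Psi_{k,m-1}$ normalized to unit length. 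Substituting the inductive hypothesis $\Psi_{k,m-1}=Q_{m-1}\Psi_{k0}/\|Q_{m-1}\Psi_{k0}\|$ and using $P_{E_k}(M_{km E_k}(s_{km}))\,Q_{m-1}=Q_m$ from Step 1, we get $\Psi_{km}=Q_m\Psi_{k0}/\|Q_m\Psi_{k0}\|$ and, by the chain rule for conditional probabilities,
\be
\Prob(s_{k1},\ldots,s_{km}) = \Prob(s_{k1},\ldots,s_{k,m-1})\cdot\frac{\|Q_m\Psi_{k0}\|^2}{\|Q_{m-1}\Psi_{k0}\|^2} = \|Q_m\Psi_{k0}\|^2\,,
\ee
which closes the induction. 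Taking $m=r$ yields the two displayed formulas of the lemma. The degenerate case is harmless: if some conditional probability vanishes then $\|Q_m\Psi_{k0}\|=0$ for some $m\le r$, hence $\|Q_r\Psi_{k0}\|=0$ since $Q_r\le Q_m$, so $\Prob(s_k)=0=\|P_{E_k}(M_{kE_k}(s_k))\Psi_{k0}\|^2$, while the collapsed-state formula is only asserted for $s_k$ of nonzero probability.

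\emph{Main obstacle.} The only step with real content is Step 1 — establishing that the $P_{E_k}(M_{k\ell E_k}(s_{k\ell}))$ commute and that their product is the $P_{E_k}$-measure of the intersection. This is where one must invoke the built-in factorization of the PVM over disjoint regions (Remark~\ref{rem:Pfactorfinite}) and check that $\exists(B_{k\ell})$ and $\emptyset(B_{k\ell})$ are genuinely product sets with respect to the partition $\{B_{k1},\ldots,B_{kr},B_{k,r+1}\}$. Everything else is a routine iteration of the flat Born and collapse rules.
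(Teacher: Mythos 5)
Your proof is correct, and its overall architecture — iterate the flat Born and collapse rules region by region inside the single hyperplane $E_k$, then telescope the normalizations using commutativity of the projections — is essentially the same as the paper's. The one place you diverge is in how you justify commutativity: you invoke the tensor-product factorization $\Hilbert_{E_k}=\bigotimes_\ell \Hilbert_{E_k,B_{k\ell}}$ and the product structure of $\exists(B_{k\ell})$, $\emptyset(B_{k\ell})$ with respect to the partition $\{B_{k1},\ldots,B_{k,r+1}\}$. The paper instead just cites the elementary fact that any two projections $P(S_1),P(S_2)$ arising from the \emph{same} PVM commute and satisfy $P(S_1)P(S_2)=P(S_1\cap S_2)$. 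Both routes work, but the factorization machinery is unnecessary here: the sets $M_{k\ell E_k}(s_{k\ell})$ need not be product sets, and the regions $B_{k\ell}$ need not be disjoint, for the projections to commute — they already all belong to the one PVM $P_{E_k}$. So I'd temper your closing remark: Step 1 is not the place where real content or structural input from Definition~\ref{def:hypersurfaceevolution} is needed; it is immediate from the PVM axioms, which is exactly why the paper can dispatch this lemma in a few lines. Where the factorization genuinely enters in this section is in Lemma~\ref{lemma:Pksevolution}, which transports these projections from $\Hilbert_{E_k}$ to $\Hilbert_\Upsilon$ using interaction locality.
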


\bigskip

\begin{proof}
It is well known general facts about PVMs $P$ that 
\be
P(S_1)\, P(S_2) = P(S_2) \, P(S_1) = P(S_1\cap S_2)
\ee
and that a quantum measurement of $P(S_1)$ with outcome $s_1$ on $\Psi$, followed by one of $P(S_2)$ with outcome $s_2$, have joint Born distribution
\begin{align}
\Prob(s_1=1,s_2=1) &= \Prob(s_2=1|s_1=1)\Prob(s_1=1)\\
&= \Bigl\|P(S_2)\frac{P(S_1)\Psi}{\|P(S_1)\Psi\|}  \Bigr\|^2 \|P(S_1)\Psi\|^2
=\Bigl\| P(S_1\cap S_2) \Psi\Bigr\|^2 \label{joint}
\end{align}
and collapsed state vector, given $s_1=1,s_2=1$,
\be
\Psi' = P(S_2)\frac{P(S_1)\Psi}{\|P(S_1)\Psi\|}/\Bigl\| P(S_2)\frac{P(S_1)\Psi}{\|P(S_1)\Psi\|}\Bigr\| = \frac{P(S_1\cap S_2) \Psi}{\|P(S_1\cap S_2) \Psi\|}\,.
\ee
Iteration with $r$ sets rather than 2 and the definition of $M_{k E_k}(s_k)$ yield Lemma~\ref{lemma:ink}.
\end{proof}

\bigskip

\begin{lemma}\label{lemma:Pksevolution}
(IL) implies that
\begin{equation}
	U_{E_k}^{\Upsilon} P_{E_k} (M_{k E_k}(s_k)) U_{\Upsilon}^{E_k} = P_\Upsilon (M_{k \Upsilon}(s_k)).
\label{eq:Pksevolution}
\end{equation}
\end{lemma}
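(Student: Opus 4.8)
The plan is to reduce the identity to the interaction-locality factorization \eqref{ILT} applied across the splitting of $\Upsilon$ into $\Delta_k$ and its complement. First I would observe that $M_{kE_k}(s_k)$ is, up to null sets (cf.\ Footnote~\ref{fn:Deltaboundary}), a \emph{product set} in configuration space: it only constrains the configuration inside $\Delta_k$ (via the conditions on $B_{k\ell}=\Delta_k\cap B_\ell$) and leaves the configuration in $\Upsilon\setminus\overline{\Delta_k}$ (equivalently $E_k\setminus\overline{\Delta_k}$) entirely free. Concretely, writing $S_k\subseteq\forall(\Delta_k)$ for the set of configurations $q\subseteq\Delta_k$ compatible with $s_k$, we have $M_{kE_k}(s_k)=S_k\times\forall(E_k\setminus\Delta_k)$ as a subset of $\Gamma(E_k)$, and $M_{k\Upsilon}(s_k)=S_k\times\forall(\Upsilon\setminus\Delta_k)$ as a subset of $\Gamma(\Upsilon)$, the two $S_k$ being literally the same set of subsets of the common region $\Delta_k$. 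Since by Definition~\ref{def:hypersurfaceevolution}(i) null sets are invisible to the PVMs, I may freely pass between $\Delta_k$ and $\overline{\Delta_k}$ here.

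Next I would invoke the factorization of the PVM, \eqref{Pfactor} (or \eqref{Pfactor3}), applied on $E_k$ with $A=\Delta_k$ and on $\Upsilon$ with $A=\Delta_k$:
\be
P_{E_k}(M_{kE_k}(s_k)) = P_{\Delta_k}(S_k)\otimes P_{E_k\setminus\Delta_k}(\forall(E_k\setminus\Delta_k)) = P_{\Delta_k}(S_k)\otimes I_{E_k\setminus\Delta_k}\,,
\ee
and similarly $P_\Upsilon(M_{k\Upsilon}(s_k)) = P_{\Delta_k}(S_k)\otimes I_{\Upsilon\setminus\Delta_k}$, using \eqref{ILJP}/\eqref{HilbertA} to identify $\Hilbert_{E_k,\Delta_k}=\Hilbert_{\Upsilon,\Delta_k}=:\Hilbert_{\Delta_k}$ and the matching $P_{\Delta_k}$ on it. Then I would apply interaction locality in the form \eqref{ILT} with $\Sigma=E_k$, $\Sigma'=\Upsilon$, $A=\Delta_k$: under the identifications $\Hilbert_{E_k}=\Hilbert_{\Delta_k}\otimes\Hilbert_{E_k\setminus\Delta_k}$ and $\Hilbert_\Upsilon=\Hilbert_{\Delta_k}\otimes\Hilbert_{\Upsilon\setminus\Delta_k}$, the evolution $U_{E_k}^\Upsilon$ acts as $J^{\Upsilon}_{\Delta_k,E_k}\otimes V^{\Upsilon\setminus\Delta_k}_{E_k\setminus\Delta_k}$, i.e.\ as (the identity on $\Hilbert_{\Delta_k}$) $\otimes\,V$. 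Conjugating the operator $P_{\Delta_k}(S_k)\otimes I_{E_k\setminus\Delta_k}$ by this tensor-product unitary leaves the first factor untouched and turns $I_{E_k\setminus\Delta_k}$ into $V(I)V^{-1}=I_{\Upsilon\setminus\Delta_k}$, which is exactly $P_\Upsilon(M_{k\Upsilon}(s_k))$.

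The one technical point I would be careful about is that $\Delta_k$ is an \emph{open} tetrahedron, so it is not literally contained in $E_k\cap\Upsilon$ together with its closure in a way that makes $\Upsilon\setminus\Delta_k$ and $E_k\setminus\Delta_k$ cleanly spacelike-related regions on which $V$ is defined; the clean statement of (IL) in \eqref{ILT} wants $A\subseteq\Sigma\cap\Sigma'$. This is handled by the null-set remark in Footnote~\ref{fn:Deltaboundary}: passing to $\overline{\Delta_k}$ changes the relevant PVM-images not at all, and the complement regions $E_k\setminus\overline{\Delta_k}$ and $\Upsilon\setminus\overline{\Delta_k}$ are genuinely spacelike from $\overline{\Delta_k}$, so \eqref{ILT} applies. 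So the main obstacle is essentially bookkeeping — keeping track of which Hilbert-space factorizations and identifications ($T$, $J$, $V$) are being used where, and checking the null-set adjustments — rather than any real analytic difficulty; the substance is the single application of \eqref{ILT}. I would also note that this is the only place interaction locality enters, which explains why (IL) is hypothesized in Proposition~\ref{prop:sequential1} and Proposition~\ref{prop:sequential2} but not in Theorem~\ref{thm:1}.
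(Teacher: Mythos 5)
Your proposal is correct and follows essentially the same route as the paper's proof: factor $\Hilbert_{E_k}$ and $\Hilbert_\Upsilon$ across $\Delta_k$, observe that $M_{kE_k}(s_k)$ is a product set constraining only $\Delta_k$ (the paper's $N_{k\Delta_k}(s_k)$ is your $S_k$), invoke PVM factorization to obtain $P_{\Delta_k}(S_k)\otimes I$, and conjugate by the tensor-product form of $U^{\Upsilon}_{E_k}$ supplied by \eqref{ILT}. One small remark: the technical worry you raise about $\Delta_k$ being open is actually unnecessary, since $\Delta_k\subseteq E_k\cap\Upsilon$ holds literally ($\Delta_k$ lies in its spanning hyperplane $E_k$ and in $\Upsilon=\bigcup_{k'}\overline{\Delta_{k'}}$), and Definition~\ref{def:IL} requires only $A\subseteq\Sigma\cap\Sigma'$ with no condition on the complements — so \eqref{ILT} applies directly with $A=\Delta_k$, and no null-set adjustment is needed in this lemma (the null-set discussion in Footnote~\ref{fn:Deltaboundary} is relevant elsewhere, for \eqref{eq:MLMsk2}).
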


\bigskip

\begin{proof}
Decompose $ \mathscr{H}_{E_k} = \mathscr{H}_{\Delta_k} \otimes \mathscr{H}_{E_k \setminus \Delta_k} $ and $ \mathscr{H}_{\Upsilon} = \mathscr{H}_{\Delta_k} \otimes \mathscr{H}_{\Upsilon \setminus \Delta_k} $. By (IL), we have that
\begin{equation}\label{UIV2}
	U_{\Upsilon}^{E_k} = I_{\Delta_k} \otimes V_{\Upsilon \setminus \Delta_k}^{E_k \setminus \Delta_k}.
\end{equation}
We know that $\Gamma(E_k)=\Gamma(\Delta_k) \times \Gamma(E_k\setminus \Delta_k)$. The set $M_{k E_k}(s_k)\subseteq \Gamma(E_k)$ factorizes in the same way:
\begin{equation}
	M_{k E_k}(s_k) = N_{k \Delta_k}(s_k) \times \Gamma(E_k \setminus \Delta_k)\,.
\end{equation}
That is because whether a configuration $ q $ is compatible with the outcome $ s_k $, i.e., $ q \in M_{E_k}(s_k) $, does not depend on the points in $ q $ outside of $ \Delta_k $. Here, the set $ N_{k \Delta_k}(s_k) \subseteq \Gamma(\Delta_k) $ is defined in the analogous way to $M_{k E_k}(s_k)$, i.e.,
\begin{equation}\label{Ndef}
	N_{k \Delta_k}(s_k) := \bigcap_{\ell = 1}^r N_{k\ell \Delta_k}(s_{k\ell}), \qquad N_{k\ell \Delta_k}(s_{k\ell}) := \begin{cases} \exists_{\Delta_k}(B_{k\ell}) \quad &\text{if } s_{k\ell} = 1 \\ \emptyset_{\Delta_k}(B_{k\ell})\quad &\text{if } s_{k\ell} = 0, \end{cases}
\end{equation}
where $\exists_A(B)$ means the set of all configurations in $\Gamma(A)$ with at least one particle in $B$. Hence, the projection $ P_{E_k} (M_{k E_k}(s_k)) $ decomposes into a tensor product
\begin{equation}\label{PNk}
	P_{E_k} (M_{k E_k}(s_k)) = P_{\Delta_k}(N_{k \Delta_k}(s_k)) \otimes I_{E_k \setminus \Delta_k}\,,
\end{equation}
and by \eqref{UIV2}, 
\begin{equation}
\begin{aligned}
	U_{E_k}^{\Upsilon} P_{E_k} (M_{k E_k}(s_k)) U_{\Upsilon}^{E_k}
	&= [I_{\Delta_k} \otimes V^{\Upsilon \setminus \Delta_k}_{E_k \setminus \Delta_k}] [P_{\Delta_k}(N_{k \Delta_k}(s_k)) \otimes I_{E_k \setminus \Delta_k}] [I_{\Delta_k} \otimes V_{\Upsilon \setminus \Delta_k}^{E_k \setminus \Delta_k}]\\
	&= [I_{\Delta_k} \circ P_{\Delta_k}(N_{k \Delta_k}(s_k))  \circ I_{\Delta_k}] \; \otimes \;   [V_{E_k \setminus \Delta_k}^{\Upsilon \setminus \Delta_k} \circ I_{E_k \setminus \Delta_k} \circ  V_{\Upsilon \setminus \Delta_k}^{E_k \setminus \Delta_k}]\\
	&= P_{\Delta_k}(N_{k \Delta_k}(s_k)) \; \otimes \; I_{\Upsilon \setminus \Delta_k}\\[1mm]
	&= P_\Upsilon (M_{k \Upsilon}(s_k))
\end{aligned}
\end{equation}
for the same reasons as \eqref{PNk}.
\end{proof}

\bigskip

\begin{prop}\label{prop:sequential3}
Assume the flat Born rule, the flat collapse rule, and (IL).
The unconditional joint distribution of all outcomes, i.e., of the matrix $s$ comprising all $s_{k\ell}$, agrees with the Born distribution on $\Upsilon$,
\be
\Prob(s)= \Bigl\| P_\Upsilon\Bigl(\bigcap_{k\in\NNN} M_{k\Upsilon}(s_k)\Bigr)\, \Psi_\Upsilon\Bigr\|^2
\ee
with $\Psi_\Upsilon=U^\Upsilon_{E_0} \Psi_0$ (actually regardless of whether $\partial B_\ell$ are null sets).
In particular, the distribution of $L=(L_1,\ldots,L_r)$ is the Born distribution $\|P_\Upsilon(M_{\sB\Upsilon}(L))\Psi_\Upsilon\|^2$.
\end{prop}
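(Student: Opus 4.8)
The plan is to combine the two preceding lemmas by induction on the (finitely many relevant) tetrahedra. The key observations to assemble are: (a) Lemma~\ref{lemma:ink} gives, conditional on the outcomes in tetrahedra $k'<k$, the joint law of $s_k$ and the post-measurement state $\Psi_{kr}$ on $\Hilbert_{E_k}$ in terms of the PVM $P_{E_k}$ and the projected vector $P_{E_k}(M_{kE_k}(s_k))\Psi_{k0}$; (b) the operators $P_\Upsilon(M_{k\Upsilon}(s_k))$ for different $k$ act on distinct tensor factors $\Hilbert_{\Delta_k}$ of $\Hilbert_\Upsilon$ and hence commute; (c) Lemma~\ref{lemma:Pksevolution} lets us transport the $k$-th measurement projection from $E_k$ to $\Upsilon$, and (d) the chain rule $U_{E_k}^{E_{k+1}}=U_\Upsilon^{E_{k+1}}U_{E_k}^\Upsilon$ (from Definition~\ref{def:hypersurfaceevolution}(0)) lets us phrase each step uniformly on $\Hilbert_\Upsilon$.

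Concretely, I would first pull everything back to $\Hilbert_\Upsilon$: set $\widehat\Psi_{k0}:=U_{E_k}^\Upsilon\Psi_{k0}$ (the state just before the $k$-th tetrahedron's measurement, viewed on $\Upsilon$) and $\widehat\Psi_{kr}:=U_{E_k}^\Upsilon\Psi_{kr}$. Since the free unitary evolution between consecutive hyperplanes followed by conjugation gives $\widehat\Psi_{k+1,0}=U_{E_{k+1}}^\Upsilon U_{E_k}^{E_{k+1}}\Psi_{kr}=U_{E_k}^\Upsilon\Psi_{kr}=\widehat\Psi_{kr}$, the evolution steps in the sequential process become trivial on $\Hilbert_\Upsilon$. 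By Lemma~\ref{lemma:Pksevolution}, the $k$-th collapse step on $\Hilbert_\Upsilon$ is exactly multiplication by $P_\Upsilon(M_{k\Upsilon}(s_k))$ (suitably normalized). Thus the entire sequential process, transported to $\Hilbert_\Upsilon$, reads: start from $\Psi_\Upsilon=U_{E_0}^\Upsilon\Psi_0$, and successively apply the projections $P_\Upsilon(M_{1\Upsilon}(s_1)),P_\Upsilon(M_{2\Upsilon}(s_2)),\ldots$ with normalization, the conditional probability of $s_k$ at step $k$ being $\|P_\Upsilon(M_{k\Upsilon}(s_k))\widehat\Psi_{k-1,r}\|^2/\|\widehat\Psi_{k-1,r}\|^2=\|P_\Upsilon(M_{k\Upsilon}(s_k))\widehat\Psi_{k-1,r}\|^2$ since each $\widehat\Psi$ is normalized.

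Now I would telescope exactly as in the proof of \eqref{joint}: multiplying the conditional probabilities, all the normalization denominators cancel against the numerators of the preceding step, leaving
\be
\Prob(s) = \Bigl\|P_\Upsilon(M_{K\Upsilon}(s_K))\cdots P_\Upsilon(M_{1\Upsilon}(s_1))\,\Psi_\Upsilon\Bigr\|^2 = \Bigl\|P_\Upsilon\Bigl(\bigcap_{k=1}^K M_{k\Upsilon}(s_k)\Bigr)\Psi_\Upsilon\Bigr\|^2,
\ee
using the PVM multiplicativity $P(S_1)P(S_2)=P(S_1\cap S_2)$. Since $M_{k\Upsilon}(s_k)=\Gamma(\Upsilon)$ for $k>K$ (all $B_{k\ell}$ empty, so the forced outcome $s_{k\ell}=0$ is automatically compatible), the finite intersection equals $\bigcap_{k\in\NNN}M_{k\Upsilon}(s_k)$, giving the stated formula. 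The final sentence about $L$ then follows from \eqref{eq:MLMsk2} — namely $M_{\sB}(L)=\bigcup_{s:L}\bigcap_k M_{k\Upsilon}(s_k)$ up to a null set, the union being over pairwise disjoint sets (distinct $s$ give incompatible constraints), so $P_\Upsilon(M_{\sB}(L))=\sum_{s:L}P_\Upsilon(\bigcap_k M_{k\Upsilon}(s_k))$ and summing $\Prob(s)$ over $s:L$ gives $\|P_\Upsilon(M_{\sB\Upsilon}(L))\Psi_\Upsilon\|^2$ (the cross terms vanish by orthogonality of the ranges, and the null-set discrepancy is killed by Definition~\ref{def:hypersurfaceevolution}(i)).

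The main obstacle is bookkeeping rather than a deep difficulty: one must be careful that the conditioning structure in Lemma~\ref{lemma:ink} (outcomes in $k'<k$ fixed, then the joint law of $s_k$) really does compose into the unconditional product over all $k$, and that each intermediate $\widehat\Psi_{k-1,r}$ is genuinely normalized so the denominators telescope cleanly; this is the step where I would be most careful to track normalizations. A secondary point to state explicitly is the independence of the result from the chosen ordering of $\sK$ — this is immediate from the final formula, since $\bigcap_{k\in\NNN}M_{k\Upsilon}(s_k)$ does not refer to the order, but it deserves a sentence. The fact that $\partial B_\ell$ need not be null here is automatic because \eqref{eq:MLMsk2}'s null-set caveat concerns only the $\partial\Delta_k$, not $\partial B_\ell$.
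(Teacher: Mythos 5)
Your proof is correct and follows essentially the same route as the paper: combine Lemma~\ref{lemma:ink}, transport each projection to $\Hilbert_\Upsilon$ via Lemma~\ref{lemma:Pksevolution}, telescope using PVM multiplicativity, then sum over $s:L$ using \eqref{eq:MLMsk2} and orthogonality. The paper writes the telescoping product directly as \eqref{Psklfull}--\eqref{Psklfull2} rather than explicitly tracking the normalized intermediate states $\widehat\Psi_{k\cdot}$, but the content is identical.
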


\bigskip

\begin{proof}
As noted before, all $s_{k\ell}$ vanish from some $K+1$ onwards (and formulas below will take for granted they do), and we need consider only $k\leq K$. The fact, used before in \eqref{joint}, that for subsequent measurements the projections multiply, yields from Lemma~\ref{lemma:ink} that
\be\label{Psklfull}
\Prob(s)
= \Bigl\| 
U_{E_{K}}^\Upsilon
P_{E_{K}}(M_{K E_{K}}(s_{K}))
U_\Upsilon^{E_{K}}\cdots
U_{E_1}^\Upsilon
P_{E_1}(M_{1E_1}(s_1))
U_\Upsilon^{E_1}\Psi_\Upsilon\Bigr\|^2\,.
\ee
Inserting \eqref{eq:Pksevolution} in \eqref{Psklfull} yields
\be\label{Psklfull2}
\begin{aligned}
\Prob(s)
&= \Bigl\| 
P_{\Upsilon}(M_{K \Upsilon}(s_{K}))
\cdots
P_{\Upsilon}(M_{1 \Upsilon}(s_1))
\Psi_\Upsilon\Bigr\|^2\\
&= \Bigl\| P_{\Upsilon}\Bigl(\bigcap_{k=1}^K M_{k \Upsilon}(s_k)\Bigr)
\Psi_\Upsilon\Bigr\|^2\\
&= \Bigl\| P_{\Upsilon}\Bigl(\bigcap_{k\in\NNN} M_{k \Upsilon}(s_k)\Bigr)
\Psi_\Upsilon\Bigr\|^2\,,
\end{aligned}
\ee
as claimed.
\end{proof}

\bigskip

Proposition~\ref{prop:sequential2}, insofar as it concerns the sequential 
detection process, follows from Proposition~\ref{prop:sequential3} (actually regardless of whether $\partial B_\ell$ are null sets), and Proposition~\ref{prop:sequential1} follows further as the special case in which $r=1$, $B_1=B$, and $B_{r+1}=B^c$.

\subsection{Parallel Detection Process}

We now formulate the definition of the parallel detection process and prove the Born rule for it. Throughout the whole subsection, $(IL)$ is assumed.

The proof of Lemma~\ref{lemma:Pksevolution} also shows that, analogously to \eqref{eq:Pksevolution},
\be\label{Pksevolution2}
	U_{E_k}^{\Upsilon} P_{E_k} (M_{k\ell E_k}(s_{k\ell})) U_{\Upsilon}^{E_k} 
= P_\Upsilon (M_{k\ell \Upsilon}(s_{k\ell})).
\ee
As outlined in Section~\ref{sec:introdetectionprocess}, the idea is to think of the detection attempt in $B_{k\ell}$ as a quantum measurement of the observable
\be\label{parallelobservables}
U_{E_k}^{\Upsilon} P_{E_k} (\exists(B_{k\ell})) U_{\Upsilon}^{E_k}=P_\Upsilon (\exists(B_{k\ell}))\,,
\ee
which is \eqref{Pksevolution2} for $s_{k\ell}=1$. Since $B_{k\ell}$ is non-empty only for finitely many $k$ (for $k=1,\ldots,K$), we are considering only finitely many observables. They commute because projections belonging to the same PVM always commute. Their simultaneous measurement is the definition of the parallel detection process.

We now prove the Born rule for the parallel detection process.
When considering the simultaneous measurement of the operators \eqref{parallelobservables}, we need their joint diagonalization; the joint eigenspace with eigenvalues $(s_{k\ell})_{k\ell}$ is the range of
\be
P_\Upsilon\Bigl(\bigcap_{k=1}^K \bigcap_{\ell=1}^r M_{k\ell \Upsilon}(s_{k\ell})\Bigr)
= P_\Upsilon\Bigl(\bigcap_{k=1}^K M_{k \Upsilon}(s_{k})\Bigr)\,,
\ee
so the probability of the outcomes $(s_{k\ell})_{k\ell}$ is
\be
\Bigl\| P_\Upsilon\Bigl(\bigcap_{k=1}^K M_{k \Upsilon}(s_{k})\Bigr) \, \Psi_\Upsilon\Bigr\|^2\,,
\ee
and the probability of outcome $L$ is
\be
\begin{aligned}
\sum_{s:L}\Bigl\| P_\Upsilon\Bigl(\bigcap_{k=1}^K M_{k \Upsilon}(s_{k})\Bigr) \, \Psi_\Upsilon\Bigr\|^2
&= \Bigl\| \sum_{s:L} P_\Upsilon\Bigl(\bigcap_{k=1}^K M_{k \Upsilon}(s_{k})\Bigr) \, \Psi_\Upsilon\Bigr\|^2\\
&= \Bigl\| P_\Upsilon\Bigl(\bigcup_{s:L} \bigcap_{k=1}^K M_{k \Upsilon}(s_{k})\Bigr) \, \Psi_\Upsilon\Bigr\|^2\\
&= \Bigl\| P_\Upsilon\Bigl( M_{\sB}(L)\Bigr) \, \Psi_\Upsilon\Bigr\|^2
\end{aligned}
\ee
because the sets $\bigcap_{k=1}^K M_{k \Upsilon}(s_{k})$ are mutually disjoint and thus their projections are mutually orthogonal, and because of \eqref{eq:MLMsk2} and property (i) in Definition~\ref{def:hypersurfaceevolution}. That is, the probability of outcome $L$ agrees with the Born rule. This proves the statement about the parallel detection process in Proposition~\ref{prop:sequential2} and thus also in Proposition~\ref{prop:sequential1}.

Another way of looking at the parallel detection process is based on tensor products: Since $\Upsilon=\bigcup_{k=1}^K \Delta_k \cup R$ with remainder set $R=\Upsilon\setminus \bigcup_{k=1}^K \Delta_k$, we have from Remark~\ref{rem:Pfactorfinite} in Section~\ref{sec:hypersurfaceevolution} that
\be\label{Upsilontensorproduct}
\Hilbert_\Upsilon = \bigotimes_{k=1}^K \Hilbert_{\Delta_k} \otimes \Hilbert_R \,.
\ee
By (IL), each $\Hilbert_{\Delta_k}$ can be regarded as a factor in $\Hilbert_{E_k} = \Hilbert_{\Delta_k} \otimes \Hilbert_{E_k \setminus \Delta_k}$. With the flat Born rule in mind, or with the idea that $P_{E_k}$ is the configuration observable on $E_k$, the attempted detection in $B_{k\ell}$ can be regarded as a quantum measurement in $\Hilbert_{E_k}$ of the observable $P_{E_k}(\exists(B_{k\ell}))$, which is of the form
\be
P_{E_k}(\exists_{E_k}(B_{k\ell})) = P_{\Delta_k}(\exists_{\Delta_k}(B_{k\ell})) \otimes I_{E_k\setminus \Delta_k}\,,
\ee
Thus, the attempted detection in $B_{k\ell}$ can \emph{also} be regarded as a quantum measurement in $\Hilbert_{\Delta_k}$ of the observable $P_{\Delta_k}(\exists_{\Delta_k}(B_{k\ell}))$. These observables commute for different $\ell$ and equal $k$ because they belong to the same PVM $P_{\Delta_k}$, and they commute for different 
$k$ in $\Hilbert_\Upsilon$ because of the tensor product structure \eqref{Upsilontensorproduct}. It follows that
\be
P_\Upsilon(M_{\sB}(L)) = \sum_{s:L} \bigotimes_{k=1}^K P_{\Delta_k}(N_{k \Delta_k}(s_k)) \otimes I_R
\ee
with $N_{k \Delta_k}$ as in \eqref{Ndef}, which agrees again with the Born rule on $\Upsilon$, as claimed in Proposition~\ref{prop:sequential2}.

\section{Approximation by Triangular Surfaces}
\label{sec:approxtriangular}

In this section, we prove Propositions~\ref{prop:approx} and \ref{prop:uniform}.

\bigskip

\begin{proof}[Proof of Proposition \ref{prop:approx}]
Fix an $n\in\NNN$ and set $\varepsilon=3^{-n}$.
We construct a $ 3 \varepsilon$-approximation $ \Upsilon_n $ to $ \Sigma $. First, consider the function $ f_t: \MMM \to \MMM, \; (x^0,\bx) \mapsto (x^0 - t, \bx) $, which ``lowers a point by an amount $ t $ in time.'' We use $ f $ to define the sets (see Figure \ref{fig:Cauchyconvergence}):
\begin{equation}
	\Sigma_{2\varepsilon} := f_{2 \varepsilon} [\Sigma], \qquad 
	\Sigma_{\varepsilon..3\varepsilon} := \bigcup_{\varepsilon < \varepsilon' < 3\varepsilon} f_{\varepsilon'} [\Sigma].
\end{equation}
So $ \Sigma_{2\varepsilon} $ is a version of $ \Sigma $, lowered by $ 2 \varepsilon $ and $ \Sigma_{\varepsilon..3\varepsilon} $ is a slice below $ \Sigma $ of thickness $ 2 \varepsilon $, centered at $ \Sigma_{2\varepsilon} $.\\

\begin{minipage}{0.54\textwidth}
    \includegraphics[scale=1.0]{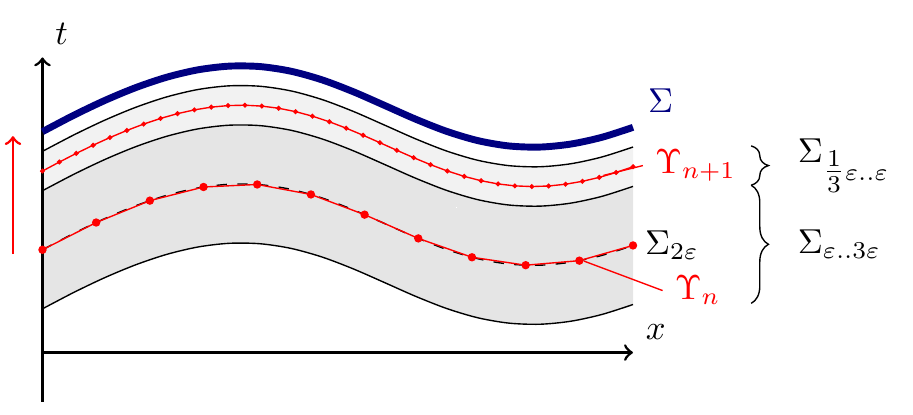}
	\captionof{figure}{Construction of the approximating sequence $ \Upsilon_n \nearrow \Sigma $. Color online.}
    \label{fig:Cauchyconvergence}
\end{minipage}
\hfill
\begin{minipage}{0.36\textwidth}
	\centering
	\scalebox{0.8}{\begin{tikzpicture}

\fill (3.5,3.5) circle (0.1) node[anchor = south west] {$ x^n_{k,i} $};

\filldraw[thick, fill = red, fill opacity = 0.1] (0,2) -- (3,1.5) -- (3.5,3.5) -- cycle;
\filldraw[thick, fill = blue, fill opacity = 0.1] (0,2) .. controls (1,1.3) and (2,1.1) .. (3,1.5) .. controls (3.4,2.5) and (3.3,2.5) .. (3.5,3.5) .. controls (2.5,2.5) and (1.2,1.7) .. cycle;

\draw[thick,->] (0,-0.5) -- (0,4) node[anchor = south west] {$t$};
\draw[thick,->] (-0.5,0.1) -- (3.5,-0.7) node[anchor = south west] {$x^1$};
\draw[thick,->] (-0.3,-0.2) -- (1.5,1) node[anchor = west] {$x^2$};

\draw[red] (1,2.3) -- ++(-0.2,0.5) node[anchor = south] {$ \Delta^n_k \subset \Upsilon_n $};
\draw[blue] (0.8,1.6) -- ++(-0.3,-0.3) node[anchor = north] {$ \Sigma_{2 \varepsilon} $};
\draw (0.8,0) -- ++(-0.3,-0.3) node[anchor = north] {$ \tilde \Delta^n_k $};

\draw[thick, red] (3.5,3.5) -- (2,2.6);
\draw[thick, blue] (3.5,3.5) .. controls (2.8,2.5) and (2.6,2.3) .. (2,2);
\draw[thick] (3.5,0.8) -- (2,0.3);

\filldraw[thick, fill opacity = 0.2] (0,0) -- (3,-0.6) -- (3.5,0.8) -- cycle;
\draw[thick] (2,2.6) -- (2,2);
\draw[dashed] (2,2.6) -- (2,0.3);
\draw[dashed] (3.5,3.5) -- (3.5,0.8);

\fill[red] (2,2.6) circle (0.05) node[anchor = south] {$ y $};
\fill (2,0.3) circle (0.05) node[anchor = north] {$ \boldsymbol{y} $};
\fill (3.5,0.8) circle (0.05) node[anchor = north west] {$ \pi(x^n_{k,i}) $};

\draw[dashed] (2,2.6) -- ++(2.5,-0.5);
\draw[dashed] (2,2) -- ++(2.5,-0.5);
\draw[thick, <->] (4.5,2.1) -- node[anchor = west] {$ h(\boldsymbol{y}) $} (4.5,1.5);

\end{tikzpicture}}
	
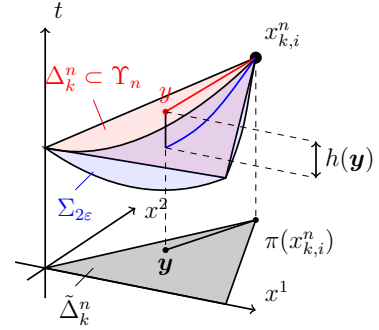
\captionof{figure}{$ |h(\by)| < \varepsilon $ illustrated in 2+1 dim. Color online.}
    \label{fig:Cauchyconvergence2}
\end{minipage}

\vspace{0.5cm}

We now choose a decomposition of $ \RRR^3 $ into (non-regular) tetrahedra 
$ \RRR^3 = \bigcup_{k \in \NNN} \overline{\tilde{\Delta}^n_k} $ with open $ \tilde{\Delta}^n_k $ such that each pair of vertices $ \bx^n_{k,i}, \bx^n_{k,j}, i, j \in \{1,2,3,4\} $ has a distance $ \Vert \bx^n_{k,i} - \bx^n_{k,j} \Vert \leq \varepsilon $ and such that every bounded region intersects only finitely many tetrahedra. For example, we may subdivide $\RRR^3$ into axiparallel cubes with vertices on $\frac{\varepsilon}{\sqrt{3}}\ZZZ^3$ and subdivide each cube into $3!$ tetrahedra with vertices on $\frac{\varepsilon}{\sqrt{3}}\ZZZ^3$.

The four space-time points $ x^n_{k,i} := \pi|_{\Sigma_{2 \varepsilon}}^{-1} \bx^n_{k,i} \in \MMM $ (obtained by lifting $ \bx^n_{k,i} $ up to the $ 2 \varepsilon $-surface, with $i=1,2,3,4$) span a spacelike open tetrahedron $ \Delta^n_k$ in $\MMM $. Now set $ \Upsilon_n := \bigcup_{k \in \NNN} \overline{\Delta^n_k} $.

\bigskip

\noindent\underline{Claim:} $ \Upsilon_n $ is a uniform $ \varepsilon $-approximation of $ \Sigma_{2 \varepsilon} $, i.e., $\Upsilon_n \subset\Sigma_{\varepsilon .. 3\varepsilon}$ (see Figure \ref{fig:Cauchyconvergence}). 

\bigskip

\noindent\underline{Proof:} Regard the surfaces $ \Upsilon_n $ and 
$ \Sigma_{2 \varepsilon} $ as the graphs of functions $\RRR^3\to\RRR$, henceforth denoted simply by $ \Upsilon_n(\cdot)$ and $\Sigma_{2 \varepsilon}(\cdot)$; that is, $(\Upsilon_n(\bx),\bx)\in \Upsilon_n$ for all $\bx\in\RRR^3$ and $x=(\Upsilon_n(\pi(x)),\pi(x))$ for all $x\in\Upsilon_n$. Both functions are Lipschitz-continuous with Lipschitz constant 1. Further, there is always a vertex of $ \tilde{\Delta}^n_k$ (possibly several ones) that maximizes $\Upsilon_n(\cdot)$ on $\overline{\tilde{\Delta}^n_k}$ 
(a ``highest'' vertex), and one (or several) that minimizes $\Upsilon_n(\cdot)$ (a ``lowest'' vertex).
Now consider the ``height difference function'' $ h(\bx) = \Upsilon_n(\bx) - \Sigma_{2 \varepsilon}(\bx) $. (It is Lipschitz continuous with Lipschitz constant 2.) For any vertex $ x^n_{k,i} $, we have that $ h(\pi(x^n_{k,i})) = 0 $. And for any other point $ y \in \Delta^n_k $, we have that $|\pi(x^n_{k,i}) - \pi(y)|_{\RRR^3} < \varepsilon $, so by Lipschitz 
continuity,
\begin{equation}
	\Sigma_{2 \varepsilon}(\pi(x^n_{k,i})) - \Sigma_{2 \varepsilon}(\pi(y)) < \varepsilon\,.
\end{equation}
If $ x^n_{k,i} $ is a highest vertex, then
\begin{equation}
\begin{aligned}
	&\Upsilon_n(\pi(x^n_{k,i})) - \Upsilon_n(\pi(y)) > 0\\
	\Rightarrow \quad &h(\pi(x^n_{k,i})) - h(\pi(y)) > -\varepsilon \quad \Leftrightarrow \quad h(\pi(y)) < \varepsilon
\end{aligned}
\end{equation}	
(see Figure~\ref{fig:Cauchyconvergence2}). The same reasoning with a lowest vertex yields $ h(\pi(y)) > -\varepsilon $, so in total $ |h(\pi(y))| < \varepsilon $, which proves the claim.\hfill$\Box$

\bigskip

\noindent\underline{Claim:} $ \Upsilon_n $ is a Cauchy surface. 

\bigskip

\noindent\underline{Proof:} We need to show that $\Upsilon_n$ is intersected exactly once by every causal inextendible curve $\gamma: (-\infty, \infty) \to \MMM$. We regard $\Upsilon_n$ again as the graph of an equally denoted function $\Upsilon_n: \RRR^3 \to \RRR$. Now, consider the height difference function $h(t) = \gamma^0(t) - \Upsilon_n(\pi(\gamma(t)))$, which tells us ``by how much $\gamma$ is above $\Upsilon_n$.'' Since $\Upsilon_n$ consists of spacelike tetrahedra, $\Upsilon_n$ is Lipschitz-continuous with Lipschitz constant $\le1$. As $\gamma$ is timelike-or-lightlike and w.l.o.g.  directed towards the future, we have that $h$ is strictly increasing, so there can be at most one $t$ with $h(t) = 0$. That is, there is at most one intersection of $\gamma$ with $\Upsilon_n$.

On the other hand, an intermediate value argument yields that there must be at least one intersection point: Otherwise, either $h(t)>0$ for all $t$ or $h(t)<0$ for all $t$; w.l.o.g., assume the former case. Since $\Upsilon_n$ is an $\varepsilon$-approximation to $\Sigma_{2\varepsilon}$, we know that $\gamma^0(t)>\Upsilon_n(\pi(\gamma(t)))>\Sigma_{2\varepsilon}(\pi(\gamma(t)))-\varepsilon=\Sigma_{3\varepsilon}(\pi(\gamma(t)))$, which implies that $\gamma$ does not intersect $\Sigma_{3\varepsilon}$, but that is impossible because $\Sigma_{3\varepsilon}$ is a Cauchy surface. \hfill$\Box$

\bigskip

We can now complete the proof of Proposition \ref{prop:approx}.
Since $ \Upsilon_n $ approximates $ \Sigma_{2 \varepsilon} $ up to $\varepsilon$, it approximates $\Sigma$ up to $ 3 \varepsilon $. Furthermore, $ 
\Upsilon_n \subset \Sigma_{\varepsilon.. 3 \varepsilon} $ and $ \Upsilon_{n+1} \subset \Sigma_{\frac 13 \varepsilon.. \varepsilon} $, and since $\Sigma_{\frac 13 \varepsilon.. \varepsilon}$ lies in the future of $\Sigma_{\varepsilon.. 3 \varepsilon}$ while being disjoint from it, $\Upsilon_{n+1}$ lies in the future of $\Upsilon_n$ (see Figure \ref{fig:Cauchyconvergence}). This completes the proof of Proposition~\ref{prop:approx}.
\end{proof}

\bigskip

Proposition~\ref{prop:uniform} follows from the following statement:

\begin{prop}\label{prop:frameindependence}
Let $\varepsilon>0$, $\Sigma$ be a Cauchy surface, $a_\varepsilon:=(\varepsilon,0,0,0)$ the vertical 4-vector of length $\varepsilon$, and $ g : 
\MMM \to \MMM ,\;  g \in \cP_+^{\uparrow} $ a proper Poincar\'e transformation. Then 
\be
	g[\Sigma+a_\varepsilon] \subset \Bigl\{x+(s,0,0,0):x\in g\Sigma, 0< s <\tilde\varepsilon \Bigr\}
\ee
with
\be
	\tilde{\varepsilon}= (\beta \gamma + \gamma) \varepsilon
\ee
with $\beta\in[0,1)$ the boost velocity of $g$ and $\gamma:=(1-\beta^2)^{-1/2}$ (the ``Lorentz factor''). 
\end{prop}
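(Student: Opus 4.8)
The plan is to reduce the statement to a one-dimensional estimate about how a Lorentz boost tilts a "vertical slab" of thickness $\varepsilon$. First I would observe that the result is invariant under the subgroup of $\cP_+^\uparrow$ that fixes the time axis direction: spatial rotations and spatial/temporal translations map vertical 4-vectors to vertical 4-vectors and preserve the slab structure $\{x+(s,0,0,0):\ldots\}$, so composing $g$ with such maps changes neither side of the claimed inclusion in an essential way. Hence it suffices to treat the case where $g$ is a pure boost, say in the $x^1$-direction with velocity $\beta\in[0,1)$ and Lorentz factor $\gamma$; the general $\tilde\varepsilon$ then follows because an arbitrary proper $g$ can be written as rotation $\circ$ boost $\circ$ rotation $\circ$ translation, and only the boost magnitude $\beta$ survives in the bound.

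Next I would parametrize. Write $\Sigma$ as the graph of a $1$-Lipschitz function $f:\RRR^3\to\RRR$, so $\Sigma+a_\varepsilon$ is the graph of $f+\varepsilon$. A point of $g[\Sigma+a_\varepsilon]$ has the form $g(f(\bx)+\varepsilon,\bx)$. To show this lies in the stated slab over $g\Sigma$, I must exhibit a point $x'\in g\Sigma$, i.e. $x'=g(f(\by),\by)$ for some $\by$, with the same spatial projection $\pi(x')=\pi\bigl(g(f(\bx)+\varepsilon,\bx)\bigr)$ and with time coordinate smaller by an amount in $(0,\tilde\varepsilon)$. Since $\pi_{g\Sigma}$ is a homeomorphism onto $\RRR^3$ (the cited fact from \cite[p.~417]{ON:1983}, using that $g\Sigma$ is again a Cauchy surface), such a $\by$ exists and is unique; call $y=g(f(\by),\by)$. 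The task is then purely to estimate the time-difference $\Delta^0 := [g(f(\bx)+\varepsilon,\bx)]^0 - [g(f(\by),\by)]^0$ and show $0<\Delta^0<(\beta\gamma+\gamma)\varepsilon$.

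For the estimate I would compute directly with the boost matrix $\Lambda$: $[\Lambda(t,\bx)]^0 = \gamma t - \beta\gamma x^1$ and $[\Lambda(t,\bx)]^1 = -\beta\gamma t + \gamma x^1$, with $x^2,x^3$ unchanged. The constraint that $y$ has the same spatial projection as the image point gives $-\beta\gamma(f(\by)) + \gamma y^1 = -\beta\gamma(f(\bx)+\varepsilon) + \gamma x^1$ together with $y^2=x^2$, $y^3=x^3$; this pins $y^1$ (and hence $\by$) in terms of $\bx$. Feeding this back into $\Delta^0 = \gamma(f(\bx)+\varepsilon) - \beta\gamma x^1 - \gamma f(\by) + \beta\gamma y^1$ and eliminating $y^1$ via the constraint, the $x^1$-terms cancel and one is left with $\Delta^0$ expressed through $\varepsilon$ and the increment $f(\by)-f(\bx)$; bounding the latter by $|f(\by)-f(\bx)|\le |y^1-x^1|$ (the $1$-Lipschitz property, since $\by$ and $\bx$ differ only in the first coordinate) and using the constraint again to bound $|y^1-x^1|$ by a multiple of $\varepsilon$ yields both the positivity $\Delta^0>0$ and the upper bound $\Delta^0 < (\beta\gamma+\gamma)\varepsilon$. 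Positivity uses $\beta<1$ so that a future-directed vertical shift stays future-directed after boosting; this is where the restriction to $\cP_+^\uparrow$ is essential.

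The main obstacle I anticipate is the bookkeeping in the last step: one must be careful that the point $\by$ produced by matching spatial projections genuinely corresponds to a point of $g\Sigma$ (not merely of the graph of some function that could fail the Cauchy condition), and that the Lipschitz bound is applied to the correct increment. Invoking that $g\Sigma$ is a Cauchy surface — hence a $1$-Lipschitz graph over $\RRR^3$ — handles the first concern and is the geometric heart of why $\pi_{g\Sigma}^{-1}$ exists and is well-behaved; the rest is the elementary inequality chain sketched above. Finally, $g[\Sigma+a_\varepsilon]\subset\{x+(s,0,0,0):x\in g\Sigma,\,0<s<\tilde\varepsilon\}$ is exactly the conclusion, and Proposition~\ref{prop:uniform} follows by applying this to each surface $\Sigma_n$ of a uniformly convergent sequence with $\varepsilon$ shrinking to $0$.
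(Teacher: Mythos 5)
Your proposal is correct and reaches the stated bound, but by a genuinely different route from the paper's. The paper avoids graph functions and Lipschitz bookkeeping entirely: after the same reduction to a pure boost $\Lambda_0$, it introduces the auxiliary point $x_c := x_a + \Lambda_0 a_\varepsilon \in g[\Sigma+a_\varepsilon]$, observes that $x_c$ and the point $x_b$ vertically above $x_a$ both lie on the single Cauchy surface $g[\Sigma+a_\varepsilon]$, hence are spacelike separated, which gives $|x_b^0-x_c^0|\le|\bx_b-\bx_c|=\beta\gamma\varepsilon$ immediately, and then concludes by the triangle inequality with $|x_c^0-x_a^0|=\gamma\varepsilon$. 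You instead pin down the matching point $\by$ via the projection constraint, eliminate $y^1$, and invoke the $1$-Lipschitz property of the graph function $f$ of $\Sigma$. Both arguments rest on the same geometric input---spacelike separation of distinct points on a Cauchy surface, equivalently the $1$-Lipschitz bound on its graph---but applied to different surfaces (the paper to $g[\Sigma+a_\varepsilon]$, you to $\Sigma$). The paper's use of the auxiliary point $x_c$ converts everything into a three-line comparison with no algebra, which is the main thing it buys over your version.

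Two small things worth tightening in your sketch. First, you should carry out the final arithmetic: after eliminating $y^1$ one finds $\Delta^0 = (\varepsilon - u)/\gamma$ with $u := f(\by)-f(\bx)$, and the constraint together with the Lipschitz bound yields $u < \varepsilon$ and $u \ge -\beta\varepsilon/(1-\beta)$, so $\Delta^0 \le \varepsilon/[\gamma(1-\beta)]$. You then still need the identity $1/[\gamma(1-\beta)] = \gamma(1+\beta) = \beta\gamma+\gamma$ to recognize this as $\tilde\varepsilon$; that step is asserted but not shown in your proposal, and the reader cannot see that your bound matches the claimed one without it. Second, the positivity $\Delta^0>0$ is cleaner if stated directly: $\Sigma+a_\varepsilon$ lies in the strict causal future of $\Sigma$, and proper orthochronous Poincar\'e transformations preserve this relation, so $g[\Sigma+a_\varepsilon]$ lies strictly in the future of $g\Sigma$; this is sharper than the remark about ``future-directed vertical shifts'' and is also what the paper implicitly relies on.
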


\begin{minipage}{0.42\textwidth}
	\centering
    \scalebox{0.8}{\begin{tikzpicture}
\draw[thick,->] (0,-0.5) -- (0,4) node[anchor = south west]{$ t $};
\draw[thick,->] (0,0) -- (5,0) node[anchor = south west]{$ x^1 $};

\draw[line width = 2, blue] (0,0.5) .. controls (0.5,0.8) and (0.5,0.8) .. (1,1) .. controls (1.5,1.2) and (2,0.5) .. (3,0.5) .. controls (4,0.5) and (4,1.2) .. (5,1) node[anchor = west] {$ \Sigma $};
\draw[line width = 2, blue] (0,2.5) .. controls (0.5,2.8) and (0.5,2.8) .. (1,3) .. controls (1.5,3.2) and (2,2.5) .. (3,2.5) .. controls (4,2.5) and (4,3.2) .. (5,3) node[anchor = west] {$ \Sigma + a_\varepsilon$};
\fill[line width = 2, blue, opacity = .1] (0,0.5) .. controls (0.5,0.8) and (0.5,0.8) .. (1,1) .. controls (1.5,1.2) and (2,0.5) .. (3,0.5) .. controls (4,0.5) and (4,1.2) .. (5,1) -- (5,3) .. controls (4,3.2) and (4,2.5) .. (3,2.5) .. controls (2,2.5) and (1.5,3.2)  .. (1,3) .. controls (0.5,2.8) and (0.5,2.8) .. (0,2.5);

\draw[line width =1,->] (2.5,0.5) -- node[anchor = east] {$ a_\varepsilon $} ++(0,2);
\end{tikzpicture}}
	\captionof{figure}{$\Sigma$ is being translated in the Proof of Proposition~\ref{prop:frameindependence}. Color online.}
    \label{fig:lorentzinvariance1}
\end{minipage}
\begin{minipage}{0.5\textwidth}
	\centering
	\scalebox{0.8}{\def\betA{0.3}
\def\gammA{0.9539}
\def\boost(#1,#2) { ( {\gammA*#1 + \betA*\gammA*#2-1} , { \betA*\gammA*#1 + \gammA* #2-1} ) }

\begin{tikzpicture}
\draw[thick,->] (0,-0.5) -- (0,4) node[anchor = south west]{$ t $};
\draw[thick,->] (0,0) -- (5,0) node[anchor = south west]{$ x^1 $};

\draw[line width = 2, blue] \boost(0,0.5) .. controls \boost(0.5,0.8) and \boost(0.5,0.8) .. \boost(1,1) .. controls \boost(1.5,1.2) and \boost(2,0.5) .. \boost(3,0.5) .. controls \boost(4,0.5) and \boost(4,1.2) .. \boost(5,1) node[anchor = west] {$ g[\Sigma] $};
\draw[line width = 2, blue] \boost(0,2.5) .. controls \boost(0.5,2.8) and \boost(0.5,2.8) .. \boost(1,3) .. controls \boost(1.5,3.2) and \boost(2,2.5) .. \boost(3,2.5) .. controls \boost(4,2.5) and \boost(4,3.2) .. \boost(5,3) node[anchor = west] {$ g[\Sigma] +\Lambda_0 a_\varepsilon$};
\fill[line width = 2, blue, opacity = .1] \boost(0,0.5) .. controls \boost(0.5,0.8) and \boost(0.5,0.8) .. \boost(1,1) .. controls \boost(1.5,1.2) and \boost(2,0.5) .. \boost(3,0.5) .. controls \boost(4,0.5) and \boost(4,1.2) .. \boost(5,1) -- \boost(5,3) .. controls \boost(4,3.2) and \boost(4,2.5) .. \boost(3,2.5) .. controls \boost(2,2.5) and \boost(1.5,3.2)  .. \boost(1,3) .. controls \boost(0.5,2.8) and \boost(0.5,2.8) .. \boost(0,2.5) ; 

\fill \boost(2.5,0.55) circle (0.08) node[anchor = east] {$ x_a $};
\fill \boost(2.5,2.55) circle (0.08) node[anchor = south west] {$ x_c $};
\fill (1.54,2.22) circle (0.08) node[anchor = south east] {$ x_b $};
\draw[line width = 2, blue, dashed] \boost(2.5,2.55) -- ++ (-0.6,0.6);
\draw[blue] (1.7,2.55) -- ++(0.2,0.8) node[anchor = south west]{max. slope};

\draw[dashed] (1.54,2.69) -- ++ (0,-3.14);
\draw[dashed] \boost(2.5,2.55) -- ++ (0,-2.6);
\draw[<->] (1.52,-0.4) -- node[anchor = north] {$ \beta \gamma \; \varepsilon $} (2.1,-0.4);
\draw[dashed] \boost(2.5,0.55) -- ++ (6,0);
\draw[dashed] \boost(2.5,2.55) -- ++ (4,0);
\draw[dashed] (1.54,2.69)  -- ++ (6,0);
\draw[<->] (5.9,0.25) -- node[anchor = west] {$ \gamma \; \varepsilon $} (5.9,2.15);
\draw[<->] (5.9,2.69) -- node[anchor = west] {$ \beta \gamma \; \varepsilon $} (5.9,2.15);
\draw[<->] (7.4,2.69) -- node[anchor = west] {$ \tilde{\varepsilon} $} (7.4,0.25);

\draw[line width =1,->] \boost(2.5,0.5) -- node[anchor = east] {$ \Lambda_0 a_\varepsilon $} \boost(2.5,2.5)  ;
\end{tikzpicture}}
	\captionof{figure}{The same structure after a boost. Color online.}
    \label{fig:lorentzinvariance2}
\end{minipage}

\begin{proof}[Proof of Proposition \ref{prop:frameindependence}]
A Poincar\'e transformation $g$ consists of a translation and a Lorentz transformation $\Lambda$, which in turn consists of a rotation and a subsequent boost $\Lambda_0$. The rotation leaves $a_\varepsilon$ invariant. Thus, $g[\Sigma + a_\varepsilon]= g\Sigma + \Lambda_0 a_\varepsilon$. Without loss of generality, $\Lambda_0$ is a boost in the $x^1$ direction (see Figures~\ref{fig:lorentzinvariance1} and \ref{fig:lorentzinvariance2}),
\be\label{eq:atrafo}
\Lambda_0 = \begin{pmatrix}\gamma & \beta\gamma&&\\ \beta\gamma&\gamma&&\\&&1&\\&&&1 \end{pmatrix}\,, \text{ so }
\Lambda_0 a_\varepsilon = \begin{pmatrix} \gamma \varepsilon\\ \beta \gamma \varepsilon \\0\\0 \end{pmatrix}\,.
\ee

Consider any point $ x_a = (x^0_a,\bx_a) \in g\Sigma$. Denote by $ x_b = (x^0_b,\bx_b) $ the point on $ g[\Sigma+a_\varepsilon] $ right above $ x_a $, $\bx_b=\bx_a$. We want to show that $x_b^0 \leq x^0_a + \tilde{\varepsilon}$. Set $x_c := x_a + \Lambda_0 a_\varepsilon$
Since $ g[\Sigma+a_\varepsilon] $ is a Cauchy surface, any two points on it (such as $x_b$ and $x_c$) must be spacelike separated, so
\begin{equation}
	|x^0_b - x^0_c| \le |\bx_b - \bx_c| = |\bx_a - \bx_c| = \beta \gamma 
\varepsilon.
\end{equation}
Now the triangle inequality implies the desired bound
\begin{equation}
	|x^0_b - x^0_a| \le |x^0_b - x^0_c| + |x^0_c-x^0_a| \le \beta \gamma \varepsilon + \gamma \varepsilon = \tilde{\varepsilon}.
\end{equation}
\end{proof}

\section{Proof of Theorem~\ref{thm:1}}
\label{sec:proofthm1}

Here is a quick outline of the proof. We want to show that
\be
\Prob_{\sB_n}(L) := \bigl\|P_{\Upsilon_n}(M_{\sB_n}(L)) \Psi_{\Upsilon_n}\bigr\|^2
\ee
converges, as $n\to\infty$, to
\be
\Prob_{\sP}(L) := \bigl\|P_\Sigma(M_{\sP}(L)) \Psi_\Sigma \bigr\|^2\,.
\ee
The proof is done by a squeeze-theorem argument: We will define two distributions $ \widehat{\Prob}_n $ and $ \widecheck{\Prob}_n $ on $\{0,1\}^r$ 
such that
\begin{equation}
	\widehat{\Prob}_n(L) \le \Prob_{\sB_n}(L) \le \widecheck{\Prob}_n(L), \qquad \widehat{\Prob}_n(L) \le \Prob_{\sP}(L) \le \widecheck{\Prob}_n(L),
\end{equation}
and prove that $ \widehat{\Prob}_n(L), \widecheck{\Prob}_n(L) $ both converge to $\Prob_{\sP}(L)$ as $n\to\infty$. 

\bigskip

We go through some preparations for the proof. To begin with, it is easy to see that $\sB_n=(B_{n1},\ldots,B_{nr})$ with
\be
B_{n\ell}=\pi_\Sigma^{\Upsilon_n}(P_\ell)
\ee
is an admissible partition of $\Upsilon_n$: First, $B_{n\ell}\cap B_{nm}=\emptyset$ for $\ell\neq m$ because $\pi_\Sigma^{\Upsilon_n}$ is a bijection. Second, $B_{n\ell}$ is bounded because $\pi_\Sigma^{\Upsilon_n}$ maps bounded sets to bounded sets. Third, the boundary $\partial B_{n\ell}$ of $B_{n\ell}$ in $\Upsilon_n$ is $\pi_\Sigma^{\Upsilon_n}(\partial P_\ell)$ because $\pi_\Sigma^{\Upsilon_n}$ is a homeomorphism. Finally, in order to obtain that $\mu_{\Upsilon_n}(\partial B_{n\ell})=0$ we note that 
$\mu_\Sigma(\partial P_\ell)=0$, that $\Sigma$ (and $\Upsilon_n$) possesses a spacelike tangent plane almost everywhere (relative to Lebesgue measure $\lambda$ on $\RRR^3$), and that, at points with a spacelike tangent plane, $\mu_\Sigma$ possesses a nonzero density relative to $\lambda\circ\pi_\Sigma$, so $\mu_\Sigma$ and $\lambda\circ \pi_\Sigma$ have the same null sets.

For the definition of  $ \widehat{\Prob}_n, \widecheck{\Prob}_n$ we introduce more notation: 

We define
\begin{equation}
	\widehat{C}_{n\ell} := \Sr(B_{n\ell},\Sigma), ~~~~~
	\widecheck{C}_{n\ell} := \Gr(B_{n\ell},\Sigma).
\label{eq:Ccheck}
\end{equation}

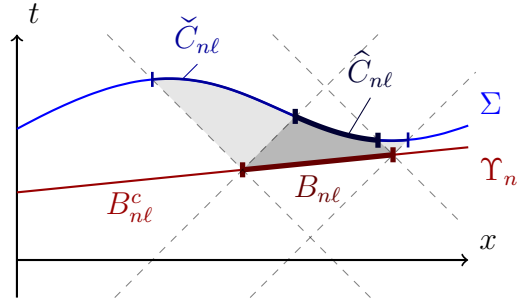
\begin{figure}[hbt]
    \centering
    \scalebox{1.0}{\begin{tikzpicture}[declare function = {
	f(\x) = 0.05*((\x-3.5)^3 - 9*(\x-3.5))*(1/(1+0.1*(\x-3.5)^2)) + 2;}]

\def\Na{30}
\def\xmax{6}
\def\step{\xmax / \Na}

\fill[opacity = 0.1] (3,1.2) -- plot[domain = 1.8:5.2]({\x} , {f(\x)}) -- (5,1.4) -- cycle ;
\fill[opacity = 0.2] (3,1.2) -- plot[domain = 3.7:4.8]({\x} , {f(\x)}) -- (5,1.4) -- cycle ;

\draw[thick,->] (0,-0.5) -- (0,3) node[anchor = south west] {$ t $};
\draw[thick,->] (0,0) -- (6,0) node[anchor = south west] {$ x $};

\draw[thick, red!60!black] (0,0.9) -- (6,1.5) node[anchor = north west] {$ \Upsilon_n $};
\draw[line width = 2, red!40!black] (3,1.2) --node[anchor = north] {$ B_{n\ell} $} (5,1.4) ;
\draw[line width = 2, red!40!black] (3,1.1) -- ++ (0,0.2);
\draw[line width = 2, red!40!black] (5,1.3) -- ++ (0,0.2);

\draw[dashed, opacity = 0.5] (1.3,-0.5) -- ++(3.5,3.5);
\draw[dashed, opacity = 0.5] (3.1,-0.5) -- ++(3,3);
\draw[dashed, opacity = 0.5] (1.2,3) -- ++(3.5,-3.5);
\draw[dashed, opacity = 0.5] (3.4,3) -- ++(2.5,-2.5);

\draw[domain=0:6, samples = 20, smooth, thick, blue] plot({\x}, {f(\x)}) node[anchor = south west] {$ \Sigma $};
\draw[domain=1.8:5.2, samples = 20, smooth, line width = 1, blue!60!black] plot({\x}, {f(\x)});
\draw[line width = 1, blue!60!black] (1.8,{f(1.8)-0.1} ) -- ++ (0,0.2);
\draw[line width = 1, blue!60!black] (5.2,{f(5.2)-0.1} ) -- ++ (0,0.2);
\draw[blue!60!black] (2.2, {f(2.2)} ) -- ++(0.2,0.2) node[anchor = south] {$ \widecheck{C}_{n\ell} $};

\draw[domain=3.7:4.8, samples = 20, smooth, line width = 2, blue!20!black] plot({\x}, {f(\x)});
\draw[line width = 2, blue!20!black] (3.7,{f(3.7)-0.1} ) -- ++ (0,0.2);
\draw[line width = 2, blue!20!black] (4.8,{f(4.8)-0.1} ) -- ++ (0,0.2);
\draw[blue!20!black] (4.4, {f(4.4)} ) -- ++(0.3,0.5) node[anchor = south] {$ \widehat{C}_{n\ell} $};


\node[red!60!black] at (1.5,0.7) {$B_{n\ell}^c$};

\end{tikzpicture}}
    \caption{Definition of $\widehat{C}_{n\ell}$ and $\widecheck{C}_{n\ell}$. Color online.}
    \label{fig:Supportgrowth}
\end{figure}

The corresponding sets of compatibility in configuration space $\Gamma(\Sigma)$ are
\begin{equation}
	\widehat{M}_{n\ell}(L_{\ell}) := \begin{cases} \exists(\widehat{C}_{n\ell}) ~ &\text{if } L_{\ell} =1 \\ \emptyset(\widecheck{C}_{n\ell})~ &\text{if } L_{\ell} = 0, \end{cases} \qquad
	\widecheck{M}_{n\ell}(L_{\ell}) := \begin{cases} \exists(\widecheck{C}_{n\ell}) ~ &\text{if } L_{\ell} =1 \\ \emptyset(\widehat{C}_{n\ell}) 
~ &\text{if } L_{\ell} = 0, \end{cases}
\label{eq:MchecksigmaP}
\end{equation}
\begin{equation}
	\widehat{M}_{n\Sigma}(L) 
	:= \bigcap_{\ell=1}^r \widehat{M}_{n\ell}(L_{\ell}), \qquad
	\widecheck{M}_{n\Sigma}(L) 
	:= \bigcap_{\ell=1}^r \widecheck{M}_{n\ell}(L_{\ell}).
\label{eq:MchecksigmaP2}
\end{equation}
The probability distributions that serve for the squeeze-theorem bounds are defined by
\begin{equation}
	\widehat{\Prob}_n(L) := \langle \Psi_\Sigma | P_\Sigma(\widehat{M}_{n\Sigma}(L)) | \Psi_\Sigma \rangle \qquad \widecheck{\Prob}_n(L) := \langle \Psi_\Sigma | P_\Sigma(\widecheck{M}_{n\Sigma}(L)) | \Psi_\Sigma \rangle.
\end{equation}

\bigskip

\begin{lemma}[Squeeze-theorem bound for $ \Prob_{\sP} $]~
For all $L\in\{0,1\}^r$,
\begin{align}
	\widehat{M}_{n\Sigma}(L) \subseteq M_\sP&(L) 
	\subseteq \widecheck{M}_{n\Sigma}(L),
	\label{eq:setestimate}\\
	\text{hence}~~P_\Sigma(\widehat{M}_{n\Sigma}(L)) 
	\leq P_\Sigma(M_{\sP}&(L)) 
	\leq P_\Sigma(\widecheck{M}_{n\Sigma}(L)),
	\label{eq:setestimate2}\\
	\text{and}~~~~~~~~~~\widehat{\Prob}_n(L)
	\le \Prob_\sP&(L)
	\le \widecheck{\Prob}_n(L).
	\label{eq:bounds}
\end{align}
\label{lem:setestimate}
\end{lemma}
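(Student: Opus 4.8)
The plan is to reduce everything to the set inclusions \eqref{eq:setestimate}, and to reduce those in turn to one elementary geometric statement. First I would note that \eqref{eq:setestimate2} follows from \eqref{eq:setestimate} purely because $P_\Sigma$ is a PVM: for $S\subseteq S'$ one has $P_\Sigma(S')-P_\Sigma(S)=P_\Sigma(S'\setminus S)\geq 0$, hence $P_\Sigma(S)\leq P_\Sigma(S')$ (in the sense recalled before the lemma); and \eqref{eq:bounds} then follows by sandwiching with $\langle\Psi_\Sigma|\,\cdot\,|\Psi_\Sigma\rangle$, since $\langle\Psi_\Sigma|P_\Sigma(S'\setminus S)|\Psi_\Sigma\rangle=\|P_\Sigma(S'\setminus S)\Psi_\Sigma\|^2\geq 0$. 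So it remains to prove \eqref{eq:setestimate}. Since $M_\sP(L)$, $\widehat{M}_{n\Sigma}(L)$, $\widecheck{M}_{n\Sigma}(L)$ are, by \eqref{MsPdef} and \eqref{eq:MchecksigmaP2}, the intersections over $\ell=1,\ldots,r$ of $M_{\ell\Sigma}(L_\ell)$, $\widehat{M}_{n\ell}(L_\ell)$, $\widecheck{M}_{n\ell}(L_\ell)$, and intersection preserves inclusions, it suffices to establish, for each $\ell$ and each $L_\ell\in\{0,1\}$, that $\widehat{M}_{n\ell}(L_\ell)\subseteq M_{\ell\Sigma}(L_\ell)\subseteq\widecheck{M}_{n\ell}(L_\ell)$. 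Comparing \eqref{eq:MchecksigmaP} with the definition of $M_{\ell\Sigma}$ and using that $A\mapsto\exists(A)$ is monotone while $A\mapsto\emptyset(A)=\exists(A)^c$ is antitone, both the case $L_\ell=1$ (which asks for $\exists(\widehat{C}_{n\ell})\subseteq\exists(P_\ell)\subseteq\exists(\widecheck{C}_{n\ell})$) and the case $L_\ell=0$ (which asks for $\emptyset(\widecheck{C}_{n\ell})\subseteq\emptyset(P_\ell)\subseteq\emptyset(\widehat{C}_{n\ell})$) are consequences of the single inclusion
\be\label{eq:planGrSrincl}
\widehat{C}_{n\ell}\ \subseteq\ P_\ell\ \subseteq\ \widecheck{C}_{n\ell},
\qquad\text{i.e.}\qquad
\Sr(B_{n\ell},\Sigma)\ \subseteq\ P_\ell\ \subseteq\ \Gr(B_{n\ell},\Sigma),
\ee
with $B_{n\ell}=\pi_\Sigma^{\Upsilon_n}(P_\ell)$.

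To prove \eqref{eq:planGrSrincl} the only input I would use is that the homeomorphism $\pi_\Sigma^{\Upsilon_n}=\pi_{\Upsilon_n}^{-1}\circ\pi_\Sigma:\Sigma\to\Upsilon_n$ sends each $x\in\Sigma$ to a point of $\Upsilon_n$ causally related to $x$: indeed $y:=\pi_\Sigma^{\Upsilon_n}(x)$ has $\pi(y)=\pi(x)$ by \eqref{pidef}, so $x-y=(x^0-y^0,0,0,0)$ is timelike-or-zero, whence $y\in\future(x)\cup\past(x)$ and, symmetrically, $x\in\future(y)\cup\past(y)$ (in the degenerate case $x=y$ this uses that $\future$ denotes the \emph{causal} future, $x\in\future(x)$). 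For the right inclusion in \eqref{eq:planGrSrincl}, take $x\in P_\ell$; then $y=\pi_\Sigma^{\Upsilon_n}(x)\in B_{n\ell}$, so $x\in[\future(y)\cup\past(y)]\cap\Sigma\subseteq[\future(B_{n\ell})\cup\past(B_{n\ell})]\cap\Sigma=\Gr(B_{n\ell},\Sigma)=\widecheck{C}_{n\ell}$. For the left inclusion, take $x\in\Sr(B_{n\ell},\Sigma)$, i.e.\ $x\in\Sigma$ with $\Gr(\{x\},\Upsilon_n)\subseteq B_{n\ell}$; since $y\in\Upsilon_n$ and $y\in\future(x)\cup\past(x)$, we get $y\in[\future(x)\cup\past(x)]\cap\Upsilon_n=\Gr(\{x\},\Upsilon_n)\subseteq B_{n\ell}=\pi_\Sigma^{\Upsilon_n}(P_\ell)$, and since $\pi_\Sigma^{\Upsilon_n}$ is a bijection this forces $x\in P_\ell$. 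This establishes \eqref{eq:planGrSrincl}, hence \eqref{eq:setestimate}, hence the lemma.

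I do not expect a genuine obstacle here; the only thing requiring care is the bookkeeping of which of $\Sr,\Gr$ sits on which side, together with the inclusion-reversal in passing from $\exists(\cdot)$ to $\emptyset(\cdot)$ — which is precisely what makes the $L_\ell=1$ and $L_\ell=0$ cases collapse to the same inclusion \eqref{eq:planGrSrincl}. It is worth stressing that this half of the squeeze argument uses neither the position of $\Upsilon_n$ relative to $\Sigma$ (only that $\pi$ projects along a timelike direction) nor (PL) or (IL); those enter only the companion bound relating $\Prob_{\sB_n}(L)$ to $\widehat{\Prob}_n$ and $\widecheck{\Prob}_n$, where the grown and shrunk sets must be transported across the evolution $U_\Sigma^{\Upsilon_n}$.
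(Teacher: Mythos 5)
Your proposal is correct and follows essentially the same route as the paper's proof: reduce to the single geometric inclusion $\widehat{C}_{n\ell}\subseteq P_\ell\subseteq\widecheck{C}_{n\ell}$, lift to configuration space via the monotonicity of $\exists(\cdot)$ and antitonicity of $\emptyset(\cdot)$, pass through the finite intersection over $\ell$, and then sandwich with $\Psi_\Sigma$. The only difference is cosmetic: the paper declares the inclusion $\widehat{C}_{n\ell}\subseteq P_\ell\subseteq\widecheck{C}_{n\ell}$ ``obvious'' with a one-line appeal to the picture, while you spell out the underlying geometric reason, namely that $\pi_\Sigma^{\Upsilon_n}$ moves each point only along the $x^0$-axis and hence to a causally related point, which is a worthwhile clarification but not a different argument.
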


\begin{proof}
The statement is actually true for any triangular surface $\Upsilon$, regardless of whether it belongs to a sequence converging to $\Sigma$. Since 
we need it for $\Upsilon_n$, we use here the notation that refers to $\Upsilon_n$. 

The inclusion
\be
\widehat{C}_{n\ell} \subseteq P_{\ell} \subseteq \widecheck{C}_{n\ell}
\ee
is obvious, since $ \pi[\widehat{C}_{n\ell}] $ is a shrunk version of $ \pi[P_\ell]$ (i.e., smaller) and $ \pi[\widecheck{C}_{n\ell}] $ is a grown 
version of it (i.e., larger).

We ``lift'' those sets to configuration space, keeping in mind that 
\be\label{inclusionexists}
\text{if }A \subseteq B, \text{ then } \exists(A) \subseteq \exists(B) \text{ and  } \emptyset(A) \supseteq \emptyset(B).
\ee
By definition \eqref{eq:MchecksigmaP} we then have:
\begin{equation}
	\widehat{M}_{n\ell}(L_{\ell}) \subseteq M_{\ell\Sigma}(L_{\ell}) \subseteq \widecheck{M}_{n\ell}(L_{\ell}).
\label{eq:setestimatesmall}
\end{equation}
Inclusions persist under intersections, i.e., 
\be\label{inclusionintersection}
\text{if }A_\ell \subseteq B_\ell \text{ for all $\ell$, then } 
\bigcap_{\ell} A_\ell \subseteq \bigcap_{\ell} B_\ell\,.
\ee
This yields \eqref{eq:setestimate}. The transition from sets $ M $ to projections $ P(M) $ as in \eqref{eq:setestimate2} is  straightforward, and sandwiching between $\Psi_\Sigma$'s yields \eqref{eq:bounds}.
\end{proof}

\bigskip

\begin{lemma}[Squeeze-theorem bound for $\Prob_{\sB_n}$]~\label{lem:setestimate2}
Assume (PL). Then, for all $L\in\{0,1\}^r$,
\begin{align}
	P_\Sigma(\widehat{M}_{n\Sigma} (L))
	\le  U_{\Upsilon_n}^{\Sigma} P_{\Upsilon_n}(M_{\sB_n} &(L)) \,
	U_{\Sigma}^{\Upsilon_n}
	\le P_\Sigma(\widecheck{M}_{n\Sigma} (L)),\label{Pestimate}\\
	\text{hence}~~~~~~~~\widehat{\Prob}_n(L) 
	\le \Prob_{\sB_n}&(L)
	\le \widecheck{\Prob}_n(L).
	\label{eq:bounds2}
\end{align}
\end{lemma}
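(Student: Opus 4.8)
The plan is to establish the operator sandwich \eqref{Pestimate} one detector region at a time, recombine the $r$ bounds by taking products, and then read off \eqref{eq:bounds2} by taking expectations in $\Psi_\Sigma$. Throughout, write $M_{\ell\Upsilon_n}(L_\ell)$ for $\exists(B_{n\ell})$ if $L_\ell=1$ and for $\emptyset(B_{n\ell})$ if $L_\ell=0$, so that $M_{\sB_n}(L)=\bigcap_{\ell}M_{\ell\Upsilon_n}(L_\ell)$.

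The first step is the single-region version of \eqref{Pestimate}: for every $\ell\in\{1,\dots,r\}$ and $L_\ell\in\{0,1\}$,
\be
P_\Sigma\!\bigl(\widehat M_{n\ell}(L_\ell)\bigr) \;\le\; U_{\Upsilon_n}^{\Sigma}\, P_{\Upsilon_n}\!\bigl(M_{\ell\Upsilon_n}(L_\ell)\bigr)\, U_{\Sigma}^{\Upsilon_n} \;\le\; P_\Sigma\!\bigl(\widecheck M_{n\ell}(L_\ell)\bigr).
\ee
For this it is convenient to recast (PL) in an ``$\exists$/$\emptyset$ form'' using $\exists(A)^c=\emptyset(A)=\forall(A^c)$ together with the duality $\Gr(A,\Sigma')^c=\Sr(A^c,\Sigma')$ — immediate from \eqref{eq:Gr}--\eqref{eq:Sr}, since $x'\notin\Sr(A,\Sigma')$ iff $\Gr(\{x'\},\Sigma)$ meets $A^c$ iff $x'\in\Gr(A^c,\Sigma')$ — and the monotonicity of $\Gr$ and $\Sr$ in their first argument. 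Take first $L_\ell=0$, where $M_{\ell\Upsilon_n}(0)=\emptyset(B_{n\ell})=\forall(B_{n\ell}^c)$. Applying \eqref{PL} from $\Upsilon_n$ to $\Sigma$ with $A=B_{n\ell}^c$ and using $\Gr(B_{n\ell}^c,\Sigma)=\Sr(B_{n\ell},\Sigma)^c=\widehat C_{n\ell}^c$ gives the upper bound $U_{\Upsilon_n}^{\Sigma}P_{\Upsilon_n}(\emptyset(B_{n\ell}))U_{\Sigma}^{\Upsilon_n}\le P_\Sigma(\emptyset(\widehat C_{n\ell}))=P_\Sigma(\widecheck M_{n\ell}(0))$. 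Applying \eqref{PL} in the reverse direction, from $\Sigma$ to $\Upsilon_n$ with $A=\widecheck C_{n\ell}^c$, together with the inclusion $\Gr(\widecheck C_{n\ell}^c,\Upsilon_n)\subseteq B_{n\ell}^c$ (which, complemented via the duality, is just the statement that $\Gr(\{x\},\Sigma)\subseteq\Gr(B_{n\ell},\Sigma)$ for every $x\in B_{n\ell}$) and monotonicity of $P_{\Upsilon_n}(\forall\,\cdot)$, and then conjugating back by $U_{\Upsilon_n}^{\Sigma}$, gives the lower bound $P_\Sigma(\emptyset(\widecheck C_{n\ell}))=P_\Sigma(\widehat M_{n\ell}(0))\le U_{\Upsilon_n}^{\Sigma}P_{\Upsilon_n}(\emptyset(B_{n\ell}))U_{\Sigma}^{\Upsilon_n}$. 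The case $L_\ell=1$ then follows by orthogonal complementation ($R\le S\Leftrightarrow I-S\le I-R$), since $M_{\ell\Upsilon_n}(1)$, $\widehat M_{n\ell}(1)$, $\widecheck M_{n\ell}(1)$ are, respectively, the complements of $M_{\ell\Upsilon_n}(0)$, $\widecheck M_{n\ell}(0)$, $\widehat M_{n\ell}(0)$ in the relevant configuration spaces.

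Next I would recombine over $\ell$. Put $A_\ell:=P_\Sigma(\widehat M_{n\ell}(L_\ell))$, $C_\ell:=U_{\Upsilon_n}^{\Sigma}P_{\Upsilon_n}(M_{\ell\Upsilon_n}(L_\ell))U_{\Sigma}^{\Upsilon_n}$, $B_\ell:=P_\Sigma(\widecheck M_{n\ell}(L_\ell))$, so the single-region bound reads $A_\ell\le C_\ell\le B_\ell$. Within each of the three families the operators commute — $A_\ell$ and $B_\ell$ are values of the single PVM $P_\Sigma$, and the $C_\ell$ are the $U_{\Upsilon_n}^{\Sigma}(\cdot)U_{\Sigma}^{\Upsilon_n}$-conjugates of the commuting values $P_{\Upsilon_n}(M_{\ell\Upsilon_n}(L_\ell))$ of $P_{\Upsilon_n}$ — so $\prod_\ell A_\ell$, $\prod_\ell C_\ell$, $\prod_\ell B_\ell$ are orthogonal projections, and by multiplicativity of PVMs on intersections (and of conjugation by a unitary) they equal $P_\Sigma(\widehat M_{n\Sigma}(L))$, $U_{\Upsilon_n}^{\Sigma}P_{\Upsilon_n}(M_{\sB_n}(L))U_{\Sigma}^{\Upsilon_n}$, $P_\Sigma(\widecheck M_{n\Sigma}(L))$ respectively. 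Since a product of internally commuting projections is the projection onto the intersection of their ranges, and $A_\ell\le C_\ell\le B_\ell$ gives $\range A_\ell\subseteq\range C_\ell\subseteq\range B_\ell$, these intersections of ranges are nested, whence $\prod_\ell A_\ell\le\prod_\ell C_\ell\le\prod_\ell B_\ell$; this is \eqref{Pestimate}. Finally, sandwiching \eqref{Pestimate} between $\Psi_\Sigma$ preserves the inequalities; the outer terms are $\widehat\Prob_n(L)$ and $\widecheck\Prob_n(L)$ by definition, and for the middle term $\langle\Psi_\Sigma|U_{\Upsilon_n}^{\Sigma}P_{\Upsilon_n}(M_{\sB_n}(L))U_{\Sigma}^{\Upsilon_n}|\Psi_\Sigma\rangle=\|P_{\Upsilon_n}(M_{\sB_n}(L))\,U_{\Sigma}^{\Upsilon_n}\Psi_\Sigma\|^2=\Prob_{\sB_n}(L)$, using $U_{\Sigma}^{\Upsilon_n}\Psi_\Sigma=\Psi_{\Upsilon_n}$, which follows from $\Psi_\Sigma=U_{\Sigma_0}^{\Sigma}\Psi_0$, $\Psi_{\Upsilon_n}=U_{\Sigma_0}^{\Upsilon_n}\Psi_0$ and the composition law for the evolution operators. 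This yields \eqref{eq:bounds2}.

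I expect the recombination step to be the only real obstacle: the conjugated projections $C_\ell$ belong to a PVM that in general is not compatible with $P_\Sigma$, so one cannot simply multiply the termwise inequalities $A_\ell\le C_\ell\le B_\ell$ as in the commuting case. The way around this is the observation that internal commutativity within each family already lets one write each product as the projection onto an intersection of ranges, and that the assignment of a closed subspace to its orthogonal projection is monotone — no commutation \emph{between} the families is needed. A smaller point to verify carefully is the pair of geometric identities $\Gr(B_{n\ell}^c,\Sigma)=\widehat C_{n\ell}^c$ and $\Gr(\widecheck C_{n\ell}^c,\Upsilon_n)\subseteq B_{n\ell}^c$, both of which reduce to the $\Gr$--$\Sr$ duality and the monotonicity of $\Gr$.
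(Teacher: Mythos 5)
Your proof is correct and follows essentially the same route as the paper: you derive the single-region operator sandwich \eqref{Pellestimate} from (PL) via the $\Gr$--$\Sr$ duality (the paper packages the same step as $\exists$/$\emptyset$ versions of (PL); you reach them by complementing the $\forall$ form and handling $L_\ell=1$ by orthogonal complementation, which is equivalent), and you then recombine over $\ell$ by exactly the range-intersection argument the paper uses, noting that each of the three products is a product of \emph{internally} commuting projections so equals the projection onto the intersection of ranges, and that nested ranges give nested intersections. Your closing remark correctly identifies why termwise multiplication of operator inequalities across non-commuting PVMs would fail and why the range argument circumvents it.
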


\begin{proof}
Also this statement is actually true for any triangular surface $\Upsilon$, regardless of whether it belongs to a sequence converging to $\Sigma$.

By (PL) \eqref{PL},
\be
U^{\Sigma'}_\Sigma \, P_\Sigma (\forall A) \, U^\Sigma_{\Sigma'} 
\leq P_{\Sigma'}(\forall \Gr(A,\Sigma'))\,.
\ee
Since $(\exists A)^c = \emptyset A = \forall(A^c)$, we have that
\be
\begin{aligned}
U^{\Sigma'}_\Sigma \, P_\Sigma (\exists A) \, U^\Sigma_{\Sigma'} 
&=U^{\Sigma'}_\Sigma \, (I-P_\Sigma ((\exists A)^c)) \, U^\Sigma_{\Sigma'} \\
&=U^{\Sigma'}_\Sigma \, (I-P_\Sigma (\forall(A^c))) \, U^\Sigma_{\Sigma'} \\
&\geq I-P_{\Sigma'}(\forall \Gr(A^c,\Sigma'))\\
&= I-P_{\Sigma'}(\forall (\Sr(A,\Sigma')^c))\\
&= I-P_{\Sigma'}((\exists \Sr(A,\Sigma'))^c)\\
&= P_{\Sigma'}(\exists \Sr(A,\Sigma'))
\end{aligned}
\ee
and
\be
\begin{aligned}
U^{\Sigma'}_\Sigma \, P_\Sigma (\emptyset A) \, U^\Sigma_{\Sigma'} 
&=U^{\Sigma'}_\Sigma \, P_\Sigma (\forall(A^c)) \, U^\Sigma_{\Sigma'} \\
&\leq P_{\Sigma'}(\forall \Gr(A^c,\Sigma'))\\
&= P_{\Sigma'}(\forall (\Sr(A,\Sigma')^c))\\
&= P_{\Sigma'}(\emptyset \Sr(A,\Sigma'))\,.
\end{aligned}
\ee
Thus, inserting $A\to B_{n\ell}$, $\Sigma\to \Upsilon_n$, and $\Sigma'\to\Sigma$,
\be\label{ineq1}
\begin{aligned}
U^{\Sigma}_{\Upsilon_n} \, P_{\Upsilon_n} (\exists B_{n\ell}) \, U^{\Upsilon_n}_{\Sigma} 
&\geq P_{\Sigma}(\exists \widehat{C}_{n\ell})\\
U^{\Sigma}_{\Upsilon_n} \, P_{\Upsilon_n} (\emptyset B_{n\ell}) \, U^{\Upsilon_n}_{\Sigma}
&\leq P_{\Sigma}(\emptyset \widehat{C}_{n\ell})\,.
\end{aligned}
\ee
On the other hand, inserting $A\to \widecheck{C}_{n\ell}$, $\Sigma'\to\Upsilon_n$, and $\Sigma\to \Sigma$,
\be\label{ineq2}
\begin{aligned}
U_{\Sigma}^{\Upsilon_n} \, P_{\Sigma} (\exists \widecheck{C}_{n\ell}) \, U_{\Upsilon_n}^{\Sigma} 
&\geq P_{\Upsilon_n}(\exists \Sr(\widecheck{C}_{n\ell},\Upsilon_n))\\
U_{\Sigma}^{\Upsilon_n} \, P_{\Sigma} (\emptyset \widecheck{C}_{n\ell}) \, U_{\Upsilon_n}^{\Sigma}
&\leq P_{\Upsilon_n}(\emptyset \Sr(\widecheck{C}_{n\ell},\Upsilon_n))\,.
\end{aligned}
\ee
Since for $A\subseteq \Sigma$ always
\be
A\subseteq \Sr(\Gr(A,\Sigma'),\Sigma)\,,
\ee
and since $A\subseteq B$ implies $\exists(A) \subseteq \exists(B)$ and $\emptyset(A) \supseteq \emptyset(B)$, we have that
\be\label{ineq3}
\begin{aligned}
P_{\Upsilon_n}(\exists \Sr(\widecheck{C}_{n\ell},\Upsilon_n)) 
&\geq P_{\Upsilon_n}(\exists B_{n\ell})\\
P_{\Upsilon_n}(\emptyset \Sr(\widecheck{C}_{n\ell},\Upsilon_n))
&\leq P_{\Upsilon_n}(\emptyset B_{n\ell})\,.
\end{aligned}
\ee 
Putting together \eqref{ineq1}, \eqref{ineq2}, \eqref{ineq3},
\be
\begin{aligned}
P_{\Sigma} (\emptyset\widecheck{C}_{n\ell}) 
&\leq U_{\Upsilon_n}^{\Sigma} \, P_{\Upsilon_n}(\emptyset B_{n\ell}) \, U_{\Sigma}^{\Upsilon_n}\leq P_{\Sigma}(\emptyset \widehat{C}_{n\ell})\\
P_{\Sigma}(\exists \widehat{C}_{n\ell})
&\leq U^{\Sigma}_{\Upsilon_n} \, P_{\Upsilon_n} (\exists B_{n\ell}) \, U^{\Upsilon_n}_{\Sigma} 
\leq P_{\Sigma} (\exists\widecheck{C}_{n\ell}),
\end{aligned}
\ee 
that is, in another notation,
\be\label{Pellestimate}
P_\Sigma(\widehat{M}_{n\ell}(L_{\ell}) )
\leq U_{\Upsilon_n}^{\Sigma} \, P_{\Upsilon_n}(M_{\ell\Upsilon_n}(L_{\ell})) \, U_{\Sigma}^{\Upsilon_n}
\leq P_\Sigma(\widecheck{M}_{n\ell}(L_{\ell}))\,.
\ee

Now we want to conclude an analogous statement about $L$ instead of $L_\ell$. Note that $U_{\Upsilon_n}^{\Sigma} \, P_{\Upsilon_n}(\cdot) \, U_{\Sigma}^{\Upsilon_n}$ and $P_\Sigma(\cdot)$ are two different PVMs that will in general not even commute with each other. The argument that we need has the following general form:
For two different PVMs $ P_1, P_2 $, the ranges satisfy the relations
\begin{equation}
\begin{aligned}
	&P_1(A_1) \le P_2(A_2) \; \land \; P_1(B_1) \le P_2(B_2)\\
	\Leftrightarrow \quad &\range(P_1(A_1)) \subseteq \range(P_2(A_2)) \; \land \; \range(P_1(B_1)) \subseteq \range(P_2(B_2))\\
	\Rightarrow \quad &\underbrace{\range(P_1(A_1)) \cap \range(P_1(B_1))}_{= \range(P_1(A_1) P_1(B_1))} \subseteq \underbrace{\range(P_2(A_2)) \cap \range(P_2(B_2))}_{= \range(P_2(A_2) P_2(B_2))}\\
	\Leftrightarrow \quad &P_1(A_1) P_1(B_1) \le P_2(A_2) P_2(B_2)\\
	\Leftrightarrow \quad &P_1(A_1 \cap B_1) \le P_2(A_2 \cap B_2).
\end{aligned}
\end{equation}
Applying this argument to \eqref{Pellestimate} and the finite intersection $ \bigcap_{\ell} $ yields \eqref{Pestimate}.
\end{proof}

\bigskip

\begin{lemma}\label{lem:closure}
Fix $\ell\in\{1,\ldots,r\}$; $\widecheck{C}_{n\ell}$ is a decreasing sequence of sets, $\widecheck{C}_{n\ell} \supseteq \widecheck{C}_{n+1,\ell}$, 
with
\be\label{Pclosure}
\bigcap_{n \in \NNN} \widecheck{C}_{n\ell} \subseteq \overline{P_\ell}.
\ee
$\widehat{C}_{n\ell}$ is an increasing sequence of sets, $\widehat{C}_{n\ell} \subseteq \widehat{C}_{n+1,\ell}$, with 
\be\label{Pinterior}
\bigcup_{n \in \NNN} \widehat{C}_{n\ell} \supseteq \mathrm{interior}_\Sigma(P_\ell).
\ee
In particular, 
\be\label{Pboundary}
\bigcap_{n \in \NNN} \widecheck{C}_{n\ell}\setminus \widehat{C}_{n\ell} \subseteq \partial P_\ell \,.
\ee
Moreover, equality holds in \eqref{Pclosure}, \eqref{Pinterior}, and \eqref{Pboundary} whenever $\Upsilon_n\cap \Sigma=\emptyset$.
\end{lemma}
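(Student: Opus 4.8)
The plan is to push everything through the homeomorphism $\pi_\Sigma\colon\Sigma\to\RRR^3$ and work with graph functions in the chosen Lorentz frame. Write $\Sigma$ and each $\Upsilon_n$ as graphs $\bx\mapsto\Sigma(\bx)$ and $\bx\mapsto\Upsilon_n(\bx)$ over $\RRR^3$; both are $1$-Lipschitz, and ``increasing'' together with uniform convergence forces $\Upsilon_n(\bx)\le\Upsilon_{n+1}(\bx)\le\Sigma(\bx)$ for all $\bx$ and $\delta_n:=\sup_\bx\bigl(\Sigma(\bx)-\Upsilon_n(\bx)\bigr)\to0$. Put $\cP_\ell:=\pi(P_\ell)=\pi(B_{n\ell})$, and for $z\in\Sigma$ introduce the \emph{shadow}
\be
\cQ_n(z):=\pi\bigl([\future(\{z\})\cup\past(\{z\})]\cap\Upsilon_n\bigr)=\bigl\{\bx\in\RRR^3:\Upsilon_n(\bx)+|\bx-\pi(z)|\le\Sigma(\pi(z))\bigr\}\,,
\ee
the last equality being a one-line computation using $\Upsilon_n\le\Sigma$ and $1$-Lipschitz-ness (which makes the ``future-cone'' contribution vacuous apart from $\bx=\pi(z)$, already contained on the right). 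Since $\pi$ is injective on $\Upsilon_n$, one reads off $z\in\widecheck C_{n\ell}\Leftrightarrow\cQ_n(z)\cap\cP_\ell\ne\emptyset$ and $z\in\widehat C_{n\ell}\Leftrightarrow\cQ_n(z)\subseteq\cP_\ell$. As $\Upsilon_n(\bx)$ is nondecreasing in $n$, the sets $\cQ_n(z)$ are nonincreasing in $n$, which gives at once $\widecheck C_{n\ell}\supseteq\widecheck C_{n+1,\ell}$ and $\widehat C_{n\ell}\subseteq\widehat C_{n+1,\ell}$.

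For $\bigcap_n\widecheck C_{n\ell}\subseteq\overline{P_\ell}$, I would take $z$ in the intersection and choose $y_n\in[\future(\{z\})\cup\past(\{z\})]\cap\Upsilon_n$ with $\pi(y_n)\in\cP_\ell$. Since $\cP_\ell$ is bounded and $|\Upsilon_n(\bzero)|\le|\Sigma(\bzero)|+\delta_1$, the $y_n$ lie in a fixed bounded set, so a subsequence converges, $y_n\to y_\ast$; the set $\future(\{z\})\cup\past(\{z\})$ is closed, so $y_\ast\in\future(\{z\})\cup\past(\{z\})$, while $|y_n^0-\Sigma(\pi(y_n))|\le\delta_n\to0$ forces $y_\ast\in\Sigma$. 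Two causally related points of a Cauchy surface coincide (an inextendible causal curve through both would meet $\Sigma$ twice), so $y_\ast=z$ and hence $\pi(z)=\lim\pi(y_n)\in\overline{\cP_\ell}$, i.e.\ $z\in\overline{P_\ell}$.

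The harder inclusion $\bigcup_n\widehat C_{n\ell}\supseteq\mathrm{interior}_\Sigma(P_\ell)$ is where the geometry of lightlike tangents enters. Fix $z$ with $\pi(z)\in\mathrm{int}(\cP_\ell)$ and $\rho>0$ with $\overline{B(\pi(z),\rho)}\subseteq\cP_\ell$. The ``defect'' $d(\bx):=\Sigma(\bx)+|\bx-\pi(z)|-\Sigma(\pi(z))$ is $\ge0$ by $1$-Lipschitz-ness and is \emph{strictly} positive for $\bx\ne\pi(z)$: $d(\bx)=0$ would make $t\mapsto(\Sigma(\gamma(t)),\gamma(t))$ along the line segment $\gamma$ from $\pi(z)$ to $\bx$ an affine null curve contained in $\Sigma$, i.e.\ a lightlike line segment inside a Cauchy surface, which is forbidden. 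Hence $\eta:=\min_{|\bx-\pi(z)|=\rho}d(\bx)>0$ by compactness of the sphere. Moreover $\cQ_n(z)$ is star-shaped about $\pi(z)$, since along each ray from $\pi(z)$ the map $\bx\mapsto\Upsilon_n(\bx)+|\bx-\pi(z)|$ is nondecreasing (again by $1$-Lipschitz-ness of $\Upsilon_n$). So if $\cQ_n(z)\not\subseteq B(\pi(z),\rho)$ there is $\bx\in\cQ_n(z)$ with $|\bx-\pi(z)|=\rho$, whence $d(\bx)=\Sigma(\bx)+\rho-\Sigma(\pi(z))\le\Sigma(\bx)-\Upsilon_n(\bx)\le\delta_n$; once $\delta_n<\eta$ this is impossible, so $\cQ_n(z)\subseteq B(\pi(z),\rho)\subseteq\cP_\ell$ and $z\in\widehat C_{n\ell}$. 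Inclusion \eqref{Pboundary} then follows from \eqref{Pclosure}, \eqref{Pinterior} and the identity $\bigcap_n(\widecheck C_{n\ell}\setminus\widehat C_{n\ell})=\bigl(\bigcap_n\widecheck C_{n\ell}\bigr)\setminus\bigl(\bigcup_n\widehat C_{n\ell}\bigr)$ together with $\overline{P_\ell}\setminus\mathrm{interior}_\Sigma(P_\ell)=\partial P_\ell$.

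For the equality clause, assume $\Upsilon_n\cap\Sigma=\emptyset$, i.e.\ $\Upsilon_n(\bx)<\Sigma(\bx)$ for all $\bx$. Then for every $z\in\Sigma$ we have $\Upsilon_n(\pi(z))<\Sigma(\pi(z))$, so by continuity $\cQ_n(z)$ contains an \emph{open} neighbourhood of $\pi(z)$; hence $z\in\widehat C_{n\ell}$ forces $\pi(z)\in\mathrm{int}(\cP_\ell)$ (so $\widehat C_{n\ell}\subseteq\mathrm{interior}_\Sigma(P_\ell)$, giving equality in \eqref{Pinterior}), and $z\in\overline{P_\ell}$ gives $\pi(z)\in\overline{\cP_\ell}$, so that open neighbourhood meets $\cP_\ell$ and $z\in\widecheck C_{n\ell}$ (so $\overline{P_\ell}\subseteq\widecheck C_{n\ell}$ for every $n$, giving equality in \eqref{Pclosure}); equality in \eqref{Pboundary} then follows as before. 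The one genuine obstacle is the one flagged in the previous paragraph: where $\Sigma$ has a lightlike tangent, the crude estimate $|\bx-\pi(z)|\le\Sigma(\pi(z))-\Sigma(\bx)+\delta_n$ is vacuous, so one must combine the star-shapedness of the shadow $\cQ_n(z)$ (which reduces matters to the compact sphere $\{|\bx-\pi(z)|=\rho\}$) with the strict positivity $d>0$ off $\pi(z)$ --- which is exactly the statement that a Cauchy surface contains no lightlike line segment. Everything else is bookkeeping with $\pi_\Sigma$, monotone unions and intersections, and the closedness of causal cones.
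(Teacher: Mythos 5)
Your proof is correct and pursues the same underlying strategy as the paper: project everything to $\RRR^3$ via $\pi_\Sigma$, express membership in $\widecheck C_{n\ell}$/$\widehat C_{n\ell}$ in terms of the past-cone ``shadows'' $\cQ_n(z)$, and exploit the one-Lipschitz property of graph functions and the monotone convergence $\Upsilon_n\nearrow\Sigma$.

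Where you add genuine value is in making explicit the subtlety that the paper dispatches in one line (``Similar arguments yield \eqref{Pinterior}.''). You correctly identify that the naive Lipschitz estimate for the defect $d(\bx)=\Sigma(\bx)+\lvert\bx-\pi(z)\rvert-\Sigma(\pi(z))$ only gives $d\ge 0$, which is vacuous wherever $\Sigma$ has a lightlike tangent, and that one needs \emph{strict} positivity of $d$ away from $\pi(z)$. Your proof of that strictness — that $d(\bx)=0$ would force $\Sigma$ to contain an affine null segment, which is excluded by the paper's definition of Cauchy surface — combined with compactness of the sphere $\lvert\bx-\pi(z)\rvert=\rho$ and the star-shapedness of $\cQ_n(z)$, is exactly the content that the paper leaves implicit (there it is buried in the claim that the past light cone of $\pi_\Sigma^{-1}(\by)$, intersected with $\Sigma-(\varepsilon,0,0,0)$, stays away from $\pi(P_\ell)$ for small $\varepsilon$). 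Your treatment of \eqref{Pclosure} by extracting a convergent subsequence of $y_n$ inside a fixed compact set and invoking ``two causally related points of a Cauchy surface coincide'' is a clean reformulation of the same fact — indeed $d(\bx)>0$ for $\bx\ne\pi(z)$ \emph{is} that statement in graph coordinates — and is a nice alternative to the paper's contrapositive. The equality clause is handled the same way in both proofs. Overall: same route, but you have filled in the gap that the terse original skips.
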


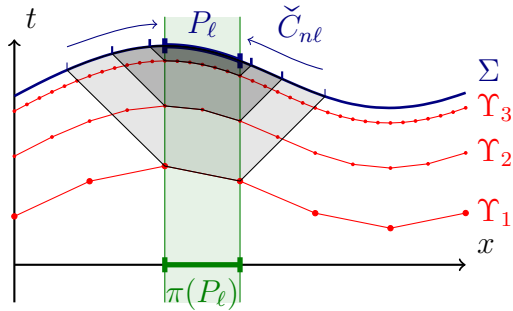
\begin{figure}[hbt]
    \centering
    \scalebox{1.0}{\usetikzlibrary{decorations.pathreplacing}
\begin{tikzpicture}[declare function = {
	f(\x) = 0.05*((\x-3.5)^3 - 9*(\x-3.5))*(1/(1+0.1*(\x-3.5)^2)) + 2.5;}]

\def\dyA{0.2};
\def\dyB{0.8};
\def\dyC{1.6};

\def\Na{30}
\def\xmax{6}
\def\step{\xmax / \Na}

\fill[opacity = 0.1, green!50!black] (2,-0.5) rectangle ++(1,3.8);
\draw[green!50!black] (2,-0.5) -- ++(0,3.8);
\draw[green!50!black] (3,-0.5) -- ++(0,3.8);

\draw[thick,->] (0,-0.5) -- (0,3) node[anchor = south west] {$ t $};
\draw[thick,->] (0,0) -- (6,0) node[anchor = south west] {$ x $};

\draw[ line width = 2, green!50!black] (2,0) -- node[anchor = north] {$ \pi(P_\ell) $} (3,0);
\draw[ line width = 2, green!50!black] (2,-0.1) -- ++(0,0.2);
\draw[ line width = 2, green!50!black] (3,-0.1) -- ++(0,0.2);

\draw[domain=0:6, samples = 20, smooth, line width = 1, blue!50!black] plot({\x}, {f(\x)}) node[anchor = south west] {$ \Sigma $};
\draw[domain=2:3, samples = 20, smooth, line width = 2, blue!50!black] plot({\x}, {f(\x)}) ;
\draw[ line width = 2, blue!50!black] (2,{f(2)-0.1}) -- ++(0,0.2);
\draw[ line width = 2, blue!50!black] (3,{f(3)-0.1}) -- ++(0,0.2);
\node[blue!50!black!] at (2.5,3.2) {$ P_\ell $};

\draw[domain = 0:6, samples = 7, red, mark = *, mark size = 1pt] plot({\x}, {f(\x)-\dyC})node[anchor = west] {$ \Upsilon_1 $} ;
\draw[domain = 0:6, samples = 13, red, mark = *, mark size = 0.5pt] plot({\x}, {f(\x)-\dyB})node[anchor = west] {$ \Upsilon_2 $} ;
\draw[domain = 0:6, samples = 37, red, mark = *, mark size = 0.5pt] plot({\x}, {f(\x)-\dyA})node[anchor = west] {$ \Upsilon_3 $} ;

\filldraw[domain = 0.7:4.13, fill opacity = 0.1] plot[domain = 0.7:4.13]({\x}, {f(\x)}) -- plot[domain = 3:2, samples = 2]({\x}, {f(\x)-\dyC})  -- cycle;
\filldraw[fill opacity = 0.2] plot[domain = 1.3:3.56]({\x}, {f(\x)}) -- plot[domain = 3:2, samples = 3]({\x}, {f(\x)-\dyB}) -- cycle;
\filldraw[fill opacity = 0.3] plot[domain = 1.8:3.15]({\x}, {f(\x)}) -- plot[domain = 3:2, samples = 7]({\x}, {f(\x)-\dyA}) -- cycle;

\draw[blue!50!black] (0.7,{f(0.7)}) -- ++(0,0.1);
\draw[blue!50!black] (4.13,{f(4.13)}) -- ++(0,0.1);
\draw[thick, blue!50!black] (1.3,{f(1.3)}) -- ++(0,0.1);
\draw[thick, blue!50!black] (3.56,{f(3.56)}) -- ++(0,0.1);
\draw[line width = 1, blue!50!black] (1.8,{f(1.8)}) -- ++(0,0.1);
\draw[line width = 1, blue!50!black] (3.15,{f(3.15)}) -- ++(0,0.1);

\draw[->, blue!50!black] (0.7,2.9) .. controls (1,3) and (1.5,3.2) .. (1.9,3.2);
\draw[->, blue!50!black] (4.1,2.6) .. controls (3.7,2.7) and (3.4,2.9) .. (3.1,3);
\node[blue!50!black!] at (3.8,3.2) {$ \widecheck{C}_{n\ell} $};

\end{tikzpicture}}
    \caption{Convergence of the sets $\widecheck{C}_{n\ell}$ as $n\to\infty$ for fixed $\ell$ as in Lemma~\ref{lem:closure}. Color online.}
    \label{fig:Anset}
\end{figure}

\begin{proof}
The decreasing/increasing behavior of the sequence is a direct consequence of $\Upsilon_{n+1}\subseteq \future(\Upsilon_n)$ and the definition of grown and shrunk set. For demonstrating \eqref{Pclosure}, since $\pi_\Sigma$ is a homeomorphism $\Sigma\to\RRR^3$, it suffices to show that $\bigcap_n \pi(\widecheck{C}_{n\ell}) \subseteq \overline{\pi(P_\ell)}$ in $\RRR^3$. If $\by\notin \overline{\pi(P_\ell)}$, then it has positive distance to $\pi(P_\ell)$ and $\pi[[\Sigma-(\varepsilon,0,0,0)] \cap \past(\pi_\Sigma^{-1}(\by))]$ is disjoint from $\pi(P_\ell)$ for sufficiently small $\varepsilon>0$, so $\by\notin \pi(\widecheck{C}_{n\ell})$ for sufficiently large $n$. Similar arguments yield \eqref{Pinterior}.
Concerning the statement about equality, in that case for every $x\in B_{n\ell}$, $\future(x) \cap \past(\Sigma)$ has nonempty interior in $\MMM$, 
so $\pi(\widecheck{C}_{n\ell})$ contains an open neighborhood of $\pi(P_\ell)$ and thus $\overline{\pi(P_\ell)}$. Similarly for the interior.
\end{proof}

\bigskip

\begin{lemma}\label{lem:nullset}
For every $L\in\{0,1\}^r$, ~ $\displaystyle\bigcap_{n \in \NNN} {\widecheck{M}_{n\Sigma}(L)} \setminus \widehat{M}_{n\Sigma}(L) $ ~ is a null set w.r.t.\ $\mu_{\Gamma(\Sigma)}$.
\end{lemma}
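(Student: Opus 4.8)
The plan is to reduce the statement, through a chain of elementary set-theoretic inclusions, to the already-established facts that $\partial P_\ell$ is $\mu_\Sigma$-null and that $\mu_\Sigma$-null subsets of $\Sigma$ give rise to $\mu_{\Gamma(\Sigma)}$-null $\exists$-sets. First I would record the monotonicity from Lemma~\ref{lem:closure}: $\widehat{C}_{n\ell}$ increases in $n$ and $\widecheck{C}_{n\ell}$ decreases in $n$; inspecting \eqref{eq:MchecksigmaP} in the two cases $L_\ell=1$ and $L_\ell=0$ then shows that $\widehat{M}_{n\ell}(L_\ell)$ increases and $\widecheck{M}_{n\ell}(L_\ell)$ decreases, so the differences $D_{n\ell}:=\widecheck{M}_{n\ell}(L_\ell)\setminus\widehat{M}_{n\ell}(L_\ell)$ form, for each fixed $\ell$, a decreasing sequence of sets in $\Gamma(\Sigma)$.

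Next I would peel off the $\ell$-intersection. Since $\widecheck{M}_{n\Sigma}(L)=\bigcap_{\ell}\widecheck{M}_{n\ell}(L_\ell)$ and $\widehat{M}_{n\Sigma}(L)=\bigcap_{\ell}\widehat{M}_{n\ell}(L_\ell)$, the elementary inclusion $\bigcap_\ell A_\ell\setminus\bigcap_\ell B_\ell\subseteq\bigcup_\ell(A_\ell\setminus B_\ell)$ gives $\widecheck{M}_{n\Sigma}(L)\setminus\widehat{M}_{n\Sigma}(L)\subseteq\bigcup_{\ell=1}^r D_{n\ell}$. Intersecting over $n$, and using that each $D_{n\ell}$ decreases in $n$ while $\ell$ ranges over the \emph{finite} index set $\{1,\dots,r\}$, a pigeonhole argument yields $\bigcap_{n}\bigcup_\ell D_{n\ell}\subseteq\bigcup_\ell\bigcap_{n} D_{n\ell}$. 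A finite union of null sets being null, it then suffices to show that $\bigcap_{n} D_{n\ell}$ is $\mu_{\Gamma(\Sigma)}$-null for each fixed $\ell$.

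For that single-$\ell$ estimate I would note that in both cases $L_\ell=1$ (where $D_{n\ell}=\exists(\widecheck{C}_{n\ell})\cap\emptyset(\widehat{C}_{n\ell})$) and $L_\ell=0$ (where $D_{n\ell}=\emptyset(\widehat{C}_{n\ell})\cap\exists(\widecheck{C}_{n\ell})$), a configuration having a point in $\widecheck{C}_{n\ell}$ but none in $\widehat{C}_{n\ell}$ must have a point in $\widecheck{C}_{n\ell}\setminus\widehat{C}_{n\ell}$; hence $D_{n\ell}\subseteq\exists(\widecheck{C}_{n\ell}\setminus\widehat{C}_{n\ell})$. Because each $q\in\Gamma(\Sigma)$ is a finite set, the decreasing chain of nonempty finite sets $q\cap(\widecheck{C}_{n\ell}\setminus\widehat{C}_{n\ell})$ stabilizes at a nonempty set whenever $q\in\bigcap_n\exists(\widecheck{C}_{n\ell}\setminus\widehat{C}_{n\ell})$, which gives $\bigcap_{n}\exists(\widecheck{C}_{n\ell}\setminus\widehat{C}_{n\ell})=\exists\bigl(\bigcap_{n}(\widecheck{C}_{n\ell}\setminus\widehat{C}_{n\ell})\bigr)$. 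By \eqref{Pboundary} of Lemma~\ref{lem:closure}, $\bigcap_{n}(\widecheck{C}_{n\ell}\setminus\widehat{C}_{n\ell})\subseteq\partial P_\ell$, and by Definition~\ref{def:admissible} together with the fact recalled in Footnote~\ref{fn:Deltaboundary} (a $\mu_\Sigma$-null $A\subseteq\Sigma$ makes $\exists(A)$ a $\mu_{\Gamma(\Sigma)}$-null set), $\exists(\partial P_\ell)$ is $\mu_{\Gamma(\Sigma)}$-null. Chaining the inclusions proves the lemma.

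I expect the only delicate point to be bookkeeping: keeping the directions of the inclusions straight when passing between the shrunk/grown sets $\widehat{C}_{n\ell},\widecheck{C}_{n\ell}$ and the sectors $\exists(\cdot),\emptyset(\cdot)$, and noting that the two interchanges of an infinite intersection with a finite operation — once with the union over $\ell$, once with $\exists(\cdot)$ via finiteness of $q$ — genuinely require monotonicity and finiteness and do not hold in general. Beyond that there is no analytic content: everything rests on the measure-zero facts already available.
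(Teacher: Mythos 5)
Your proof is correct and takes essentially the same route as the paper's: both reduce to the single-$\ell$ containment $D_{n\ell}\subseteq\exists(\widecheck{C}_{n\ell}\setminus\widehat{C}_{n\ell})$, use monotonicity to commute $\bigcap_n$ past both the finite $\bigcup_\ell$ and $\exists(\cdot)$, and then invoke Lemma~\ref{lem:closure} plus $\mu_\Sigma(\partial P_\ell)=0$. The only difference is cosmetic — you perform the $\bigcap_n$-with-$\bigcup_\ell$ interchange before passing to $\exists(\cdot)$, whereas the paper pushes everything inside $\exists(\cdot)$ first; the ingredients and logical content are identical.
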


\begin{proof}
We make use here of the requirement $\mu_\Sigma(\partial P_\ell)=0$ in Definition~\ref{def:admissible}.
Consider first $ {\widecheck{M}_{n\ell}(L_\ell)} $ and $ {\widehat{M}_{n\ell}(L_\ell)} $. In case $ L_\ell = 1 $, we have that
\begin{equation}\label{Mnl1}
\begin{aligned}
	&{\widecheck{M}_{n\ell}(1)} = \exists \widecheck{C}_{n\ell}, \qquad
	\widehat{M}_{n\ell}(1) = \exists \widehat{C}_{n\ell} \\
	\Rightarrow \quad &{\widecheck{M}_{n\ell}(1)} \setminus \widehat{M}_{n\ell}(1) =(\exists \widecheck{C}_{n\ell}) \cap (\emptyset \widehat{C}_{n\ell}).
\end{aligned}
\end{equation}

In case $ L_\ell = 0 $, we have that
\begin{equation}
	{\widecheck{M}_{n\ell}(0)} \setminus \widehat{M}_{n\ell}(0) = (\emptyset\widehat{C}_{n\ell}) \cap (\exists\widecheck{C}_{n\ell}).
\end{equation}
So either way,
\be
	{\widecheck{M}_{n\ell}(L_\ell)} \setminus \widehat{M}_{n\ell}(L_\ell) = 
(\emptyset\widehat{C}_{n\ell}) \cap (\exists\widecheck{C}_{n\ell}) \subseteq  \exists(\widecheck{C}_{n\ell} \setminus {\widehat{C}_{n\ell}}).
\ee
Now we want to consider $L$ instead of $L_\ell$. It is a general fact about sets that if $A_\ell \subseteq B_\ell$ for all $\ell$, then
\be
\Bigl(\bigcap_\ell B_\ell\Bigr)\setminus \Bigl(\bigcap_\ell A_\ell\Bigr) \subseteq \bigcup_\ell (B_\ell \setminus A_\ell). 
\ee
Thus, for $A_\ell = \widehat{M}_{n\ell}(L_\ell)$ and $B_\ell =\widecheck{M}_{n\ell}(L_\ell)$,
\be
{\widecheck{M}_{n\Sigma}(L)} \setminus \widehat{M}_{n\Sigma}(L)
\subseteq \bigcup_{\ell=1}^r {\widecheck{M}_{n\ell}(L_\ell)} \setminus \widehat{M}_{n\ell}(L_\ell)
\subseteq  \bigcup_{\ell=1}^r\exists(\widecheck{C}_{n\ell} \setminus {\widehat{C}_{n\ell}})
=\exists\Bigl(\bigcup_{\ell=1}^r (\widecheck{C}_{n\ell} \setminus \widehat{C}_{n\ell})\Bigr).
\ee
Now we want to take the intersection over all $n\in\NNN$. In this regard, 
we first note the following extension of \eqref{inclusionexists}: {\it if 
$(A_n)_{n\in\NNN}$ is a decreasing sequence of sets, then} 
\be\label{intersectionexists}
\bigcap_n \exists A_n = \exists\Bigl(\bigcap_n A_n \Bigr).
\ee
After all, if $q$ is a finite set that intersects every $A_n$, then it must contain a point from $\bigcap_n A_n$; conversely, a finite set $q$ intersecting $\bigcap_n A_n$ trivially intersects every $A_n$.

Applying this to $A_n=\bigcup_{\ell} (\widecheck{C}_{n\ell} \setminus \widehat{C}_{n\ell})$, which is decreasing because $\widecheck{C}_{n\ell} \setminus \widehat{C}_{n\ell}$ is, we obtain that
\begin{equation}
	\bigcap_{n \in \NNN} {\widecheck{M}_{n\Sigma}(L)} \setminus \widehat{M}_{n\Sigma}(L) \subseteq  
	\exists \Bigl(\bigcap_{n \in \NNN} \bigcup_{\ell=1}^r \widecheck{C}_{n\ell} \setminus \widehat{C}_{n\ell}\Bigr).
\end{equation}
It is another general fact about sets (not unrelated to \eqref{intersectionexists}) that {\it if for every $\ell\in\{1,\ldots,r\}$, $(A_{n\ell})_{n\in\NNN}$ is a decreasing sequence of sets, then}
\be
\bigcap_{n\in\NNN} \bigcup_{\ell=1}^r A_{n\ell} = \bigcup_{\ell=1}^r \bigcap_{n\in\NNN} A_{n\ell}\,.
\ee
Thus, for $A_{n\ell}= \widecheck{C}_{n\ell} \setminus \widehat{C}_{n\ell}$,
\begin{equation}
	\bigcap_{n \in \NNN} {\widecheck{M}_{n\Sigma}(L)} \setminus \widehat{M}_{n\Sigma}(L) \subseteq  
	\exists \Bigl(\bigcup_{\ell=1}^r \bigcap_{n \in \NNN}  \widecheck{C}_{n\ell} \setminus \widehat{C}_{n\ell}\Bigr)
	\subseteq \exists \Bigl(\bigcup_{\ell=1}^r \partial P_\ell\Bigr)
\end{equation}
by Lemma~\ref{lem:closure} and \eqref{inclusionexists}. For any set $A$ with $\mu_\Sigma(A)=0$ it follows that $\exists A$ is, in every sector of configuration space $\Gamma(\Sigma)$, a finite union of null sets, so $\mu_{\Gamma(\Sigma)}(\exists A)=0$. For $A=\bigcup_{\ell} \partial P_\ell$ we obtain the statement of Lemma~\ref{lem:nullset}.
\end{proof}

\bigskip\bigskip

\begin{proof}[Proof of Theorem \ref{thm:1}] ~
By Lemma~\ref{lem:setestimate} and Lemma~\ref{lem:setestimate2}, it suffices to show that for every $L\in\{0,1\}^r$,
\be
\widecheck{\Prob}_n(L) - \widehat{\Prob}_n(L) \to 0 ~~~~ \text{as }n\to\infty. 
\ee

From Lemma~\ref{lem:nullset} and the requirement (i) of Definition~\ref{def:hypersurfaceevolution}, according to which $P_\Sigma$ must be absolutely continuous with respect to $\mu_{\Gamma(\Sigma)}$, we have that
\be
P_\Sigma \left(\bigcap_{n \in \NNN} {\widecheck{M}_{n\Sigma}(L)} \setminus \widehat{M}_{n\Sigma}(L) \right) =0\,.
\ee
The continuity property of measures $\mu$ says that, for every decreasing 
sequence $A_n$ of sets with $\bigcap_n A_n =: A_\infty$, $\mu(A_n) \to \mu(A_\infty)$ as $n\to\infty$. For every $\Psi_\Sigma \in \Hilbert_\Sigma$, $\mu(\cdot) := \langle \Psi_\Sigma| P_\Sigma(\cdot) |\Psi_\Sigma \rangle$ is a measure. We know  from Lemma~\ref{lem:setestimate} that $\widehat{M}_{n\Sigma}(L) \subseteq \widecheck{M}_{n\Sigma}(L)$.

We show that for every $L\in\{0,1\}^r$, the sequence $A_n := \widecheck{M}_{n\Sigma}(L) \setminus \widehat{M}_{n\Sigma}(L)$ is decreasing: It suffices to show that $\widecheck{M}_{n\Sigma}(L)$ is decreasing and $ \widehat{M}_{n\Sigma}(L)$ is increasing. We know from Lemma~\ref{lem:closure} 
that $\widecheck{C}_{n\ell}$ is decreasing and $\widehat{C}_{n\ell}$ is increasing, so by \eqref{inclusionexists}, both $\exists\widecheck{C}_{n\ell}$ and $\emptyset\widehat{C}_{n\ell}$ are decreasing, so $\widecheck{M}_{n\ell}(L_\ell)$ (which is either $\exists\widecheck{C}_{n\ell}$ or $\emptyset\widehat{C}_{n\ell}$, depending on $L_\ell$) is decreasing, and so is
\be
\widecheck{M}_{n\Sigma}(L) = \bigcap_{\ell=1}^r \widecheck{M}_{n\ell}(L_\ell)\,.
\ee
Likewise, $\widehat{M}_{n\ell}(L_\ell)$ (which is either $\exists\widehat{C}_{n\ell}$ or $\emptyset\widecheck{C}_{n\ell}$, depending on $L_\ell$) is increasing, and so is $\widehat{M}_{n\Sigma}(L)$. Therefore, $A_n$ is decreasing, as claimed.

We can conclude that
\be
\widecheck{\Prob}_n(L) - \widehat{\Prob}_n(L) =
\langle \Psi_\Sigma |P_\Sigma \bigl(\widecheck{M}_{n\Sigma}(L) \setminus \widehat{M}_{n\Sigma}(L) \bigr)|\Psi_\Sigma \rangle \to 0 \text{ as }n\to\infty.
\ee
This establishes the desired squeeze theorem argument and finishes the proof of Theorem~\ref{thm:1}.
\end{proof}

\begin{proof}[Proof of Corollary~\ref{cor:strong}.]
It is well known that for a sequence $P_n$ of projections, weak convergence to the projection $P$ (i.e., $\langle \Psi|P_n|\Psi\rangle \to\langle \Psi|P|\Psi\rangle$ for every $\Psi$) implies strong convergence (i.e., $P_n\Psi\to P\Psi$ for every $\Psi$).\footnote{For the sake of completeness, here is a proof: First, $P_n^2=P_n$ and $P^2=P$ imply that $\|P_n\Psi\|^2=\langle \Psi|P_n^2|\Psi\rangle = \langle \Psi|P_n|\Psi\rangle 
\to \langle\Psi|P|\Psi\rangle= \|P\Psi\|^2$. Second, since $\langle \Psi |S|\Phi\rangle$ can be expressed through $\langle \Psi\pm \Phi|S|\Psi\pm \Phi\rangle$ and $\langle \Psi \pm i\Phi|S|\Psi \pm i\Phi\rangle$ (polarization identity \cite[p.~63]{RS1}), weak convergence implies $\langle \Psi|P_n|\Phi\rangle \to \langle \Psi|P|\Phi\rangle$ for every $\Psi$ and $\Phi$. Now $\|P_n\Psi - P\Psi\|^2=\langle\Psi|(P_n-P)^2| \Psi\rangle=\langle\Psi|P_n^2-P_nP-PP_n+P^2 |\Psi\rangle=\|P_n\Psi\|^2-\langle \Psi|P_n| P\Psi\rangle - \langle P\Psi|P_n|\Psi\rangle +\|P\Psi\|^2 \to \|P\Psi\|^2-\langle \Psi| P|P\Psi\rangle - \langle P\Psi|P|\Psi\rangle +\|P\Psi\|^2=0$.
} Set $P_n=U_{\Upsilon_n}^\Sigma P_{\Upsilon_n}(M_{\sB_n}(L)) U_\Sigma^{\Upsilon_n}$ and $P=P_{\Sigma}(M_{\sP}(L))$. Then Theorem~\ref{thm:1} provides the weak convergence, and the strong convergence was what we claimed.
\end{proof}

\paragraph{Remarks.}

\begin{enumerate}
\setcounter{enumi}{\theremarks}
\item {\it Type of convergence of $(\Upsilon_n)_{n\in\NNN}$.} The proof of Theorem~\ref{thm:1} still goes through unchanged if the convergence of the sequence $(\Upsilon_n)_{n\in\NNN}$ is not uniform but uniform on every bounded set.
\item {\it Alternative definition of $B_{n\ell}$.}\label{rem:Bnell} In order to avoid the choice of a particular Lorentz frame in the definition of $B_{n\ell}$ and thus of the detection probabilites, we could replace $B_{n\ell}$ by
\be
\widecheck{B}_{n\ell} := \Sr(P_\ell,\Upsilon_n) \,.
\ee
(The use of $\Gr$ instead of $\Sr$ would lead to overlap among the $B_{n\ell}$, so they would no longer form a partition.) With this change, Theorem~\ref{thm:1} remains valid. In the proof, we then need to modify the definition of $\widehat{C}_{n\ell}$ to
\be
\widehat{C}_{n\ell} := \Sr(\widecheck{B}_{n\ell},\Sigma)\, ,
\ee
while the definition of $\widecheck{C}_{n\ell}$ is kept as it is. We would still use a preferred Lorentz frame for the definition of $\widecheck{C}_{n\ell}$, but that is a matter of the method of proof, not of the statement of the theorem. The proof goes through as before, except that 
\eqref{Pinterior} needs to be checked anew: it is still true because for every $x$ in the 3-interior of $P_\ell$, $\Gr(\Gr(x,\Upsilon_n),\Sigma)\subset P_\ell$ for sufficiently large $n$.
\end{enumerate}
\setcounter{remarks}{\theenumi}

\bigskip

\noindent\textit{Acknowledgments.}
We thank Matthias Lienert and Stefan Teufel for helpful discussions.
This work was financially supported by the Wilhelm Schuler-Stiftung T\"ubingen, by DAAD (Deutscher Akademischer Austauschdienst) and also by the Basque Government through the BERC 2018-2021 program and by the Ministry of Science, Innovation and Universities: BCAM Severo Ochoa accreditation SEV-2017-0718.

\end{document}